\documentclass[english]{article}

\usepackage[T1]{fontenc}
\usepackage{geometry}
\geometry{verbose,tmargin=2cm,bmargin=2cm,lmargin=2cm,rmargin=2cm}
\usepackage{setspace}
\setstretch{1.2}
\usepackage[latin9]{inputenc}
\usepackage{color}
\usepackage{babel}
\usepackage{verbatim}
\usepackage{float}
\usepackage{amsmath}
\usepackage{amsthm}
\usepackage{amssymb}
\usepackage{graphicx}
\usepackage{esint}
\usepackage[authoryear]{natbib}
\usepackage{bm}
\usepackage[colorlinks=true,pdftitle={Unbiased intractable MCMC}]{hyperref}


\newcommand{\lyxdot}{.}

\floatstyle{ruled}
\newfloat{algorithm}{tbp}{loa}
\providecommand{\algorithmname}{Algorithm}
\floatname{algorithm}{\protect\algorithmname}

\theoremstyle{plain}
\newtheorem{assumption}{\protect\assumptionname}
\theoremstyle{plain}
\newtheorem{thm}{\protect\theoremname}
\theoremstyle{plain}
\newtheorem{prop}{\protect\propositionname}
\theoremstyle{plain}
\newtheorem{lem}{\protect\lemmaname}
\newtheorem{remark}{Remark}

\usepackage{dsfont}

\usepackage{subfig}
\makeatother

\providecommand{\assumptionname}{Assumption}
\providecommand{\lemmaname}{Lemma}
\providecommand{\propositionname}{Proposition}
\providecommand{\theoremname}{Theorem}

\begin{document}

\title{Unbiased Markov chain Monte Carlo for intractable target distributions}
\author{Lawrence Middleton\thanks{Department of Statistics, University of Oxford, UK.}
, George Deligiannidis$^{*}$, Arnaud Doucet$^{*}$, Pierre E. Jacob\thanks{Department of Statistics, Harvard University, USA.}}
\maketitle

\begin{abstract}
Performing numerical integration when the integrand itself cannot
be evaluated point-wise is a challenging task that arises in statistical
analysis, notably in Bayesian inference for models with intractable
likelihood functions. Markov chain Monte Carlo (MCMC) algorithms have
been proposed for this setting, such as the pseudo-marginal method
for latent variable models and the exchange algorithm for a class
of undirected graphical models. As with any MCMC algorithm, the resulting
estimators are justified asymptotically in the limit of the number
of iterations, but exhibit a bias for any fixed number of iterations
due to the Markov chains starting outside of stationarity. This ``burn-in''
bias is known to complicate the use of parallel processors for MCMC
computations. We show how to use coupling techniques to generate unbiased
estimators in finite time, building on recent advances for generic
MCMC algorithms. We establish the theoretical validity of some of
these procedures, by extending existing results to cover the case
of polynomially ergodic Markov chains. The efficiency of the proposed
estimators is compared with that of standard MCMC estimators, with
theoretical arguments and numerical experiments including state space
models and Ising models.
\end{abstract}

\section{Introduction\label{sec:Introduction}}

\subsection{Context \label{subsec:Context}}

For various statistical models the likelihood function cannot be computed
point-wise, which prevents the use of standard Markov chain Monte
Carlo (MCMC) algorithms such as Metropolis\textendash Hastings (MH)
for Bayesian inference. For example, the likelihood of latent variable
models typically involves an intractable integral over the latent
space. Classically, one can address this problem by designing MCMC
algorithms on the joint space of parameters and latent variables.
However, these samplers can mix poorly when latent variables and parameters
are strongly correlated under the joint posterior distribution. Furthermore
these schemes cannot be implemented if we can only simulate the latent
variables and not evaluate their probability density function \citep[Section 2.3]{andrieu2010particle}.
Similarly, in the context of undirected graphical models, the likelihood
function might involve an intractable integral over the observation
space; see \citet{moller2006efficient} with examples from spatial
statistics.

Pseudo-marginal methods have been proposed for these situations \citep{Lin_2000,Beaumont_2003,andrieu2009pseudo},
whereby unbiased Monte Carlo estimators of the likelihood are used
within an MH acceptance mechanism while still producing chains that
are ergodic with respect to the exact posterior distribution of interest,
denoted by $\pi$. Pseudo-marginal algorithms and their extensions
\citep{deligiannidis2015correlated,tran2016block} are particularly
adapted to latent variable models, such as random effects models and
state space models, where the likelihood can be estimated without bias
using importance sampling or particle filters \citep{Beaumont_2003,andrieu2009pseudo,andrieu2010particle}. Related schemes
include the exchange algorithm \citep{murray2006mcmc,andrieu2018utility},
which applies to scenarios where the likelihood involves an intractable,
parameter-dependent normalizing constant. Exchange algorithms rely
on simulation of synthetic observations to cancel out intractable
terms in the MH acceptance ratio. As with any MCMC algorithm, the
computation of each iteration requires the completion of the previous
ones, which hinders the potential for parallel computation. Running
independent chains in parallel is always possible, and averaging over
independent chains leads to a linear decrease of the resulting variance.
However, the inherent bias that comes from starting the chains outside
of stationarity, also called the ``burn-in bias'', remains \citep{rosenthal2000parallel}.

This burn-in bias has motivated various methodological developments
in the MCMC literature; among these, some rely on coupling techniques,
such as the circularly-coupled Markov chains of \citet{neal2017circularly},
regeneration techniques described in \citet{mykland1995regeneration,brockwell2005identification},
and ``coupling from the past'' as proposed in \citet{propp1996exact}.
Coupling methods have also been proposed for diagnosing convergence
in \citet{johnson1996studying,johnson1998coupling} and as a means to assess the approximation error for approximate MCMC kernels in  \citet{nicholls2012coupled}. Recently, a method
has been proposed to completely remove the bias of Markov chain ergodic
averages \citep{glynn2014exact}. An extension of this approach using coupling ideas
was proposed by \citet{jacob2017unbiased} and applied to a variety of MCMC algorithms.
This methodology involves the construction of a pair of Markov chains,
which are simulated until an event occurs. At this point, a certain
function of the chains is returned, with the guarantee that its expectation
is exactly the integral of interest. The output is thus an unbiased
estimator of that integral. Averaging over i.i.d. copies of such estimators
we obtain consistent estimators in the limit of the number of copies,
which can be generated independently in parallel. Relevant limit theorems
have been established in \citet{glynn1990bias,glynn1992asymptotic},
enabling the construction of valid confidence intervals. The methodology
has already been demonstrated for various MCMC algorithms \citep{jacob2017unbiased,heng2017unbiased,jacob2017smoothing},
which were instances of geometrically ergodic Markov chain samplers
under typical conditions. However, in the case of intractable likelihoods
and pseudo-marginal samplers, in realistic situations the associated Markov chains can often be sub-geometrically ergodic, see e.g.\ \citep{andrieu2015convergence}.

We show here that unbiased estimators of $\pi(h)$, with finite variance
and finite computational cost, can also be derived from polynomially ergodic
Markov chains such as those generated by pseudo-marginal methods.
We provide results on the associated efficiency in comparison with
standard MCMC estimators.
We apply the methodology to particle MCMC algorithms
for inference in generic state space models, with an application to
a time series of neuron activation counts. We also consider a variant
of the pseudo-marginal approach known as the block pseudo-marginal approach \citep{tran2016block} as well as the exchange algorithm \citep{murray2006mcmc}.

Accompanying code used for simulations and to generate the figures are provided at \url{https://github.com/lolmid/unbiased_intractable_targets}.

\subsection{Unbiased estimators from coupled Markov chains\label{subsec:Unbiased-estimators-from}}

Let $\pi$ be a probability measure on a topological space $\mathcal{Z}$
equipped with the Borel $\sigma$-algebra $\mathcal{B}(\mathcal{Z})$.
In this section we recall how two coupled chains that are marginally
converging to $\pi$ can be used to produce unbiased estimators of
expectations $\pi(h):=\int h\left(z\right)\pi(dz)$ for any $\pi$-integrable
test function $h:\mathcal{Z}\to\mathbb{R}$. Following \citet{glynn2014exact,jacob2017unbiased},
we consider the following coupling of two Markov chains $(Z_{n})_{n\ge0}$
and $(\tilde{Z}_{n})_{n\ge0}$. First, $Z_{0},\tilde{Z}_{0}$ are
drawn independently from an initial distribution $\pi_{0}$. Then,
$Z_{1}$ is drawn from a Markov kernel $P$ given $Z_{0}$, which
is denoted $Z_{1}|Z_{0}\sim P(Z_{0},\cdot)$. Subsequently, at step
$n\geq1$, a pair $(Z_{n+1},\tilde{Z}_{n})$ is drawn from a Markov
kernel $\bar{P}$ given $(Z_{n},\tilde{Z}_{n-1})$, which is denoted
\textsf{$(Z_{n+1},\tilde{Z}_{n})|(Z_{n},\tilde{Z}_{n-1})\sim\bar{P}((Z_{n},\tilde{Z}_{n-1}),\cdot)$}.
The kernel $\bar{P}$ is such that, marginally, $Z_{n+1}|(Z_{n}, \tilde{Z}_{n-1}) \sim P(Z_{n},\cdot)$
and $\tilde{Z}_{n}|(Z_n, \tilde{Z}_{n-1})\sim P(\tilde{Z}_{n-1},\cdot)$.
This implies that, marginally for all $n\geq0$, $Z_{n}$ and $\tilde{Z}_{n}$
have the same distribution. Furthermore, the kernel $\bar{P}$ is
constructed so that there exists a random variable $\tau$ termed
the meeting time, such that for all $n\ge\tau$, $Z_{n}=\tilde{Z}_{n-1}$
almost surely (a.s.). Then, for any integer $k$, the following informal
telescoping sum argument informally suggests an unbiased estimator
of $\pi(h)$. We start from $\pi(h)=\lim_{n\to\infty}\mathds{E}[h(Z_{n})]$
and write
\begin{align*}
\pi\left(h\right) & =\mathds{E}[h(Z_{k})]+\sum_{n=k+1}^{\infty}\mathds{E}[h(Z_{n})]-\mathds{E}[h(\tilde{Z}_{n-1})] && \text{(write as telescoping sum)},\\
 & =\mathds{E}[h(Z_{k})+\sum_{n=k+1}^{\infty}h(Z_{n})-h(\tilde{Z}_{n-1})\text{]} && \text{(swap expectation \& limit)},\\
 & =\mathds{E}[h(Z_{k})+\sum_{n=k+1}^{\tau-1}h(Z_{n})-h(\tilde{Z}_{n-1})] && \text{(\ensuremath{Z_{n}=\tilde{Z}_{n-1}} for \ensuremath{n\geq\tau})}.
\end{align*}
The sum $\sum_{n=k+1}^{\tau-1}$ is treated as zero if $\tau-1<k+1$.
The suggested estimator is thus defined as
\begin{equation}
H_{k}(Z,\tilde{Z})=h(Z_{k})+\sum_{n=k+1}^{\tau-1}\{h(Z_{n})-h(\tilde{Z}_{n-1})\},\label{eq:defHk}
\end{equation}
with $Z$ and $\tilde{Z}$ denoting the chains $(Z_{n})_{n\ge0}$
and $(\tilde{Z}_{n})_{n\ge0}$ respectively. As in \citet{jacob2017unbiased},
we average $H_{l}(Z,\tilde{Z})$ over a range of values of $l$, $l\in\{k,k+1,...,m\}$
for an integer $m\geq k$, resulting in the estimator
\begin{equation}
H_{k:m}(Z,\tilde{Z})=\frac{1}{m-k+1}\sum_{l=k}^{m}h(Z_{l})+\sum_{n=k+1}^{\tau-1}\min\left(1,\frac{n-k}{m-k+1}\right)(h(Z_{n})-h(\tilde{Z}_{n-1})).\label{eq:bc}
\end{equation}
Intuitively, $H_{k:m}$ can be understood as a standard Markov chain
average after $m$ steps using a burn-in period of $k-1$ steps (which
would be in general biased for $\pi(h)$), plus a second term that
can be shown to remove the burn-in bias. That ``bias correction''
term is a weighted sum of differences of the chains between step
$k$ and the meeting time $\tau=\inf\{n\geq1:\;Z_{n}=\tilde{Z}_{n-1}\}$.
In the following, we will write $H_{k:m}:=H_{k:m}(Z,\tilde{Z})$ for
brevity. The construction of $H_{k:m}$ is summarized in Algorithm
\ref{alg:Hkm}, where the initial distribution of the chains is denoted
by $\pi_{0}$, and the Markov kernels by $P$ and $\bar{P}$ as above.
Standard MCMC estimators require the specification of $\pi_{0}$ and
$P$, while the proposed method requires the additional specification
of the coupled kernel $\bar{P}$. We will propose coupled kernels
for the setting of intractable likelihoods, and study the estimator
$H_{k:m}$ under conditions which cover pseudo-marginal methods.

\begin{algorithm}[H]
\caption{\textbf{Unbiased MCMC estimator }\textsf{$H_{k:m}$}\textbf{ for any
choice of $k$ and $m$ with $0\protect\leq k\protect\leq m$.} \label{alg:Hkm}}

\begin{enumerate}
\item \textsf{Initialization:}
\begin{enumerate}
\item \textsf{Sample }$Z_{0},\tilde{Z}_{0}\sim\pi_{0}$$\left(\cdot\right)$.
\item \textsf{Sample }$Z_{1}|\{Z_{0}=z_{0}\}\sim P(z_{0},\cdot)$.
\item \textsf{Set $n=1$ and $\tau=\infty$.}
\end{enumerate}
\item \textsf{While $n<\max(m,\tau)$}:
\begin{enumerate}
\item \textsf{Sample $(Z_{n+1},\tilde{Z}_{n})|\{Z_{n}=z_{n},\tilde{Z}_{n-1}=\tilde{z}_{n-1}\}\sim\bar{P}\left(\left(z_{n},\tilde{z}_{n-1}\right),\cdot\right)$.}
\item \textsf{If $Z_{n+1}=\tilde{Z}_{n}$ and $\tau=\infty$, set $\tau=n$.}
\item \textsf{Increment} $n$ \textsf{by} $1$.
\end{enumerate}
\item \textsf{Return $H_{k:m}$ as described in Equation (\ref{eq:bc})}.
\end{enumerate}
\end{algorithm}

To see how coupled kernels can be constructed, we first recall a
construction for simple MH kernels.
Focusing, for now, on the typical Euclidean space case $\mathcal{Z}\subseteq\mathds{R}^d$, we assume that $\pi$ admits a density, which with a slight abuse of notation we also denote with $\pi$.
Then the standard MH algorithm relies
on a proposal distribution $q(dz'|z)$, for instance chosen
as a Gaussian distribution centered at $z$. At iteration $n-1$,
a proposal $Z'\sim q(\cdot|Z_{n-1})$ is accepted as the new state
$Z_{n}$ with probability $\alpha_{\mathrm{MH}}(Z_{n-1},Z'):=\min\left(1,\pi(Z')q(Z_{n-1}|Z')/\pi(Z_{n-1})q(Z'|Z_{n-1})\right)$, known as the MH acceptance probability. If $Z'$ is rejected, then $Z_{n}$
is assigned the value of $Z_{n-1}$. This defines the kernel $P$.
To construct $\bar{P}$, following \citet{jacob2017unbiased} we can
consider a maximal coupling of the proposal distributions. This is
described in Algorithm \ref{alg:coupledMH} for completeness; see
also \citet{johnson1998coupling} and \citet{jacob2017unbiased} for a consideration of the cost of sampling from a maximal coupling. 
Here $\mathcal{U}[a,b]$ refers
to the uniform distribution on the interval $[a,b]$. The algorithm
relies on draws from a maximal coupling (or $\gamma$-coupling)
of the two proposal distributions $q\left(\cdot|Z_{n}\right)$ and
$q(\cdot|\tilde{Z}{}_{n-1})$ at step $n\geq1$. Draws $(Z',\tilde{Z}')$
from maximal couplings are such that the probability of the event
$\{Z'=\tilde{Z}'\}$ is maximal over all couplings of $Z'\sim q(\cdot|Z_{n})$
and $\tilde{Z}'\sim q(\cdot|\tilde{Z}_{n-1})$. Sampling from maximal
couplings can be done with rejection sampling techniques as described
in \citet{jacob2017unbiased}, in Section 4.5 of Chapter 1 of \citet{thorisson2000coupling}
and in \citet{johnson1998coupling}. On the event $\{Z'=\tilde{Z}'\}$,
the two chains are given identical proposals, which are then accepted
or not based on $\alpha_{\mathrm{MH}}(Z_{n},Z')$ and $\alpha_{\mathrm{MH}}(\tilde{Z}_{n-1},\tilde{Z}')$
using a common uniform random number. In the event that both proposals
are identical and accepted, then the chains meet: $Z_{n+1}=\tilde{Z}_{n}$.
One can then check that the chains remain identical from that iteration
onwards.

\begin{algorithm}[H]
\caption{\textbf{Sampling from the coupled MH kernel given $(Z_{n},\tilde{Z}_{n-1})$.}
\label{alg:coupledMH}}

\begin{enumerate}
\item \textsf{Sample $Z'$ and $\tilde{Z}'$ from a maximal coupling of
$q\left(\cdot|Z_{n}\right)$ and }$q(\cdot|\tilde{Z}{}_{n-1})$.
\item \textsf{Sample $\mathsf{\mathit{\mathfrak{u}}}\sim\mathcal{U}\left[0,1\right]$.}
\item \textsf{If $\mathit{\mathfrak{u}}<\alpha_{\mathrm{MH}}\left(Z_{n},Z'\right)$
set $Z_{n+1}=Z'$. Otherwise set $Z_{n+1}=Z_{n}$. }
\item \textsf{If $\mathit{\mathfrak{u}}<\alpha_{\mathrm{MH}}(\tilde{Z}{}_{n-1},\tilde{Z}')$
set $\tilde{Z}{}_{n}=\tilde{Z}'$. Otherwise set $\tilde{Z}{}_{n}=\tilde{Z}{}_{n-1}$. }
\item \textsf{Return $(Z_{n+1},\tilde{Z}_{n})$.}
\end{enumerate}
\end{algorithm}

The unbiased property of $H_{k:m}$ has an important consequence for
parallel computation. Consider $R$ independent copies, denoted by
$(H_{k:m}^{(r)})$ for $r=1,\ldots,R$, and the average $\bar{H}_{k:m}^{R}=R^{-1}\sum_{r=1}^{R}H_{k:m}^{(r)}$.
Then $\bar{H}_{k:m}^{R}$ is a consistent estimator of $\pi(h)$ as
$R\to\infty$, for any fixed $(k,m)$, and a central limit theorem
holds provided that $\mathds{V}[H_{k:m}]<\infty$; sufficient conditions
are given in Section \ref{sec:theory}. Since $\tau$ is a random
variable, the cost of generating $H_{k:m}$ is random. Neglecting
the cost of drawing from $\pi_{0}$, the cost amounts to that of one
draw from the kernel $P$, $\tau-1$ draws from the kernel $\bar{P}$,
and then $(m-\tau)$ draws from $P$ if $\tau<m$. Overall that leads
to a cost of $T_{m}:=2(\tau-1)+\max(1,m-\tau+1)$ units, where each unit
is the cost of drawing from $P$, and assuming that one sample from
$\bar{P}$ costs two units. Theoretical considerations on variance
and cost will be useful to guide the choice of the parameters $k$
and $m$ as discussed in Section \ref{subsec:Efficiency-of-unbiased}.


\subsection{Theoretical validity under polynomial tails \label{sec:theory}}

We provide here sufficient conditions under which the estimator $H_{k:m}$ is unbiased,
has finite expected cost and finite variance. Below, Assumptions \ref{assumption:marginaldistributions}
and \ref{assumption:sticktogether} are identical to Assumptions 2.1
and 2.3 in \citet{jacob2017unbiased} whereas Assumption \ref{assumption:meetingtime} is a polynomial
tail assumption on the meeting time weaker than the geometric tail assumption, namely, $\mathds{P}(\tau>n)\leq K\rho^{n}$ for all $n\geq1$,
for some constants $K<\infty$ and $\rho\in(0,1)$, used in \citet{jacob2017unbiased}.
Relaxing this assumption is useful in our context as the pseudo-marginal algorithm
is polynomially ergodic under realistic assumptions \citep{andrieu2015convergence} and, as demonstrated in
Section \ref{subsec:Conditions-for-meeting}, this allows the verification of the polynomial tail assumption.


%
\begin{assumption}
\label{assumption:marginaldistributions}Each of the two chains marginally
starts from a distribution $\pi_{0}$, evolves according to a transition
kernel $P$ and is such that $\mathds{E}[h(Z_{n})]\to\pi(h)$ as
$n\to\infty$ for a real-valued function $h$. Furthermore, there
exists constants $\eta>0$ and $D<\infty$ such that $\mathds{E}[\left|h(Z_{n})\right|{}^{2+\eta}]<D$
for all $n\geq0$.
\end{assumption}
\begin{assumption}
\label{assumption:meetingtime}The two chains are such that there
exists an almost surely finite meeting time $\tau=\inf\{n\geq1:\ Z_{n}=\tilde{Z}_{n-1}\}$
such that $\mathds{P}(\tau>n)\leq Kn^{-\kappa}$ for some constants
$0<K<\infty$ and $\kappa>2\left(2\eta^{-1}+1\right)$, where $\eta$
is as in Assumption \ref{assumption:marginaldistributions}.
\end{assumption}
\begin{assumption}
\label{assumption:sticktogether}The chains stay together after meeting,
i.e. $Z_{n}=\tilde{Z}_{n-1}$ for all $n\geq\tau$.
\end{assumption}
Under Assumption \ref{assumption:meetingtime}, $\mathds{E}[\tau^{p}]\leq Kp\sum_{n\geq0}n^{-\kappa+p-1}$
for all $p\geq1$ and thus $\mathds{E}[\tau^{p}]<\infty$ if $\kappa>p$.
As it is assumed that $\kappa>2\left(2\eta^{-1}+1\right),$ this implies
that $\mathds{E}[\tau^{p}]<\infty$ for $p<2(2\eta^{-1}+1)$. In particular,
one has $\mathds{E}[\tau]<\infty$ and thus the computational cost
associated with $H_{k:m}$ has a finite expectation. It also implies
that $\tau$ has a finite second moment.

The following result states that $H_{k:m}$ has not only a finite expected cost but also
has a finite variance and that its expectation is indeed $\pi(h)$
under the above assumptions. The proof is provided in Appendix \ref{subsec:Proof-of-Theorem-Validity}.
\begin{thm}
\label{thm:validity}Under Assumptions \ref{assumption:marginaldistributions}-\ref{assumption:meetingtime}-\ref{assumption:sticktogether},
for all $k\geq0$ and $m\geq k$, the estimator $H_{k:m}$ defined
in (\ref{eq:bc}) has expectation $\pi(h)$, has a finite expected
computing time and admits a finite variance.
\end{thm}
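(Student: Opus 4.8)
The plan is to mimic the proof of the corresponding result in \citet{jacob2017unbiased}, but to carry it out under the weaker polynomial tail Assumption \ref{assumption:meetingtime} rather than the geometric one, tracking carefully which moments of $\tau$ and $h(Z_n)$ are needed. Throughout, write $\Delta_n := h(Z_n) - h(\tilde Z_{n-1})$, so that $H_k(Z,\tilde Z) = h(Z_k) + \sum_{n=k+1}^{\tau-1}\Delta_n$ and $H_{k:m}$ is the Cesàro-type average of the $H_l$'s for $l \in \{k,\dots,m\}$. Since $H_{k:m}$ is a finite (random-length) linear combination of terms of the form $h(Z_l)$ and $\Delta_n$, and since $H_{k:m}$ and $H_k$ differ by a telescoping rearrangement, it suffices to control $\mathds{E}[\sum_{n\geq k+1}|\Delta_n|]$ and $\mathds{E}[(\sum_{n\geq k+1}|\Delta_n|\mathds{1}\{n<\tau\})^2]$; the second of these simultaneously gives absolute convergence (hence that the telescoping manipulation in the introduction is rigorous, yielding $\mathds{E}[H_{k:m}]=\pi(h)$) and the finite variance.

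The key estimate is a bound on $\mathds{E}[|\Delta_n|\mathds{1}\{n<\tau\}]$ and on $\mathds{E}[|\Delta_n||\Delta_{n'}|\mathds{1}\{n<\tau\}\mathds{1}\{n'<\tau\}]$ obtained by Hölder's inequality. Fix $\delta>0$ small; by Hölder with exponents $(2+\eta)/(1+\delta)$ and its conjugate (choosing $\delta$ so this is admissible),
\begin{equation*}
\mathds{E}[|\Delta_n|\mathds{1}\{n<\tau\}] \leq \left(\mathds{E}[|\Delta_n|^{2+\eta}]\right)^{\frac{1}{2+\eta}} \mathds{P}(\tau>n)^{1-\frac{1}{2+\eta}}.
\end{equation*}
Assumption \ref{assumption:marginaldistributions} and the triangle inequality in $L^{2+\eta}$ bound the first factor by $2D^{1/(2+\eta)}$, uniformly in $n$; Assumption \ref{assumption:meetingtime} bounds the second by $(Kn^{-\kappa})^{1-1/(2+\eta)} = K' n^{-\kappa(1+\eta)/(2+\eta)}$. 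Hence $\sum_n \mathds{E}[|\Delta_n|\mathds{1}\{n<\tau\}]<\infty$ precisely when $\kappa(1+\eta)/(2+\eta)>1$, i.e. $\kappa > (2+\eta)/(1+\eta)$, which is implied by $\kappa > 2(2\eta^{-1}+1)$. For the second moment, expand the square into a double sum, split into $n=n'$ and $n\neq n'$, and in each term apply Hölder with three factors — $|\Delta_n|^{2+\eta}$, $|\Delta_{n'}|^{2+\eta}$, and $\mathds{1}\{\max(n,n')<\tau\}$ — using exponents $(2+\eta, 2+\eta, r)$ with $r$ the conjugate making $2/(2+\eta)+1/r=1$, i.e. $r = (2+\eta)/\eta$. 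This yields a bound of order $\sum_{n\leq n'} (n')^{-\kappa\eta/(2+\eta)}$ (after $\mathds{P}(\tau>\max(n,n'))^{1/r}\leq (K (n')^{-\kappa})^{\eta/(2+\eta)}$), and summing the inner index first gives $\sum_{n'} (n')^{1-\kappa\eta/(2+\eta)}$, which is finite when $\kappa\eta/(2+\eta) > 2$, i.e. exactly $\kappa > 2(2+\eta)/\eta = 2(2\eta^{-1}+1)\cdot\frac{2+\eta}{2}$ — one checks the stated threshold $\kappa > 2(2\eta^{-1}+1)$ is what makes both requirements hold with the Hölder exponents chosen as above (being slightly generous with $\delta$ in the first-moment step). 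The finite expected computing time is immediate from the remark preceding the theorem: $T_m = 2(\tau-1)+\max(1,m-\tau+1) \leq 2\tau + m$, and $\mathds{E}[\tau]<\infty$ since $\kappa>2>1$.

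Assembling: absolute summability of $\mathds{E}[|\Delta_n|\mathds{1}\{n<\tau\}]$ legitimises interchanging expectation and the (now absolutely convergent, since $\tau<\infty$ a.s.) telescoping sum, giving $\mathds{E}[H_k(Z,\tilde Z)]=\lim_n \mathds{E}[h(Z_n)]=\pi(h)$ for every $k$, and then $\mathds{E}[H_{k:m}] = (m-k+1)^{-1}\sum_{l=k}^m \mathds{E}[H_l] = \pi(h)$ by linearity; the finite second-moment bound gives $\mathds{V}[H_{k:m}]<\infty$ after noting $|H_{k:m}| \leq (m-k+1)^{-1}\sum_{l=k}^m |h(Z_l)| + \sum_{n\geq k+1}|\Delta_n|\mathds{1}\{n<\tau\}$ and that $\mathds{E}[h(Z_l)^2]\leq D^{2/(2+\eta)}<\infty$ is uniform in $l$. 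The main obstacle is purely bookkeeping: pinning down the exact Hölder exponents so that the exponent threshold on $\kappa$ that emerges from the second-moment computation matches the stated $\kappa > 2(2\eta^{-1}+1)$, and making sure the two separate summability requirements (first moment for unbiasedness, second moment for variance) are both subsumed under that single condition — there is no conceptual difficulty, but the constants must be handled with a little care, in particular allowing an arbitrarily small slack $\delta$ in the weaker (first-moment) bound so that it is not the binding constraint.
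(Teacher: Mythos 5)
Your proof is correct and follows essentially the same route as the paper's: both arguments rest on Hölder's inequality, trading the uniform $(2+\eta)$-th moment of $h(Z_n)$ from Assumption~\ref{assumption:marginaldistributions} against the polynomial tail of $\tau$ from Assumption~\ref{assumption:meetingtime}, together with the observation that $\Delta_n = 0$ for $n \geq \tau$. The only cosmetic difference is that the paper first applies Cauchy--Schwarz to the cross terms $\mathds{E}[\Delta_s\Delta_t]$ in the partial sums and then a two-factor Hölder to each diagonal $\mathds{E}[\Delta_t^2\mathds{1}\{\tau>t\}]$, whereas you apply a single three-factor Hölder directly to $\mathds{E}[|\Delta_n||\Delta_{n'}|\mathds{1}\{\max(n,n')<\tau\}]$; the two bounds are equivalent up to the elementary inequality $\min(a,b)\le\sqrt{ab}$ and produce the same constraint on $\kappa$. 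One small correction to your final bookkeeping: the identity you wrote, $2(2+\eta)/\eta = 2(2\eta^{-1}+1)\cdot\frac{2+\eta}{2}$, is false — in fact $2(2+\eta)/\eta = 4\eta^{-1}+2 = 2(2\eta^{-1}+1)$ exactly, so your second-moment computation lands precisely on the threshold in Assumption~\ref{assumption:meetingtime} with no need for any slack $\delta$; the first-moment requirement $\kappa>(2+\eta)/(1+\eta)$ is strictly weaker and automatically subsumed.
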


\subsection{Conditions for polynomial tails\label{subsec:Conditions-for-meeting} }

We now proceed to establishing conditions that imply Assumption \ref{assumption:meetingtime}.
To state the main result, we put assumptions on the probability of
meeting at each iteration. We write $\mathcal{D}$ for the diagonal
of the joint space $\mathcal{Z}\times\mathcal{Z}$, that is $\mathcal{D}:=\{(z,\tilde{z})\in\mathcal{Z}\times\mathcal{Z}:z=\tilde{z}\}$
and introduce the measure $\pi_{\mathcal{D}}(dz,d\tilde{z}):=\pi(dz)\delta_{z}(d\tilde{z})$.
In this case, we identify the meeting time $\tau$ with the hitting time of the diagonal, $\tau=\tau_{\mathcal{D}}:=\inf\left\{ n\geq1:\left(Z_{n},\tilde{Z}_{n-1}\right)\in\mathcal{D}\right\} $.
The first assumption is on the ability of the pair of chains to hit
the diagonal when it enters a certain subset of $\mathcal{Z}\times\mathcal{Z}$.
\begin{assumption}
\label{ass:pbarproperties} The kernel $\bar{P}$ is $\pi_{\mathcal{D}}$-irreducible:
for any set $A\subset\mathcal{D}$ such that $\pi_{\mathcal{D}}(A)>0$
and all $(z,\tilde{z})\in\mathcal{Z}\times\mathcal{Z}$ there exists
some $n\geq0$ such that $\bar{P}^{n}\left((z,\tilde{z}),A\right)>0$.
The kernel $\bar{P}$ is also aperiodic. Finally, there exist $\epsilon\in(0,1)$, $n_0\geq 0$
and a set $C\subset\mathcal{Z}$ such that
\begin{equation}
\inf_{(z,\tilde{z})\in C\times C}\bar{P}^{n_0}\left(\left(z,\tilde{z}\right),\mathcal{D}\right)\geq\epsilon.\label{ass:joint_minor}
\end{equation}
\end{assumption}
Next we will assume that the marginal kernel $P$ admits a polynomial
drift condition and a small set $C$; we will later consider that
small set to be the same set $C$ as in Assumption \ref{ass:pbarproperties}.
Intuitively, the polynomial drift condition on $C$ will ensure regular
entries of the pair of chains in the set $C\times C$, from which
the diagonal can be hit in $n_0$ steps under Assumption \ref{ass:pbarproperties}.

\begin{assumption}
\label{ass:minordrift} There exist $\epsilon_{0}>0$, a probability
measure $\nu$ on $\mathcal{Z}$ and a set $C\subset\mathcal{Z}$
such that
\begin{equation}
\inf_{z\in C}P\left(z,\mathcal{\cdot}\right)\geq\epsilon_{0}\nu(\cdot).\label{ass:marginal_minor}
\end{equation}
In addition, there exist a measurable function $V:\mathcal{Z}\to[1,\infty)$,
constants $b_{V},c_{V}>0$, $\epsilon_{b}\in(0,1)$, and a value $\alpha\in(0,1)$,
such that, defining $\phi(x):=dx^{\alpha}$ for a constant $d>0$
and all $x\in[1,\infty)$, then for any $z\in\mathcal{Z}$,
\begin{align}
PV(z) & \leq V(z)-\phi\circ V(z)+b_{V}\mathds{1}_{C}\left(z\right),\label{eq:drift-2}\\
\sup_{z\in C}V(z) & \leq c_{V},\label{eq:drift1-2}\\
\inf_{z\notin C}\phi\circ V(z) & \geq b_{V}(1-\epsilon_{b})^{-1}.\label{eq:drift2-2}
\end{align}
\end{assumption}
\textbf{}%

The following result states that Assumptions \ref{ass:pbarproperties}
and \ref{ass:minordrift} guarantee that the tail probabilities of
the meeting time are polynomially bounded. The proof is provided in
Appendix \ref{subsec:polyproof}.
\begin{thm}
\label{thm:pmmeetingtimes}Suppose that Assumptions \ref{ass:pbarproperties}
and \ref{ass:minordrift} hold for the same set $C\subset\mathcal{Z}$,
and that $\pi_{0}$ admits a density with respect to $\pi$ and is
supported on a compact set. Then we have that for all $n\geq1$ and
some constant $K>0$,
\[
\mathds{P}(\tau\geq n)\leq Kn^{-\kappa},
\]
where $\kappa=1/(1-\alpha)$, with $\alpha$ defined as in Assumption
\ref{ass:minordrift}.
\end{thm}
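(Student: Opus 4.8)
The plan is to control the tail of the meeting time $\tau=\tau_{\mathcal D}$ by combining the polynomial drift condition on the marginal kernel $P$ with the one-shot coupling estimate \eqref{ass:joint_minor}. The strategy proceeds in three stages: first, lift the marginal drift to a drift condition for the coupled kernel $\bar P$ on $\mathcal Z\times\mathcal Z$; second, use the subgeometric drift theory (in the vein of \citet{jarner2002polynomial} and \citet{douc2004practical}) to bound the tail of the return time of the pair $(Z_n,\tilde Z_{n-1})$ to the product small set $C\times C$; third, show that each visit to $C\times C$ triggers a hit of the diagonal $\mathcal D$ with probability at least $\epsilon$ (independently across attempts, via a geometric-trials argument), so that $\tau$ is stochastically dominated by a random sum of i.i.d.\ return-time-like blocks whose tail inherits the polynomial rate $\kappa=1/(1-\alpha)$.

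In more detail: define $W(z,\tilde z):=\tfrac12\{V(z)+V(\tilde z)\}$ on $\mathcal Z\times\mathcal Z$. Using the marginal drift \eqref{eq:drift-2} for each coordinate and concavity of $\phi(x)=dx^{\alpha}$, one gets $\bar P W(z,\tilde z)\le W(z,\tilde z)-\tfrac12\{\phi\circ V(z)+\phi\circ V(\tilde z)\}+b_V\le W-\phi\circ W+\tilde b\,\mathds 1_{C\times C}$ off a suitable level set, where the last step uses concavity, $\phi(\tfrac12 a+\tfrac12 b)\ge \tfrac12\phi(a)+\tfrac12\phi(b)$, together with \eqref{eq:drift2-2} to absorb the constant $b_V$ outside $C\times C$. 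This yields a polynomial drift for $\bar P$ toward the small set $D_0:=C\times C$ (which is $\bar P^{n_0}$-small by \eqref{ass:joint_minor}, or small for $\bar P$ itself after the standard enlargement). The subgeometric drift results then give, for the return time $\sigma:=\inf\{n\ge1:(Z_n,\tilde Z_{n-1})\in D_0\}$, a bound of the form $\mathds P_{(z,\tilde z)}(\sigma\ge n)\le c(z,\tilde z)\,n^{-\kappa}$ with $\kappa=1/(1-\alpha)$, and $c$ bounded in terms of $W$; since $\pi_0$ has compact support and a density with respect to $\pi$, $W$ is $\pi_0\otimes\pi_0$-integrable (indeed bounded on the support), so the initial contribution is controlled.

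Finally, chain the excursions: starting from any point in $D_0$, with probability at least $\epsilon$ the pair reaches $\mathcal D$ within $n_0$ steps, after which (Assumption~\ref{assumption:sticktogether}) we are done; on the complementary event we are somewhere in $\mathcal Z\times\mathcal Z$ and restart, waiting for the next return to $D_0$ whose tail is again polynomial with index $\kappa$, uniformly over the restart point via the drift-based bound on $c$. Thus $\tau$ is dominated by $\sum_{j=1}^{N}(\sigma_j+n_0)$ where $N$ is geometric with success probability $\epsilon$ and the $\sigma_j$ have uniformly polynomially-bounded tails of index $\kappa$; a routine computation (a geometric number of summands preserves the polynomial tail index) gives $\mathds P(\tau\ge n)\le Kn^{-\kappa}$.

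The main obstacle is the second stage: transferring the polynomial drift into a genuine polynomial tail bound on the return time $\sigma$ \emph{with constants that are uniform enough} to survive the geometric chaining in stage three. One must either invoke an off-the-shelf subgeometric return-time estimate (e.g.\ from \citet{jarner2002polynomial}, \citet{douc2004practical}, or \citet{fort2003polynomial}) in a form that makes the dependence of the constant on the starting state explicit through $W$, or reprove the needed piece — typically via the fact that $M_n:=\phi\circ W(X_n)\cdot\mathds 1\{n<\sigma\}$-type processes, or the polynomial Foster–Lyapunov functions $V_{\beta}$ interpolating between $W$ and $\phi\circ W$, are supermartingales up to the return time. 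Care is also needed in the concavity step to ensure the perturbed drift function is still of the form $x\mapsto d'x^{\alpha}$ with the \emph{same} exponent $\alpha$, since the tail index $\kappa=1/(1-\alpha)$ must come out exactly.
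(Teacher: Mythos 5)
Your overall plan --- lift the drift to the product space, bound the return time to $C\times C$ via subgeometric drift theory, and chain excursions with geometric trials to hit $\mathcal D$ --- is a reasonable Nummelin-splitting-style route, and it differs from what the paper actually does. The paper, after establishing the bivariate drift, avoids the geometric chaining entirely by following \citet{douc2004practical}: it bounds the indicator $\mathds{1}_{\bar C}(\Xi_k)$ by a multiple of the resolvent $K_\rho(\Xi_k,\mathcal D)$ using the minorization \eqref{ass:joint_minor}, then applies Dynkin's formula \emph{directly} to the hitting time $\tau_{\mathcal D}$ to obtain moment bounds, so no stochastic-domination or uniform-constants argument is required.

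There is, however, a concrete error in your stage-one drift. Taking $W(z,\tilde z)=\tfrac12\{V(z)+V(\tilde z)\}$ you assert $W-\tfrac12\{\phi\circ V(z)+\phi\circ V(\tilde z)\}+b_V\le W-\phi\circ W+\tilde b\,\mathds 1_{C\times C}$ ``by concavity, $\phi(\tfrac12 a+\tfrac12 b)\ge\tfrac12\phi(a)+\tfrac12\phi(b)$''. That inequality goes the wrong way: it yields $\tfrac12\{\phi\circ V(z)+\phi\circ V(\tilde z)\}\le\phi\circ W$, whereas you need the reverse so that $-\tfrac12\{\phi\circ V(z)+\phi\circ V(\tilde z)\}\le-\phi\circ W$. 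The paper works with $\bar V(z,\tilde z):=V(z)+V(\tilde z)-1$ and uses the \emph{correct} direction of concavity via the mean-value-theorem inequality $\phi(t+s-1)+\phi(1)\le\phi(t)+\phi(s)$ for $t\ge s\ge1$, giving $\phi\circ\bar V\le\phi\circ V(z)+\phi\circ V(\tilde z)$. You could rescue your averaged $W$ by instead writing $\tfrac12\{\phi(a)+\phi(b)\}\ge\tfrac12\phi(\max(a,b))\ge\tfrac12\phi\bigl(\tfrac12(a+b)\bigr)$, which keeps $\alpha$ but replaces $\phi$ by $\tfrac12\phi$; as written, however, the step fails. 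Beyond that, the bookkeeping you flag at the end (uniformity of the return-time tail in the starting state, controlling $W$ after the $n_0$ wasted steps, and summing a geometric number of excursions) is exactly the technical content that the paper's Dynkin/resolvent argument is designed to bypass, and you do not carry it out.
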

We note the direct relation between the exponent $\alpha$ in the
polynomial drift condition and the exponent $\kappa$ in the bound
on the tail probability $\mathds{P}(\tau\geq n)$. In turn this
relates to the existence of finite moments for $\tau$, as discussed
after Assumption \ref{assumption:meetingtime}. In particular, if
we can take large values of $\eta$ in Assumption \ref{assumption:marginaldistributions},
then we require in Assumption \ref{assumption:meetingtime} that $\kappa$
is just above $2$, which is implied by $\alpha>1/2$ according to
Theorem \ref{thm:pmmeetingtimes}. However, if we consider $\eta=1$
in Assumption \ref{assumption:marginaldistributions}, for instance,
then we require in Assumption \ref{assumption:meetingtime} that $\kappa$
is just above $6$, which is implied by $\alpha>5/6$ according to
Theorem \ref{thm:pmmeetingtimes}. The condition $\alpha>5/6$ will appear again in the next section.

\subsection{Efficiency under polynomial tails\label{subsec:Efficiency-of-unbiased}}
\label{sec:effpoly}
In removing the bias from MCMC estimators, we expect that $H_{k:m}$
will have an increased variance compared to an MCMC estimator with
equivalent cost. In this section we study the overall efficiency of
$H_{k:m}$ in comparison to standard MCMC estimators. This mirrors
Proposition 3.3 in \citet{jacob2017unbiased} in the case of geometrically
ergodic chains.%

We can define the inefficiency of the estimator $H_{k:m}$ as the
product of its variance and of its expected computational cost via $\text{IF}[H_{k:m}]:=\mathds{E}[T_{m}]\mathds{V}[H_{k:m}]$, with $T_{m}$ denoting the computational cost. This
quantity appears in the study of estimators with random
computing costs, since seminal works such as \citet{glynn1990bias}
and \citet{glynn1992asymptotic}. The inefficiency can be understood
as the asymptotic variance of the proposed estimator as the computing
budget goes to infinity. The following provides a precise comparison
between this inefficiency and the inefficiency of the standard ``serial''
algorithm. Since the cost $T_{m}$ is measured in units equal to the
cost of sampling from $P$, the cost of obtaining a serial MCMC estimator
based on $m$ iterations is equal to $m$ such units. The mean squared
error associated with an MCMC estimator based on $(Z_{n})_{n\geq0}$
is denoted by $\mathrm{MSE}_{b:m}:=\mathds{E}\left[\left(\mathrm{MCMC}_{b:m}-\pi(h)\right)^{2}\right]$,
where $\text{MCMC}_{b:m}:=(m-b+1)^{-1}\sum_{l=b}^{m}h(Z_{l})$ and
where $b-1$ denotes the number of discarded iterations. We are particularly
interested in the comparison between $\text{IF}[H_{k:m}]$, the inefficiency
of the proposed estimator with parameters $k,m$, and $\lim_{m\to\infty}m\times\mathrm{MSE}_{b:m}$,
the asymptotic inefficiency of the serial MCMC algorithm. Both correspond
to asymptotic variances when the computing budget goes to infinity.

We first express the estimator $H_{k:m}$, for $m\geq k\geq0$ as
$\mathrm{MCMC}_{k:m}+\mathrm{BC}_{k:m}$, where the bias correction
term is
\begin{align}
\mathrm{BC}_{k:m} & :=\sum_{n=k+1}^{\tau-1}\min\left(1,\frac{n-k}{m-k+1}\right)\left(h(Z_{n})-h(\tilde{Z}_{n-1})\right).
\end{align}
Then Cauchy-Schwarz provides a relationship between the variance of
$H_{k:m}$, the MCMC mean squared error, and the second moment of
the bias-correction term:
\begin{equation}
\mathds{V}[H_{k:m}]\le\mathrm{MSE}_{k:m}+2\sqrt{\mathrm{MSE}_{k:m}\mathds{E}\left[\mathrm{BC}_{k:m}^{2}\right]}+\mathds{E}\left[\mathrm{BC}_{k:m}^{2}\right].\label{eq:vhkmineq}
\end{equation}
This relationship motivates the study of the second moment of $\mathrm{BC}_{k:m}$.
The following result shows that if the Markov chains are mixing well
enough, in the sense of the exponent $\alpha$ in the polynomial drift
condition of Assumption \ref{ass:minordrift} being close enough to
one, then we can obtain a bound on $\mathds{E}\left[\mathrm{BC}_{k:m}^{2}\right]$
which is explicit in $k$ and $m$. The proof can be found in Appendix
\ref{subsec:proofprop33}.
\begin{prop}
\label{prop:george33}Suppose that the marginal chain evolving according
to $P$ is $\psi$-irreducible and that the assumptions of Theorem \ref{thm:pmmeetingtimes}
hold for $5/6<\alpha\leq1$ and some measurable function $V:\mathcal{Z}\to[1,\infty)$,
such that $S_{V}:=\{z:V(z)<\infty\}\neq\emptyset$. In addition assume
that there exists a $\gamma\in(1-\alpha,1)$ such that $\pi(V^{4\gamma})<\infty$.
Then for any measurable function $h:\mathcal{Z}\to\mathds{R}$ such
that $\sup_{z\in\mathcal{Z}}V(z)^{-\alpha-\gamma+1}|h(z)|<\infty$,
and any integers $m\geq k\geq0$ we have that, for $\kappa:=1/(1-\alpha)$,
and a constant $B<\infty$,
\begin{equation}
\mathds{E}\left[\mathrm{BC}_{k:m}^{2}\right]\leq B\left[\frac{1}{m^{\kappa/2-1}}+\frac{1}{(m-k+1)^{2}}\frac{1}{k^{\kappa/2-3}}\right].\label{eq:bcvar}
\end{equation}
\end{prop}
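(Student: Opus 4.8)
The plan is to split the bias-correction term into a ``tail'' piece and a Ces\`aro-type piece and to reduce both to $L^{2}$ estimates for the partial bias corrections $\mathrm{bc}_{l}:=\sum_{n=l+1}^{\tau-1}\{h(Z_{n})-h(\tilde{Z}_{n-1})\}$. Writing $\Delta_{n}:=h(Z_{n})-h(\tilde{Z}_{n-1})$ (so $\Delta_{n}=0$ for $n\ge\tau$) and using the identity $\sum_{n=k+1}^{m}(n-k)\Delta_{n}=\sum_{l=k}^{m-1}\sum_{n=l+1}^{m}\Delta_{n}$ together with $\min(1,(n-k)/(m-k+1))=1$ for $n\ge m+1$, one gets
\[
\mathrm{BC}_{k:m}=\frac{1}{m-k+1}\sum_{l=k}^{m-1}\left(\mathrm{bc}_{l}-\mathrm{bc}_{m}\right)+\mathrm{bc}_{m}.
\]
By Minkowski's inequality it then suffices to bound $\|\mathrm{bc}_{m}\|_{2}$ and $\|\mathrm{bc}_{l}-\mathrm{bc}_{m}\|_{2}$ for $k\le l\le m-1$, and to sum these bounds against the weights $(m-k+1)^{-1}$.

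The core estimate comes from a martingale decomposition. Let $W_{n}:=(Z_{n+1},\tilde{Z}_{n})$, a Markov chain with transition kernel $\bar{P}$ and invariant distribution $\pi_{\mathcal{D}}$, and let $g(z,\tilde{z}):=h(z)-h(\tilde{z})$, so that $\Delta_{n}=g(W_{n-1})$, $\pi_{\mathcal{D}}(g)=0$, and $g\equiv 0$ on $\mathcal{D}$. I would solve the Poisson equation $\hat{g}-\bar{P}\hat{g}=g$ by a solution $\hat{g}$ that also vanishes on $\mathcal{D}$ and whose growth is controlled, $|\hat{g}(z,\tilde{z})|\le c\{V(z)^{\gamma}+V(\tilde{z})^{\gamma}\}$ for some $c<\infty$. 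Telescoping then yields, for $l\le m$, $\sum_{n=l+1}^{m}\Delta_{n}=\hat{g}(W_{l})-\hat{g}(W_{m})+M_{m}-M_{l}$ with $M_{j}:=\sum_{i=0}^{j-1}\{\hat{g}(W_{i+1})-\bar{P}\hat{g}(W_{i})\}$ a martingale for the natural filtration of $(W_{j})$; since $\hat{g}(W_{j})=0$ for $j\ge\tau-1$, letting $m\to\infty$ gives $\mathrm{bc}_{l}=\hat{g}(W_{l})+M_{\infty}-M_{l}$ and $\mathrm{bc}_{l}-\mathrm{bc}_{m}=\hat{g}(W_{l})-\hat{g}(W_{m})+M_{m}-M_{l}$. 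Because $g$ and $\hat{g}$ vanish on $\mathcal{D}$, both $\hat{g}(W_{j})$ and the increment $\hat{g}(W_{j+1})-\bar{P}\hat{g}(W_{j})$ are zero on $\{\tau\le j+1\}$. I would then combine: (i) the tail bound $\mathds{P}(\tau>n)\le Kn^{-\kappa}$ of Theorem \ref{thm:pmmeetingtimes}; (ii) the uniform moment bound $\sup_{n}\mathds{E}[V(Z_{n})^{4\gamma}]<\infty$, which follows from $\pi(V^{4\gamma})<\infty$ and the assumption on $\pi_{0}$, a (bounded) density with respect to $\pi$ on a compact set giving $\pi_{0}P^{n}(V^{4\gamma})\lesssim\pi(V^{4\gamma})$ uniformly in $n$; and (iii) $|h|\le\mathrm{const}\cdot V^{\alpha+\gamma-1}\le\mathrm{const}\cdot V^{\gamma}$ (using $V\ge1$, $\alpha\le1$), so that $\hat{g}(W_{j})^{4}$ and the fourth powers of the increments have uniformly bounded expectation. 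Cauchy--Schwarz then gives $\mathds{E}[\hat{g}(W_{l})^{2}]=\mathds{E}[\hat{g}(W_{l})^{2}\mathds{1}\{\tau>l+1\}]\le\mathds{P}(\tau>l+1)^{1/2}\,\mathds{E}[\hat{g}(W_{l})^{4}]^{1/2}=O(l^{-\kappa/2})$, and by orthogonality of the martingale increments $\mathds{E}[(M_{m}-M_{l})^{2}]=\sum_{j=l}^{m-1}\mathds{E}[(\hat{g}(W_{j+1})-\bar{P}\hat{g}(W_{j}))^{2}]=O\big(\sum_{j\ge l}j^{-\kappa/2}\big)=O(l^{1-\kappa/2})$, the series converging since $\kappa=1/(1-\alpha)>6$.

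Assembling, $\|\mathrm{bc}_{l}-\mathrm{bc}_{m}\|_{2}=O(l^{1/2-\kappa/4})$ and, from $\mathrm{bc}_{m}=\hat{g}(W_{m})+M_{\infty}-M_{m}$, $\|\mathrm{bc}_{m}\|_{2}=O(m^{1/2-\kappa/4})$. Hence $\mathds{E}[\mathrm{bc}_{m}^{2}]=O(m^{1-\kappa/2})=O(m^{-(\kappa/2-1)})$, which is the first term of \eqref{eq:bcvar}; and since $1/2-\kappa/4<-1$ the series $\sum_{l\ge k}l^{1/2-\kappa/4}$ converges and equals $O(k^{3/2-\kappa/4})$, so
\[
\Big\|\mathrm{BC}_{k:m}-\mathrm{bc}_{m}\Big\|_{2}\le\frac{1}{m-k+1}\sum_{l=k}^{m-1}\|\mathrm{bc}_{l}-\mathrm{bc}_{m}\|_{2}=O\!\left(\frac{k^{3/2-\kappa/4}}{m-k+1}\right),
\]
which squared is $O\big(k^{3-\kappa/2}/(m-k+1)^{2}\big)=O\big((m-k+1)^{-2}k^{-(\kappa/2-3)}\big)$, the second term of \eqref{eq:bcvar}; a final application of $(a+b)^{2}\le2a^{2}+2b^{2}$ finishes the proof. (The terms with $l=0$ or $k=0$ are trivial or handled separately and do not affect the orders.)

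The main obstacle is the Poisson-equation step: producing a solution $\hat{g}$ of $\hat{g}-\bar{P}\hat{g}=g$ that vanishes on $\mathcal{D}$ and obeys $|\hat{g}(z,\tilde{z})|\le c\{V(z)^{\gamma}+V(\tilde{z})^{\gamma}\}$. This is where the subgeometric ergodicity of $\bar{P}$ enters. One first checks that $\bar{P}$ inherits from Assumption \ref{ass:minordrift} a polynomial drift of the same order $\alpha$: with $\bar{V}(z,\tilde{z}):=V(z)+V(\tilde{z})$ one has $\bar{P}\bar{V}(z,\tilde{z})=PV(z)+PV(\tilde{z})$, and the concavity of $x\mapsto x^{\alpha}$ bounds the resulting drift decrement below by a multiple of $\bar{V}(z,\tilde{z})^{\alpha}$ outside a sublevel set of $\bar{V}$; together with Assumption \ref{ass:pbarproperties} and the $\psi$-irreducibility of $P$ this places us within the scope of standard results on Poisson equations and moments for polynomially ergodic chains, and the exponents $\alpha+\gamma-1$, $\gamma$, $4\gamma$ in the hypotheses are calibrated so that $\hat{g}$ grows like $V^{\gamma}$ and the Cauchy--Schwarz step with exponent $2$ closes exactly. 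A secondary point needing care is the passage from $\pi$-stationarity to the $\pi_{0}$-started chain in ingredient (ii), where one exploits the compact support and density of $\pi_{0}$ to obtain the multiplicative domination $\pi_{0}P^{n}\le\mathrm{const}\cdot\pi$ on the test functions involved.
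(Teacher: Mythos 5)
Your proposal is essentially the same argument as the paper's: a Poisson-equation/martingale decomposition of the bias-correction term, the vanishing of the Poisson solution on the diagonal, and then Cauchy--Schwarz against the polynomial tail bound $\mathds{P}(\tau>n)\leq Kn^{-\kappa}$ together with uniform $V^{4\gamma}$-moment bounds. The exponent arithmetic (giving the two terms $m^{-(\kappa/2-1)}$ and $(m-k+1)^{-2}k^{-(\kappa/2-3)}$) also matches. Your explicit rewriting $\mathrm{BC}_{k:m}=\frac{1}{m-k+1}\sum_{l=k}^{m-1}(\mathrm{bc}_l-\mathrm{bc}_m)+\mathrm{bc}_m$ is a transparent alternative to the paper's invocation of a bound from \citet[Eq.~(A.3)]{jacob2017unbiased} for weighted partial sums, but leads to the same estimates.

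The one place where you make life noticeably harder than necessary is the construction of $\hat{g}$. You propose to establish a polynomial drift, irreducibility and aperiodicity for the \emph{coupled} kernel $\bar{P}$, and then invoke subgeometric Poisson-equation theory for $\bar{P}$ directly; you flag this as ``the main obstacle.'' The paper sidesteps this entirely: it solves the Poisson equation for the \emph{marginal} kernel $P$, obtaining $g=\sum_{j\geq 0}P^{j}[h-\pi(h)]$ with $|g|_{V^{\gamma}}<\infty$ from the drift condition via \citet[Lemma~3.5]{jarner2002polynomial} and \citet[Theorem~14.0.1]{meyn2009}, and then simply sets $\bar{g}(z,\tilde{z}):=g(z)-g(\tilde{z})$. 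Because $\bar{P}\bar{g}(z,\tilde{z})=Pg(z)-Pg(\tilde{z})$ by the marginal structure of the coupling, this $\bar{g}$ automatically solves $\bar{g}-\bar{P}\bar{g}=h(z)-h(\tilde{z})$, vanishes on $\mathcal{D}$, and inherits the bound $|\bar{g}(z,\tilde{z})|\leq |g|_{V^{\gamma}}(V^{\gamma}(z)+V^{\gamma}(\tilde{z}))$, which is exactly the object you call $\hat{g}$. So the ``obstacle'' you identify dissolves if you pass to differences of the marginal Poisson solution; no ergodicity theory for $\bar{P}$ is needed. With that substitution your argument is complete and correct, and in fact a little cleaner than the paper's in how it assembles the final bound.
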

The fact that a restriction on the exponent $\alpha$ has to be specified
to control the second moment of $\text{BC}_{k:m}$ is to be expected:
we have already seen in the previous section that such a restriction
is also necessary to apply Theorem \ref{thm:pmmeetingtimes} to verify
Assumption \ref{assumption:meetingtime} with an adequate exponent
$\kappa$, which, in turn, leads to a finite variance for $H_{k:m}$
through Theorem \ref{thm:validity}. The specific condition $5/6<\alpha\leq1$
could perhaps be relaxed with a more refined technical analysis, thus we
interpret the condition qualitatively: the chains are allowed to satisfy
only a polynomial drift condition but it needs to be ``close'' enough
to a geometric drift condition.

It follows from \eqref{eq:vhkmineq} and \eqref{eq:bcvar} that under
the assumptions of Proposition \ref{prop:george33}, we have
\begin{align}
\mathds{V}[H_{k:m}]\le\mathrm{MSE}_{k:m}+2\sqrt{B\mathrm{MSE}_{k:m}}\sqrt{\frac{1}{m^{\kappa/2-1}}+\frac{1}{(m-k+1)^{2}}\frac{1}{k^{\kappa/2-3}}} \nonumber \\
+B\left[\frac{1}{m^{\kappa/2-1}}+\frac{1}{(m-k+1)^{2}}\frac{1}{k^{\kappa/2-3}}\right].\label{eq:boundonvariance}
\end{align}
The variance of $H_{k:m}$ is thus bounded by the mean squared error
of an MCMC estimator, and additive terms that vanish polynomially
when $k$, $m-k$ and $m$ increase. To compare the efficiency of
$H_{k:m}$ to that of MCMC estimators, we add simplifying assumptions
as in \citet{jacob2017unbiased}. As $k$ increases and for $m\geq k$,
we expect $(m-k+1)\mathrm{MSE}_{k:m}$ to converge to $\mathds{V}[(m-k+1)^{-1/2}\sum_{t=k}^{m}h\left(Z_{t}\right)]:=V_{k,m}$
as $m\rightarrow\infty$, where $Z_{k}\sim\pi$. We will make the
simplifying assumption that $\mathrm{MSE}_{k:m}\approx V_{k,m}/(m-k+1)$
for $k$ large enough. As the condition $5/6<\alpha$ is equivalent
to $\kappa>6$, $\mathds{E}\left[\mathrm{BC}_{k:m}^{2}\right]$ will
be negligible compared to the two other terms appearing on the right
hand side of (\ref{eq:boundonvariance}), so
we obtain the approximate inequality
\begin{align*}
\mathds{E}[2(\tau-1)+&\max(1,m-\tau+1)]\mathds{V}[H_{k:m}]\lessapprox\frac{m}{m-k+1}V_{k,m}\\&+2m\sqrt{BV_{k,m}}\sqrt{\frac{1}{\left(m-k+1\right)m^{\kappa/2-1}}+\frac{1}{(m-k+1)^{3}}\frac{1}{k^{\kappa/2-3}}},
\end{align*}
where the cost of $H_{k:m}$ is approximated by the cost of $m$ calls to $P$.
For the left-hand side to be comparable to $V_{k,m}$, we can select
$m$ as a large multiple of $k$ such that $m/(m-k+1)$ is close to one.
The second term on the right-hand side is then negligible as $k$ increases, and we see that the polynomial index determining the rate of decay is monotonic in $\kappa$.

\section{Unbiased pseudo-marginal MCMC \label{sec:Unbiased-pseudo-marginal-method}}

\subsection{Pseudo-marginal Metropolis\textendash Hastings}

The pseudo-marginal approach \citep{Lin_2000,Beaumont_2003,andrieu2009pseudo}
generates Markov chains that target a distribution of interest, while
using only non-negative unbiased estimators of target density evaluations.
For concreteness we focus on target distributions that are posterior
distributions in a standard Bayesian framework. The likelihood function
associated to data $y\in\mathcal{Y}$ is denoted by $\theta\mapsto p(y|\theta)$,
and a prior density $\theta\mapsto p\left(\theta\right)$ w.r.t. the Lebesgue
measure is assigned to an unknown parameter $\theta\in\Theta\subseteq\mathbb{R}^{D}$.
We assume that we can compute a non-negative unbiased estimator of
$p(y|\theta)$, for all $\theta$, denoted by $\hat{p}(y|\theta,U)$
where $U\in\mathcal{U}\subset\mathbb{R}^{M}$ are random variables
such that $U\sim m_{\theta}(du)$, where for any $\theta\in\Theta$,
$m_{\theta}$ denotes a Borel probability measure on $\mathcal{U}$.
We assume that $m_{\theta}(du)$ admits a density with respect to
the Lebesgue measure denoted by $u\mapsto m_{\theta}(u)$. The random
variables $U$ represent variables required in the construction of
the unbiased estimator of $p(y|\theta)$. The pseudo-marginal algorithm
targets a distribution with density
\begin{equation}
(\theta,u)\mapsto\pi(\theta,u)=p(\theta\mid y)\frac{\widehat{p}(y\mid\theta,u)}{p(y\mid\theta)}m_{\theta}\left(u\right).\label{eq:pdpost}
\end{equation}
The generated Markov chain $(Z_{n})_{n\geq0}$ takes values in $\mathcal{Z}=\Theta\times\mathcal{U}$.
Since $\int\widehat{p}(y\mid\theta,u)m_{\theta}\left(u\right)du=p(y|\theta)$
for all $\theta$, marginally $\pi(\theta)=\int\pi(\theta,u)du=p(\theta\mid y)$,
corresponding to the target of interest for the $\theta$ component
of $(Z_{n})_{n\geq0}$. Sampling from $\pi(d\theta,du)$ is achieved
with an MH scheme, with proposal $q\left(d\theta'|\theta\right)m_{\theta'}\left(du'\right)$.
This results in an acceptance probability that simplifies to
\begin{equation}
\alpha_{\mathrm{PM}}\left\{ \left(\theta,\widehat{p}(y\mid\theta,u)\right),(\theta',\widehat{p}(y\mid\theta',u'))\right\} :=\min\left\{ 1,\frac{\widehat{p}(y\mid\theta',u')p(\theta')q\left(\theta|\theta'\right)}{\widehat{p}(y\mid\theta,u)p(\theta)q\left(\theta'|\theta\right)}\right\} ,\label{eq:PM:acceptance}
\end{equation}
which does not involve any evaluation of $u\mapsto m_{\theta}(u)$.
Thus the algorithm proceeds exactly as a standard MH algorithm with
proposal density $q(\theta'|\theta)$, with the difference that likelihood
evaluations $p(y|\theta)$ are replaced by estimators $\hat{p}(y|\theta,U)$
with $U\sim m_{\theta}(\cdot)$. The performance of the pseudo-marginal
algorithm depends on the likelihood estimator: lower variance estimators
typically yield ergodic averages with lower asymptotic variance, but the cost of producing lower variance
estimators tends to be higher which leads to a trade-off analyzed
in detail in \citet{doucet2015biometrika,schmon2018large}.

In the following we will generically denote by $g_{\theta}$ the distribution
of $\widehat{p}(y\mid\theta,U)$ when $U\sim m_{\theta}(\cdot)$,
and for notational simplicity, we might write $\widehat{p}(y\mid\theta)$
instead of $\widehat{p}(y\mid\theta,U)$. The above description defines
a Markov kernel $P$ and we next proceed to defining a coupled kernel
$\bar{P}$, to be used for unbiased estimation as in Algorithm \ref{alg:Hkm}.

\subsection{Coupled pseudo-marginal Metropolis\textendash Hastings}

To define a kernel $\bar{P}$ that is marginally identical to $P$
but jointly allows the chains to meet, we proceed as follows, mimicking
the coupled MH kernel in Algorithm \ref{alg:coupledMH}. First, the
proposed parameters are sampled from a maximal coupling of the two
proposal distributions. If the two proposed parameters $\theta'$
and $\tilde{\theta}'$ are identical, we sample a unique likelihood
estimator $\widehat{p}(y\mid\theta')\sim g{}_{\theta'}\left(\cdot\right)$
and we use it in the acceptance step of both chains. Otherwise, we
sample two estimators, $\widehat{p}(y\mid\theta')\sim g{}_{\theta'}\left(\cdot\right)$
and $\widehat{p}(y\mid\tilde{\theta}')\sim g{}_{\tilde{\theta}'}\left(\cdot\right)$.
Denoting the two states of the chains at step $n\geq1$ by $(\theta_{n},\widehat{p}(y\mid\theta_{n}))$
and $(\tilde{\theta}_{n-1},\widehat{p}(y\mid\tilde{\theta}_{n-1}))$,
Algorithm \ref{alg:coupledPMMH} describes how to obtain $(\theta_{n+1},\widehat{p}(y\mid\theta_{n+1}))$
and $(\tilde{\theta}_{n},\widehat{p}(y\mid\tilde{\theta}_{n}))$;
thereby describing a kernel $\bar{P}$.

\begin{algorithm}[H]
\caption{\textbf{Sampling from the coupled pseudo-marginal MH kernel given
$\left\{ (\theta_{n},\widehat{p}(y\mid\theta_{n})),(\tilde{\theta}_{n-1},\widehat{p}(y\mid\tilde{\theta}_{n-1}))\right\} $.}
\label{alg:coupledPMMH}}

\begin{enumerate}
\item \textsf{Sample $\theta'$ and $\tilde{\theta}'$ from a maximal coupling
of $q\left(\cdot|\theta_{n}\right)$ and }$q(\cdot|\tilde{\theta}{}_{n-1})$.
\item \textsf{If $\theta'=\tilde{\theta}'$, then sample $\widehat{p}(y\mid\theta')\sim g{}_{\theta'}\left(\cdot\right)$
and set }$\widehat{p}(y\mid\tilde{\theta}')=\widehat{p}(y\mid\theta')$.\textsf{}\\
\textsf{Otherwise sample $\widehat{p}(y\mid\theta')\sim g{}_{\theta'}\left(\cdot\right)$
and }$\widehat{p}(y\mid\tilde{\theta}')\sim g{}_{\tilde{\theta}'}\left(\cdot\right)$.
\item \textsf{Sample $\mathsf{\mathit{\mathfrak{u}}}\sim\mathcal{U}\left[0,1\right]$.}
\item \textsf{If $\mathit{\mathfrak{u}}<\alpha_{\mathrm{PM}}\left\{ \left(\theta_{n},\widehat{p}(y\mid\,\theta_{n})\right),(\theta',\widehat{p}(y\mid\theta'))\right\} $
then set $(\theta_{n+1},\widehat{p}(y\mid\theta_{n+1}))=\left(\theta',\widehat{p}(y\mid\theta')\right).$
}\\
\textsf{Otherwise, set }$(\theta_{n+1},\widehat{p}(y\mid\theta_{n+1}))=(\theta_{n},\widehat{p}(y\mid\theta_{n})).$
\item \textsf{If $\mathit{\mathfrak{u}}<\alpha_{\mathrm{PM}}\left\{ (\tilde{\theta}_{n-1},\widehat{p}(y\mid\,\tilde{\theta}_{n-1}),(\tilde{\theta}',\widehat{p}(y\mid\tilde{\theta}'))\right\} $
then set $(\tilde{\theta}_{n},\widehat{p}(y\mid\tilde{\theta}_{n}))=(\tilde{\theta}',\widehat{p}(y\mid\tilde{\theta}')).$
}\\
\textsf{Otherwise, set }$(\tilde{\theta}_{n},\widehat{p}(y\mid\tilde{\theta}_{n}))=(\tilde{\theta}_{n-1},\widehat{p}(y\mid\tilde{\theta}_{n-1})).$
\item \textsf{Return $\left\{ (\theta_{n+1},\widehat{p}(y\mid\theta_{n+1})),(\tilde{\theta}_{n},\widehat{p}(y\mid\tilde{\theta}_{n}))\right\}$.}
\end{enumerate}
\end{algorithm}

In step 2. of Algorithm \ref{alg:coupledPMMH} the two likelihood
estimators $\widehat{p}(y\mid\theta')$ and $\widehat{p}(y\mid\tilde{\theta}')$
can be generated independently, as we will do below for simplicity.
They can also be sampled together in a way that induces positive correlations,
for instance using common random numbers and other methods described
in \citet{deligiannidis2015correlated,jacob2017smoothing}. We leave
the exploration of possible gains in correlating likelihood estimators
in that step as a future avenue of research. An appealing aspect of
Algorithm \ref{alg:coupledPMMH}, particularly when using independent
estimators in step 2., is that existing implementation of likelihood
estimators can be readily used. In Section \ref{subsec:Neuroscience-experiment}
we will exploit this by demonstrating the use of controlled sequential
Monte Carlo \citep{heng2017controlled} in the proposed framework.
Likewise, one could explore the use of other advanced particle filters
such as sequential quasi Monte Carlo \citep{gerber2015sequential}.
To summarize, given an existing implementation of a pseudo-marginal
kernel, Algorithm \ref{alg:coupledPMMH} involves only small modifications
and the extra implementation of a maximal coupling which itself is
relatively simple following, for example, \citet{jacob2017unbiased}.
\begin{remark}
It is worth remarking that the proposed coupling based on maximally coupling the proposals may be sub-optimal, especially in high-dimensional problems where the overlap of the proposals may be quite small. In such cases one may consider more sophisticated couplings, for example reflection couplings, see e.g.\ \cite{bou2018coupling} for an application to Hamiltonian Monte Carlo; see also \cite{heng2017unbiased} and references therein.
\end{remark}
\subsection{Theoretical guarantees}

%
We provide sufficient conditions to ensure that the coupled pseudo-marginal
algorithm returns unbiased estimators with finite variance and finite expected
computation time, i.e. sufficient conditions to satisfy the requirements of  Theorem
\ref{thm:pmmeetingtimes} are provided.
By introducing the
parameterization $w=\hat{p}(y|\theta,u)/p(y|\theta)$ and using the
notation $w\sim\bar{g}_{\theta}\left(\cdot\right)$ when $u\sim m_{\theta}\left(\cdot\right)$,
we can rewrite the pseudo-marginal kernel
\begin{align*}
P\left(\left(\theta,w\right),(d\theta',dw'\right))=q\left(\theta,\theta'\right)&\bar{g}_{\theta'}\left(w'\right)\alpha_{\mathrm{PM}}\left\{ \left(\theta,w\right),(\theta',w')\right\} d\theta'dw'\\&+\varrho_{\mathrm{PM}}\left(\theta,w\right)\delta_{\left(\theta,w\right)}\left(d\theta',dw'\right),
\end{align*}
where, in this parameterization, we write
\[
\alpha_{\mathrm{PM}}\left\{ \left(\theta,w\right),(\theta',w')\right\} =\min\left\{ 1,\frac{\pi\left(\theta'\right)}{\pi\left(\theta\right)}\frac{q(\theta',\theta)}{q(\theta,\theta')}\frac{w'}{w}\right\} ,
\]
and $\varrho_{\mathrm{PM}}\left(\theta,w\right)$ is the corresponding
rejection probability.
We first make assumptions about the target and proposal densities.
\begin{assumption}
\label{assu:The-posterior-density}The target posterior density $\theta \mapsto \pi\left(\theta\right)$
is strictly positive everywhere and continuously differentiable\textup{.
}\textup{\emph{Its tails are super-exponentially decaying and have
regular contours, that is,
\[
\lim_{\left|\theta\right|\rightarrow\infty}\frac{\theta}{\left|\theta\right|}.\nabla\log\pi\left(\theta\right)=-\infty,\qquad\limsup_{\left|\theta\right|\rightarrow\infty}\frac{\theta}{\left|\theta\right|}.\frac{\nabla\pi\left(\theta\right)}{\left|\nabla\pi\left(\theta\right)\right|}<0,
\]
where $\left|\theta\right|$ denotes the Euclidean norm of $\theta$.
Moreover, the proposal distribution satisfies $q\left(\theta,A\right)=\int_{A}q\left(\theta'-x\right)d\theta'$
with a bounded, symmetric density $q$ that is bounded away from zero
on all compact sets. }}
\end{assumption}
We then make assumptions about the moments of the noise.
\begin{assumption}
\label{assu:There-exist-constants}There exist constants $a'>0$ and
$b'>1$ such that
\[
M_{W}:=\textrm{ess\,sup}_{\theta\in\Theta}\int_{\mathds{R^{+}}}\max\left(w^{-a'},w^{b'}\right)\bar{g}_{\theta}(dw)<\infty,
\]
where the essential supremum is taken with respect to the Lebesgue
measure. Additionally the family of distributions defined
by the densities $\bar{g}_{\theta}$ is continuous with respect to
$\theta$ in the topology of weak convergence.
\end{assumption}
Both assumptions are used in \citep{andrieu2015convergence} to establish a drift condition for the pseudo-marginal algorithm.
Assumption \ref{assu:The-posterior-density} can be understood as a condition on the `ideal' algorithm, i.e. if the likelihood could be evaluated exactly, and Assumption \ref{assu:There-exist-constants} ensures the likelihood estimate has neither too much mass around zero nor in the tails.
The following proposition follows from establishing minorization conditions for both the pseudo-marginal and coupled pseudo-marginal kernels along with \citep[Theorem 38]{andrieu2015convergence}.

\begin{prop}
\label{prop:resultsforpseudo}Under Assumptions \ref{assu:The-posterior-density}
and \ref{assu:There-exist-constants}  then Equations \eqref{eq:drift-2}, \eqref{eq:drift1-2}
and \eqref{eq:drift2-2} hold for any $\chi\in$$(0,\min\left(1,a'\right))$,
$a\in\left(\chi,a'\right]$ and $b\in(0,b'-\chi)$ for the drift function defined as
\[
V\left(\theta,w\right):=\left\{ \sup_{\theta\in\mathbb{R^{d}}}\pi\left(\theta\right)\right\} ^{\chi}\pi^{-\chi}\left(\theta\right)\max(w^{-a},w^{b}),
\]
where $\alpha=1-1/b$ and $C=\left\{ \left(\theta,w\right)\in\Theta\times\mathbb{R}^{+}:|\theta|\leq M,w\in[\underline{w},\overline{w}]\right\}$ for
some constants $M\geq1$, $\underline{w}\in(0,1]$ and $\overline{w}>\underline{w}$.
Additionally the minorization conditions \eqref{ass:joint_minor} and \eqref{ass:marginal_minor} hold
for the same $C$ and $n_0=1$. Finally if $\varrho_{\mathrm{PM}}\left(\theta,w\right)<1$
for all $\theta,w$ and if for some $\theta\in B\left(0,M\right)$
we have $\int_{\underline{w}}^{\overline{w}}\bar{g}_{\theta}(w)wdw>0$
then Assumptions \ref{ass:pbarproperties} and \ref{ass:minordrift} hold with the same $C$ for the kernel $\bar{P}$ induced by Algorithm \ref{alg:coupledPMMH}.
\end{prop}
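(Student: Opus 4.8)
My plan is to prove the statement in three parts, matching the content of Assumptions \ref{ass:minordrift} and \ref{ass:pbarproperties}: the polynomial drift \eqref{eq:drift-2}--\eqref{eq:drift2-2} for the marginal kernel $P$; the one-step minorization \eqref{ass:marginal_minor} for $P$ on $C$; and the one-step ``reach the diagonal'' minorization \eqref{ass:joint_minor} for $\bar P$ on $C\times C$ together with $\pi_{\mathcal D}$-irreducibility and aperiodicity of $\bar P$. All of this I would carry out in the $(\theta,w)$ parameterization introduced above, in which $\pi(\theta,w)=p(\theta\mid y)\,w\,\bar g_\theta(w)$, the set $C$ takes the stated form $\{|\theta|\le M,\ w\in[\underline{w},\overline{w}]\}$, and, using symmetry of $q$, $\alpha_{\mathrm{PM}}\{(\theta,w),(\theta',w')\}=\min(1,\pi(\theta')w'/(\pi(\theta)w))$.

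For the drift, I would observe that Assumptions \ref{assu:The-posterior-density} and \ref{assu:There-exist-constants} are a reformulation of the hypotheses used in \citep[Theorem 38]{andrieu2015convergence}, which supplies a small set $C_0$ of the form $\{|\theta|\le M_0,\ w\in[\underline{w}_0,\overline{w}_0]\}$ and constants with $PV\le V-c\,V^{1-1/b}+b_V\mathds{1}_{C_0}$, for $V$ the stated drift function and for the stated ranges of $\chi,a,b$. Setting $\phi(x)=d\,x^{\alpha}$ with $\alpha=1-1/b$ (which lies in $(0,1)$ once $b>1$, i.e.\ $b'-\chi>1$) and $d=c$ gives \eqref{eq:drift-2} for $C_0$. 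I would then enlarge $C_0$ to a set $C\supseteq C_0$ of the required form: since $V(\theta,w)\to\infty$ as $|\theta|\to\infty$ (the super-exponential decay of $\pi$ makes $\pi^{-\chi}(\theta)$ blow up) and as $w\to0$ or $w\to\infty$, one can take $M$ large, $\underline{w}$ small and $\overline{w}$ large so that \eqref{eq:drift-2} still holds (replacing $\mathds{1}_{C_0}$ by the larger $\mathds{1}_{C}$ only relaxes it), \eqref{eq:drift1-2} holds because $V$ is finite and continuous on $C$, and \eqref{eq:drift2-2} holds because $\inf_{z\notin C}\phi\circ V$ can be made arbitrarily large. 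The further requirements on $C$ below are again all of this ``enlarge $M$ / shrink $\underline{w}$ / grow $\overline{w}$'' type, so they can be met with a single choice of $C$; note in particular that the minorization must then be established directly on the enlarged $C$, not merely inherited from $C_0$.

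For the two minorizations I would exploit that everything in sight is bounded on $C$ (resp.\ on $C\times C$). On $C$, $q$ is bounded below on the relevant compact displacements, $\pi$ is bounded on $\mathbb{R}^d$ and bounded below on $B(0,M)$, and Markov's inequality with Assumption \ref{assu:There-exist-constants} gives $\bar g_\theta(\{w<\underline{w}\})\le\underline{w}^{a'}M_W\le 1/2$ for $\underline{w}$ small, uniformly in $\theta$; restricting the absolutely continuous part of $P$ to proposals with $\theta'\in B(0,M)$ and $w'\ge\underline{w}$, on which $\alpha_{\mathrm{PM}}$ is bounded below uniformly over $C$, then yields \eqref{ass:marginal_minor} with an explicit $\epsilon_0>0$ and $\nu\propto\mathds{1}_{B(0,M)}(\theta')\mathds{1}_{\{w'\ge\underline{w}\}}\bar g_{\theta'}(dw')\,d\theta'$. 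For \eqref{ass:joint_minor} with $n_0=1$ I would lower-bound the probability, from $(z,\tilde z)\in C\times C$, of the single event in Algorithm \ref{alg:coupledPMMH} that makes the chains meet: the maximally coupled proposals agree at some $\vartheta\in B(0,M)$, the common ratio $w'$ exceeds $\underline{w}$, and both chains accept using the common uniform, on which event $(Z_{n+1},\tilde Z_n)\in\mathcal D$; the overlap $\int_{B(0,M)}\min(q(\vartheta-\theta),q(\vartheta-\tilde\theta))\,d\vartheta$ is bounded below on $C\times C$, both acceptance probabilities exceed a uniform positive constant there, and $\bar g_\vartheta(\{w'\ge\underline{w}\})\ge 1/2$, which multiply to the required $\epsilon>0$.

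For the remaining parts of Assumption \ref{ass:pbarproperties}: $P$ is $\psi$-irreducible and aperiodic by the standard Metropolis--Hastings argument, using $q>0$ and $\pi>0$ everywhere, $\varrho_{\mathrm{PM}}<1$, and $\varrho_{\mathrm{PM}}(\theta,w)\to1$ as $w\to\infty$ (so the holding probability is positive on a $\psi$-positive set); the extra hypothesis $\int_{\underline{w}}^{\overline{w}}\bar g_\theta(w)\,w\,dw>0$ for some $\theta\in B(0,M)$, combined with the weak continuity of $\theta\mapsto\bar g_\theta$, makes $\pi(C)=\int_{B(0,M)}p(\theta\mid y)\int_{\underline{w}}^{\overline{w}}w\,\bar g_\theta(w)\,dw\,d\theta>0$, so $C$ is a $\pi$-positive small set. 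To get $\pi_{\mathcal D}$-irreducibility I would note that from \emph{any} $(z,\tilde z)$ the coupled chain reaches $\mathcal D$ in one step with positive probability (the overlap of $q(\cdot\mid\theta)$ and $q(\cdot\mid\tilde\theta)$ is positive since $q>0$, and the ``both accept'' probability is positive since $w'>0$ almost surely under $\bar g_\vartheta$ and $\pi>0$), landing at a point whose law is absolutely continuous with full support on $\Theta\times\mathbb{R}^+$; and that once on $\mathcal D$ the kernel $\bar P$ acts as $P$ on the diagonal (equal proposals, a common $w'$ and a common uniform), so reaching a set $A\subset\mathcal D$ with $\pi_{\mathcal D}(A)>0$ reduces to the $\psi$-irreducible marginal chain reaching the corresponding $\pi$-positive subset of $\mathcal Z$ --- a one-step-then-$n$-step decomposition then gives the claim, and aperiodicity of $\bar P$ follows from that of $P$ and this irreducibility. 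I expect the main obstacle to be twofold: matching Assumptions \ref{assu:The-posterior-density}--\ref{assu:There-exist-constants} exactly to the hypotheses of \citep[Theorem 38]{andrieu2015convergence} so that the drift function and the exponent $\alpha=1-1/b$ come out as stated, and the irreducibility/aperiodicity of the coupled kernel, which relies on $\bar P$ reproducing the marginal dynamics on the diagonal; the two minorization bounds are routine boundedness estimates by comparison.
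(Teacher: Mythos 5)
Your proposal follows essentially the same overall architecture as the paper's proof: import the polynomial drift from Theorem 38 of Andrieu and Vihola and adjust $C$, prove the one-step minorization of $P$ on $C$ directly, prove the one-step minorization of $\bar P$ onto $\mathcal{D}$ via the event that the maximally coupled proposals agree, both chains receive the same $w'$, and both accept using the common uniform, and finally obtain $\pi_{\mathcal D}$-irreducibility by a one-step-then-$n$-step decomposition exploiting that $\bar P$ reduces to $P$ on the diagonal. Your observations that the minorization must be re-proved on the enlarged $C$ rather than inherited from the $C_0$ of Andrieu--Vihola, and that $\alpha = 1 - 1/b \in (0,1)$ requires $b > 1$ and hence $b' - \chi > 1$, are both accurate and in fact a bit more explicit than the paper itself.

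There is one genuine methodological difference in the marginal minorization, and one genuine gap. The difference: you lower-bound the mass of $\bar g_{\theta'}$ on $\{w' \geq \underline{w}\}$ via Markov's inequality from the uniform negative moment $M_W$ (so $\bar g_\theta(\{w<\underline w\}) \leq M_W\underline w^{a'} \leq 1/2$ for small $\underline w$), and then take $\nu\propto \mathds{1}_{B(0,M)}(\theta')\mathds{1}_{\{w'\geq\underline w\}}\bar g_{\theta'}(w')$. The paper instead absorbs the factor $\min(1,w'/\overline w)$ into a $\theta'$-dependent normalizing constant $Z(\theta')$ and shows $\inf_{|\theta|\le M}Z(\theta)>0$ by a compactness-plus-weak-continuity-plus-contradiction argument. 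Both are valid; yours is shorter and uses the moment bound that is already assumed, whereas the paper's makes heavier use of the weak-continuity part of Assumption \ref{assu:There-exist-constants} (which is still needed in both proofs to establish $\pi(C)>0$). The gap: your claim that aperiodicity of $\bar P$ ``follows from that of $P$ and this irreducibility'' is not, on its own, a proof. $\psi$-irreducibility never implies aperiodicity, and the minorization \eqref{ass:joint_minor} is onto a set ($\mathcal{D}$), not a fixed minorizing measure, so the usual ``small set with $n_0$ and $n_0+1$'' argument does not apply verbatim. The paper handles this with a separate step: it shows $\pi_{\mathcal D}(\mathcal D_C)>0$ (using the hypothesis $\int_{\underline w}^{\overline w}\bar g_\theta(w)\,w\,dw>0$ together with weak continuity) and then establishes $\inf_{z\in\mathcal D_C}\bar P(z,\mathcal D)\geq\epsilon'$ by repeating the joint-minorization calculation from the diagonal, which rules out nontrivial periodic decompositions. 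You would need to insert an argument of this kind; the rest of your proposal is sound.
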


If the assumptions of Proposition \ref{prop:resultsforpseudo} are satisfied for $a',b'$ such that $b'-\min\left(1,a'\right) > 6$ then, by application of Theorem \ref{thm:pmmeetingtimes}, the coupling times exhibit the required tail bounds of Assumption \ref{assumption:meetingtime} with $\alpha>5/6$ - provided also $\pi_{0}$ admits a density with respect to $\pi$ and is
supported on a compact set. We note that the uniform moments bounds of Assumption \ref{assu:There-exist-constants} might not be satisfied in many non-compact parameter spaces. A weaker assumption allowing to satisfy the polynomial drift condition is provided in \citet[Condition 44]{andrieu2015convergence} and could be alternatively used here.

\section{Experiments with coupled pseudo-marginal kernel}
We next present two examples where we are able to verify the conditions guaranteeing the validity of the estimators.

\subsection{Tails of meeting times in a toy experiment\label{subsec:pm-meetings-experiment}}

We provide numerical experiments on the tails of the
meeting time $\tau$ in a toy example, to illustrate the transition from geometric to polynomial tails.
The target $\pi$ is a bivariate Normal
distribution $\mathcal{N}({\mu},{I})$, with ${\mu}=(1,2)\in\mathbb{R}^{2}$ and identity covariance matrix;
the initial distribution $\pi_{0}$ is uniform over the unit square.
Although we can evaluate $\theta\mapsto\pi(\theta)$, in order to emulate the pseudo-marginal setting, we assume instead we have access for each $\theta$ to an unbiased estimator $\hat{\pi}(\theta,W)$ of $\pi(\theta)$, of the form $\hat{\pi}(\theta,W)=\pi(\theta)\times W$ where $W$ is a log-Normal variable; that is $\log W \sim\mathcal{N}(-\sigma^{2}/2,\sigma^{2})$ with $\sigma$ calibrating the precision of $\hat{\pi}(\theta,W)$
of $\pi(\theta)$. We consider a pseudo-marginal Metropolis--Hastings
algorithm with proposal distribution $q(d\theta'|\theta)=\mathcal{N}(d\theta';\theta,I)$,
and a coupled version following Algorithm \ref{alg:coupledPMMH}.
Indeed, in this simplified setting we are able to verify Assumptions \ref{assu:The-posterior-density} and \ref{assu:There-exist-constants} directly.
We note that in the case $\sigma=0$, we recover the standard MCMC setting.

We draw $R=10^5$ independent realizations of the meeting time for
$\sigma$ in a grid of values $\{0,0.5,1,1.5,2\}$. We then approximate tail probabilities
$\mathds{P}(\tau>n)$ by empirical counterparts, for $n$
between $1$ and the $99.9\%$ quantile of the meeting times for each
$\sigma$. The resulting estimates of $\mathds{P}(\tau>n)$ are plotted
against $n$ in Figure \ref{fig:pm-meeting:survprobs}, where the $y$-axis
is in log-scale. First note that in the case $\sigma=0$, $\log\mathds{P}(\tau>n)$
seems to be bounded by a linear function of $n$, which would correspond to
$\mathds{P}(\tau>n)\leq K\rho^{n}$ for some constants $K<\infty$
and $\rho\in(0,1)$. This is indeed the expected behavior in the case
of geometrically ergodic Markov chains \citep{jacob2017unbiased}.

As $\sigma$ increases, $\mathds{P}(\tau>n)$ decreases less rapidly
as a function of $n$. To verify whether $\mathds{P}(\tau>n)$
might be bounded by $Kn^{-\kappa}$ (as our theoretical considerations
suggest), we plot $\mathds{P}(\tau>n)$ against $n$ with both axes in log-scale
in Figure \ref{fig:pm-meeting:survprobs:loglog}, with a focus on the tails, with $n\geq20$.
The figure confirms that $\log \mathds{P}(\tau>n)$ might indeed by upper bounded by $\kappa \log n$, up to a constant offset,
for large enough values of $n$.
The figure suggests also that in this case $\kappa$ decreases with $\sigma$.

\begin{figure}
\begin{centering}
\subfloat[\label{fig:pm-meeting:survprobs}]{\includegraphics[width=0.4\textwidth]{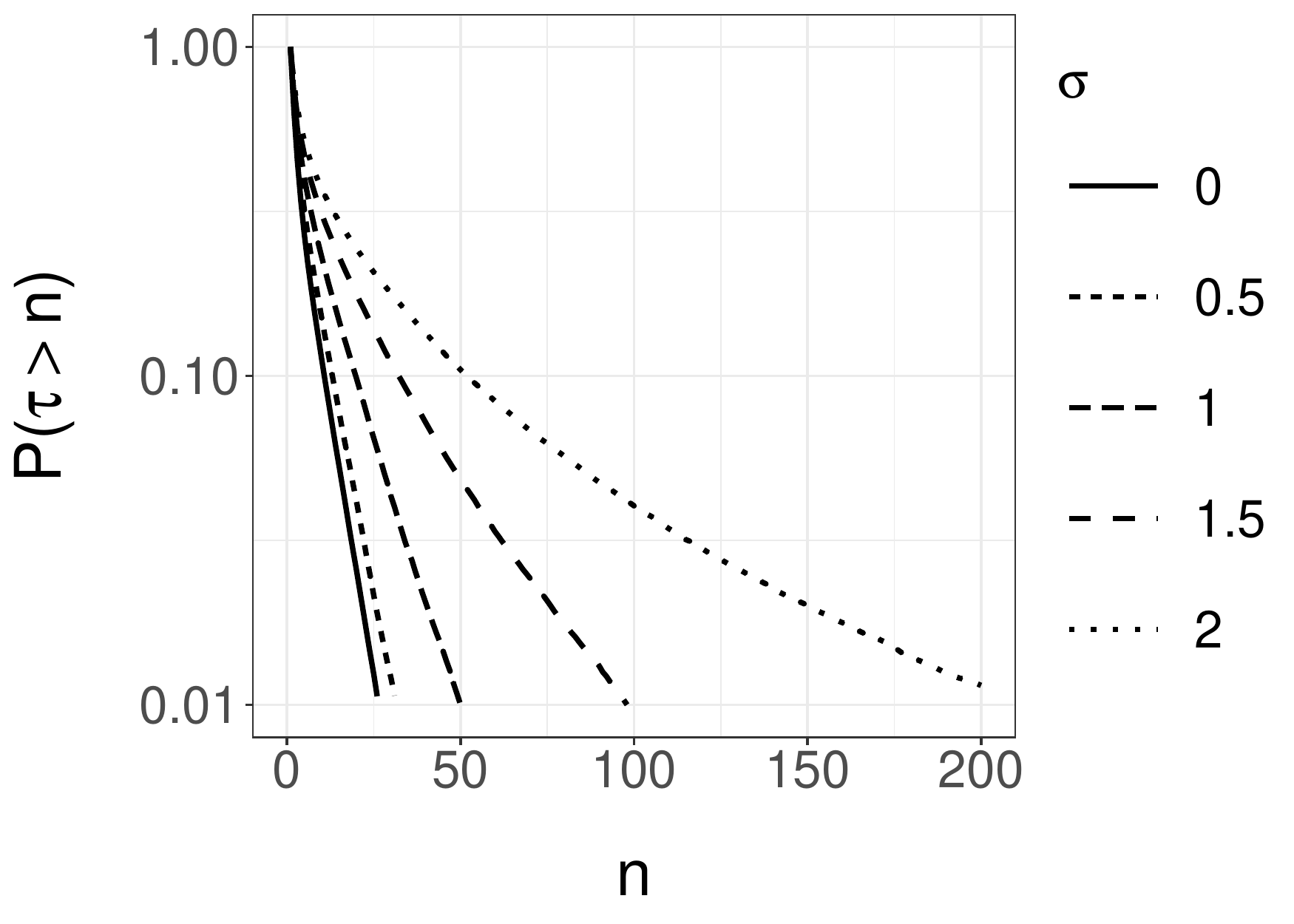}

}\hspace{1cm}\subfloat[\label{fig:pm-meeting:survprobs:loglog}]{\includegraphics[width=0.4\textwidth]{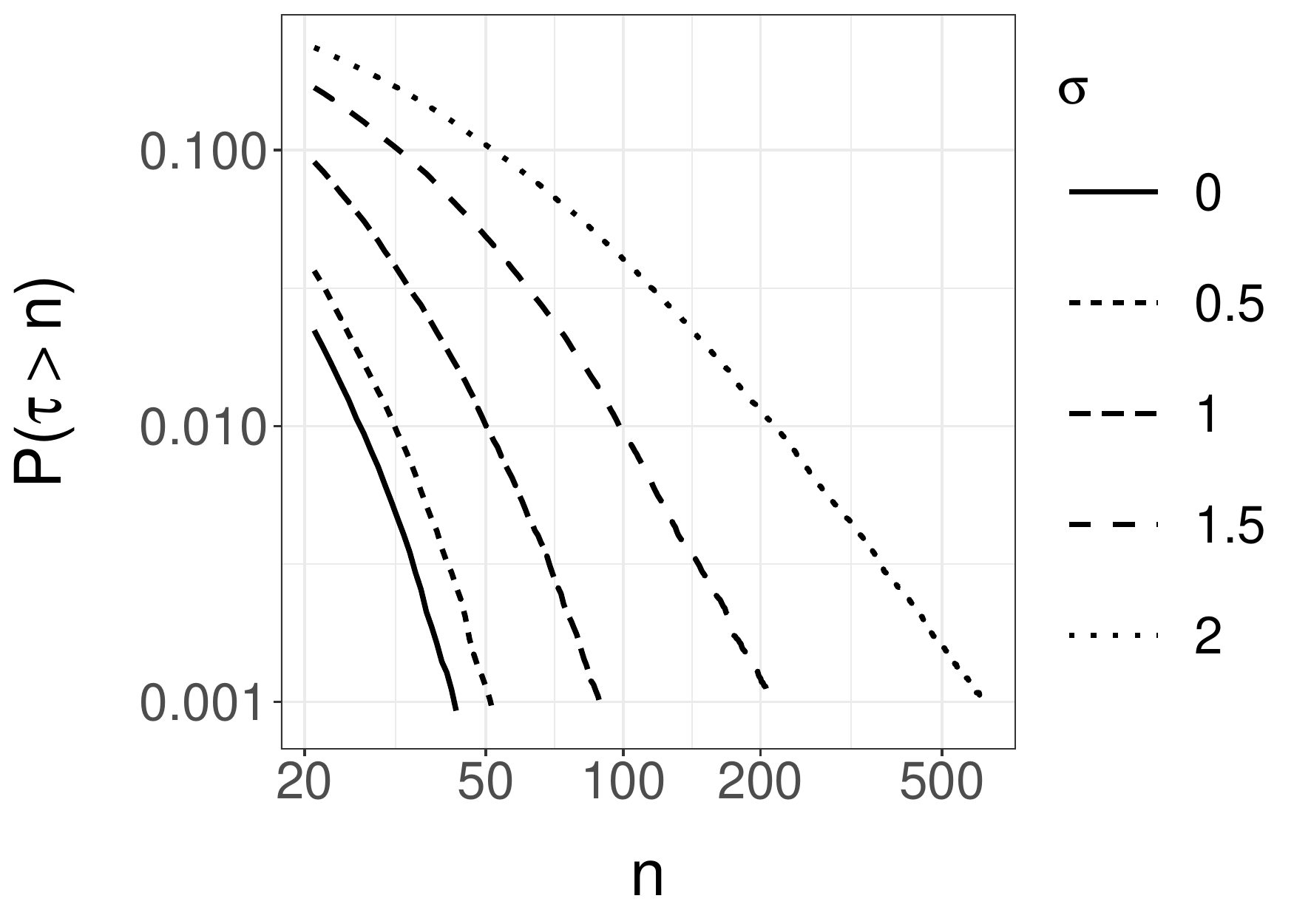}}
\par\end{centering}
\caption{Survival probabilities of the meeting time $\mathds{P}(\tau>n)$ along
$n$, approximated with $10,000$ copies of the meeting times in the pseudo-marginal
toy example of Section \ref{subsec:pm-meetings-experiment}.  Left:
y-axis in log-scale and x-axis in natural scale. Right: log-scale
for both axes, and restriction to $n\geq 20$, in order to focus on the tails. Each
line corresponds to a different value of $\sigma$, which calibrates the amount of noise
in the estimators of target density evaluations.
\label{fig:pm-meeting:bla} }
\end{figure}

\subsection{Beta-Bernoulli model}
\subsubsection{Model description}
\label{sec:randomeffects}
We consider here a random effect model such that, for $t=1,\ldots,T$,
\begin{equation}
X_{t}\stackrel{i.i.d.}{\sim}f_{\theta}(\cdot),\quad\quad\quad\quad Y_{t}|\{X_{t}=x\}\sim g_{\theta}(\cdot|x).\label{eq:SSM-1}
\end{equation}
The likelihood of data $y=(y_{1},...,y_{T})$ is
of the form $p(y|\theta)=\prod_{t=1}^{T}p(y_{t}|\theta)$ where $p(y_{t}|\theta)=\int f_{\theta}(dx)g_{\theta}(y_{t}|x)$
and the likelihood estimator is given by $\hat{p}(y|\theta)=\prod_{t=1}^{T}\hat{p}(y_{t}|\theta)$,
where $\{\hat{p}(y_{t}|\theta)\}_{t=1,\ldots,T}$ are $T$ independent
non-negative unbiased likelihood estimators of $\{p(y_{t}|\theta)\}_{t=1,\ldots,T}$.
These are importance sampling
estimators using a proposal $q_{\theta}(x|y)$ detailed below.

We focus on a Beta-Bernoulli model in which the likelihood is tractable; the latent states
$x_t\in\mathsf{X}=[0,1]$ and observations $y_t\in\{0,1\}$ are such that
\[
f_{\theta}(x_t)=\text{Beta}(x_t;\alpha,\beta),\quad g(y_{t}|x_t)=x_t^{y_{t}}(1-x_t)^{1-y_{t}},
\]
where $\text{Beta}(x;\alpha,\beta)=\text{B}(\alpha,\beta)^{-1}x^{\alpha-1}(1-x)^{\beta-1}$
and $\text{B}(\alpha,\beta)$ denotes the Beta function.

The marginal likelihood of a single observation is given by
$p(y_{t}|\theta)=\alpha^{y_t} \beta^{1-y_t}/(\alpha+\beta)$, and therefore the full marginal likelihood is 
\begin{equation*}
p(y_1, \dots, y_T|\theta) = \frac{\alpha^{T'} \beta^{T-T'}}{(\alpha +\beta)^T}, \qquad 
T'=\sum_{t=1}^T \mathds{1}[y_t =1].
\end{equation*}
Since the likelihood is uniquely determined by the ratio $\beta/\alpha$, we fix
$\alpha>0$ and thus our parameter is given by $\theta=\beta$. We allow $\beta$ to vary in the interval
$\beta\in\Theta = [\underline{\beta},\overline{\beta}]$ bounded away from 0 and
$\infty$. 

We consider likelihood estimator employing the following importance proposal,
\begin{align*}
q_{\theta}(x_t|y_t) & =\begin{cases}
\text{Beta}\left(x_t;1+\alpha,\beta(1+\epsilon)\right) & \quad\text{if}\quad y_t=1,\\
\text{Beta}\left(x_t;\alpha(1+\epsilon),1+\beta\right) & \quad\text{if}\quad y_t=0.
\end{cases}
\end{align*}

Recall that Assumption~\ref{assu:The-posterior-density} was introduced in
\cite{jarner2000geometric} where it was shown to imply geometric ergodicity of
random walk Metropolis. In the present scenario, the state space $\Theta$ of
the marginal algorithm is compact, whence we easily obtain that the marginal
random walk Metropolis algorithm is even \emph{uniformly ergodic}, see for
example \cite[Example~15.3.2]{douc2018markov}.

To establish Assumption~\ref{assu:There-exist-constants}, we need to bound moments of $w=\hat{p}(y_t|\theta)/p(y_t|\theta)$ where
\begin{align*}
\mathds{E}\left[w^{c}\right]=\prod_{t=1}^T \mathds{E}\left[\left(\frac{\hat{p}(y_t|\theta)}{p(y_t|\theta)}\right)^{c}\right],&\quad \hat{p}(y_t|\theta)=\frac{1}{N} \sum_{i=1}^N  \omega(X^{i}_t,y_{t}),\quad \\ \omega(x_t,y_{t}) & =\frac{g(y_{t}|x_t)f_{\theta}(x_t)}{q_{\theta}(x_t|y_{t})}
\end{align*}
for $c>0$ with $X_t^{i}\stackrel{i.i.d.}{\sim} q_{\theta}(\cdot|y_t)$ for $i=1,\ldots,N$.
We have $p(y_{t}=1|\theta)=\alpha/(\alpha+\beta)$ and $p(y_{t}=0|\theta)=\beta/(\alpha+\beta)$, 
thus with $\bar{\omega}(x,y_t):=\omega(x,y_t)/p(y_t |\theta )$ we obtain
\begin{equation}
\bar{\omega}(x,y_{t}=1) \propto (1-x)^{-\varepsilon \beta}, \qquad \bar{\omega}(x,y_{t}=0)
\propto x^{-\alpha \varepsilon}.
\label{eq:incrementweight}
\end{equation}
We see that $\sup_{x\in\mathsf{X}} \bar{\omega}(x,y_{1})=\infty$
suggesting that the associated pseudo-marginal algorithm is not geometrically
ergodic; see \citet[Remark 34]{andrieu2015convergence}. Despite this, we have
$\lim_{\epsilon\rightarrow0}\bar{\omega}(x,y_{t})=1$ for any $\alpha,\beta>0$ and $x\in(0,1)$.
The next proposition, proven in Section~\ref{sec:proofofmoments} in the appendices,  verifies Assumption~\ref{assu:There-exist-constants}.
\begin{prop}
\label{prop:propmoments} For any $\epsilon>0$ and $y\in \{0,1\}$, there exists
$1<b'<1+\epsilon^{-1}$ such that
\[
\sup_{\theta\in\Theta}\mathds{E}_{q_{\theta}}\left[\bar{\omega}(X,y)^{b'}\right]<\infty\quad\text{and}\quad\sup_{\theta\in\Theta}\mathds{E}_{q_{\theta}}\left[\bar{\omega}(X, y)^{-a'}\right]<\infty,
\]
for any $a'>0$. 
Moreover, for any $b'>1$, there
exists $\epsilon$ sufficiently small such that
\[
\sup_{\theta\in\Theta}\mathds{E}_{q_{\theta}}\left[\bar{\omega}(X,y_{t})^{b'}\right]<\infty.
\]
\end{prop}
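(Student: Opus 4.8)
The plan is to compute $\bar\omega(x,y)$ in closed form for $y\in\{0,1\}$, reduce every moment to a ratio of Beta functions, read off the integrability threshold, and then use compactness of $\Theta$ for the uniformity. First I would substitute the definitions of $f_\theta$, $g$ and $q_\theta$ into $\omega(x,y)=g(y\mid x)f_\theta(x)/q_\theta(x\mid y)$. For $y=1$ the factors $x^{\alpha}$ cancel, leaving $\omega(x,1)=\frac{\text{B}(1+\alpha,\beta(1+\epsilon))}{\text{B}(\alpha,\beta)}(1-x)^{-\epsilon\beta}$, and dividing by $p(y_t=1\mid\theta)=\alpha/(\alpha+\beta)$ gives $\bar\omega(x,1)=c_1(\theta)(1-x)^{-\epsilon\beta}$ with $c_1(\theta)=\tfrac{\alpha+\beta}{\alpha}\tfrac{\text{B}(1+\alpha,\beta(1+\epsilon))}{\text{B}(\alpha,\beta)}$; symmetrically $\bar\omega(x,0)=c_0(\theta)x^{-\alpha\epsilon}$ with $c_0(\theta)=\tfrac{\alpha+\beta}{\beta}\tfrac{\text{B}(\alpha(1+\epsilon),1+\beta)}{\text{B}(\alpha,\beta)}$. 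This recovers the proportionalities in \eqref{eq:incrementweight} with explicit constants.

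Next I would compute moments by direct Beta integration. For $c>0$ and $X\sim q_\theta(\cdot\mid1)=\text{Beta}(1+\alpha,\beta(1+\epsilon))$ one gets
\[
\mathds{E}_{q_\theta}\!\left[\bar\omega(X,1)^{c}\right]=c_1(\theta)^{c}\,\frac{\text{B}\!\left(1+\alpha,\ \beta(1+\epsilon)-c\epsilon\beta\right)}{\text{B}\!\left(1+\alpha,\ \beta(1+\epsilon)\right)},
\]
which is finite exactly when $\beta(1+\epsilon)-c\epsilon\beta>0$, i.e. $c<1+\epsilon^{-1}$, a condition that does not involve $\beta$. For a negative power $c=-a'<0$ the second Beta argument is $\beta(1+\epsilon)+a'\epsilon\beta>0$ for every $a'>0$, so all negative moments are finite. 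The case $y=0$ is identical after replacing $(1-x)^{-\epsilon\beta}$ by $x^{-\alpha\epsilon}$ and the proposal by $\text{Beta}(\alpha(1+\epsilon),1+\beta)$; since $\alpha(1+\epsilon)-c\alpha\epsilon=\alpha(1+\epsilon(1-c))$, the threshold is again $c<1+\epsilon^{-1}$, and negative moments are always finite.

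Finally, for uniformity over $\theta=\beta\in\Theta=[\underline\beta,\overline\beta]$, I note that the maps $\beta\mapsto c_i(\theta)$ and $\beta\mapsto \text{B}(1+\alpha,\beta(1+\epsilon)-c\epsilon\beta)/\text{B}(1+\alpha,\beta(1+\epsilon))$ are continuous and finite on $\Theta$ — ratios of Beta functions evaluated at strictly positive arguments, positivity of $\beta(1+\epsilon)-c\epsilon\beta$ coming from $c<1+\epsilon^{-1}$ together with $\underline\beta>0$ — so each moment is a continuous function of $\beta$ on a compact set, hence has finite supremum. Choosing any $b'\in(1,1+\epsilon^{-1})$, which is nonempty since $1<1+\epsilon^{-1}$ for all $\epsilon>0$, gives $\sup_{\theta\in\Theta}\mathds{E}_{q_\theta}[\bar\omega(X,y)^{b'}]<\infty$, while the displayed formulas with $c=-a'$ give $\sup_{\theta\in\Theta}\mathds{E}_{q_\theta}[\bar\omega(X,y)^{-a'}]<\infty$ for every $a'>0$; for the last assertion, given $b'>1$ it suffices to take $\epsilon<(b'-1)^{-1}$ so that $b'<1+\epsilon^{-1}$ and apply the same bound. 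The computation is essentially bookkeeping: the only genuine point of care is getting the exponents in the Beta integrals right and observing that the integrability threshold is $\beta$-free, so that compactness of $\Theta$ upgrades pointwise finiteness to the uniform bound (and, via the factorization $\mathds{E}[w^{c}]=\prod_{t}\mathds{E}[(\hat p(y_t\mid\theta)/p(y_t\mid\theta))^{c}]$ together with convexity, to a bound on $M_W$ in Assumption \ref{assu:There-exist-constants}).
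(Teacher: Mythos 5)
Your proof is correct and follows essentially the same route as the paper: substitute the explicit Beta densities to get $\bar\omega(x,1)\propto(1-x)^{-\epsilon\beta}$ and $\bar\omega(x,0)\propto x^{-\alpha\epsilon}$, reduce the moment to a Beta integral, read off the $\beta$-free integrability threshold $c<1+\epsilon^{-1}$, and use compactness of $\Theta=[\underline\beta,\overline\beta]$ to upgrade pointwise finiteness to a uniform bound. The only cosmetic difference is that you evaluate the Beta integral exactly while the paper drops the bounded factor $(1-x)^{\alpha}\le 1$ to obtain an explicit upper bound and constructs a uniform majorant $x^{\delta-1}$ by hand rather than appealing to continuity-plus-compactness; both yield the same threshold and conclusion.
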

Through inspection of Proposition~\ref{prop:resultsforpseudo} we see that for
any $\chi\in(0,1)$ we obtain $\kappa=(1-\alpha)^{-1}\in(0,b'-\chi)$. 
Essentially higher, uniformly bounded, moments of the weights
translate to higher moments for the meeting time, and therefore tighter
polynomial bounds for the tail of $\tau$. As a result
we understand the latter part of the proposition qualitatively, in
that the better the proposal the more moments of the meeting time
are bounded and as such the lighter the tail of the meeting time.

\subsubsection{Experiments}
We simulated $T=100$ observations with $\alpha=1$ and $\beta=2$.
We set a uniform prior on $\beta$ on the interval $[0.1,10.0]$.

We ran $100,000$ independent coupled pseudo-marginal algorithms with a random walk proposal with standard deviation 2, employing the maximal coupling between proposals, as in Algorithm \ref{alg:coupledPMMH}.
Figure \ref{fig:post} shows the plot of the (unnormalised) posterior
distribution and contrasts this to the prior.
The distribution of the meeting times was examined
for $N=10$ and $\epsilon\in \{2^{-1},2^{-2},2^{-3},0\}$,  with $\epsilon=0$
corresponding to the exact algorithm where the likelihood is
evaluated exactly. The variance of the log-likelihood estimator
for $\theta=\{\beta\}$ at its true value was estimated to be $\{1.9,0.4,0.1,0\}$
for each of these values respectively, from 1,000 independent likelihood estimators.

The resulting tail probability $\mathds{P}(\tau>n)$ was examined
for the coupling algorithm and is displayed on a log-log scale in Figure \ref{fig:survprob}. In addition to plotting the tail probabilities
in Figure \ref{fig:survprob}, we also plot polynomials of
the form $Cn^{-\kappa'}$ which appear to bound each of the experiments in an attempt to estimate the the true index of the tail $\mathds{P}(\tau>n)$.
For the value of $\epsilon=2^{-3}$, corresponding to the green line,
the meeting times appear to be bounded by $C=2\cdot10^{6}$ and $\kappa'=6$,
therefore guaranteeing that the resulting estimators have finite variance, as per Proposition \ref{prop:george33}.
The remaining polynomials for $\epsilon\in\{2^{-1},2^{-2}\}$
had values $80n^{-2}$ and $2\cdot10^{3}n^{-3.5}$ respectively. In the case 
In all cases, the exponent is smaller in absolute value than $1+\epsilon^{-1}$, the bound predicted by Proposition \ref{prop:propmoments}, noting that $\kappa<b'<1+\epsilon^{-1}$.

\begin{figure}
\begin{centering}
\subfloat[]{\includegraphics[width=0.4\textwidth]{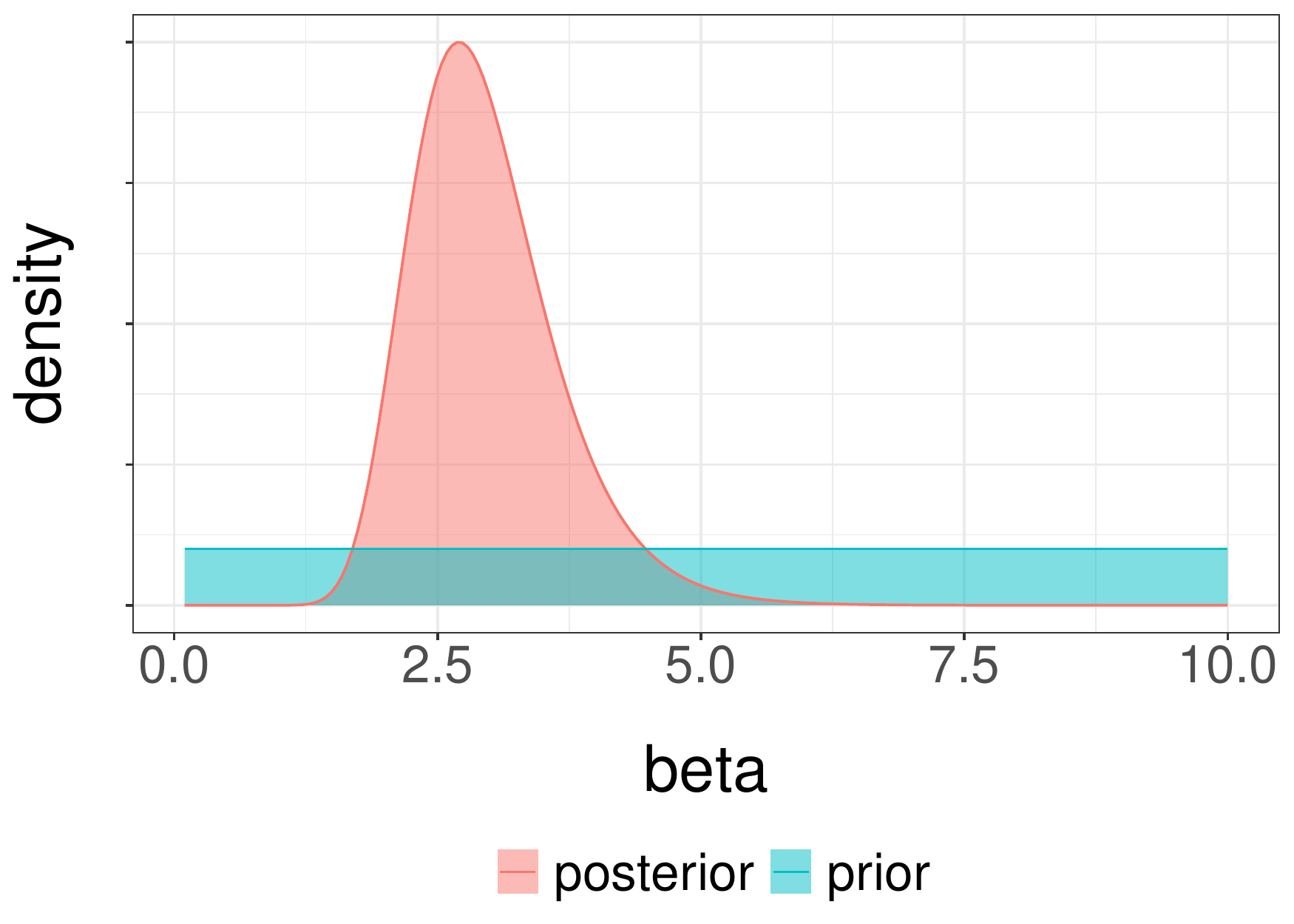}\label{fig:post}}\hspace{1cm}
\subfloat[]{\includegraphics[width=0.4\textwidth]{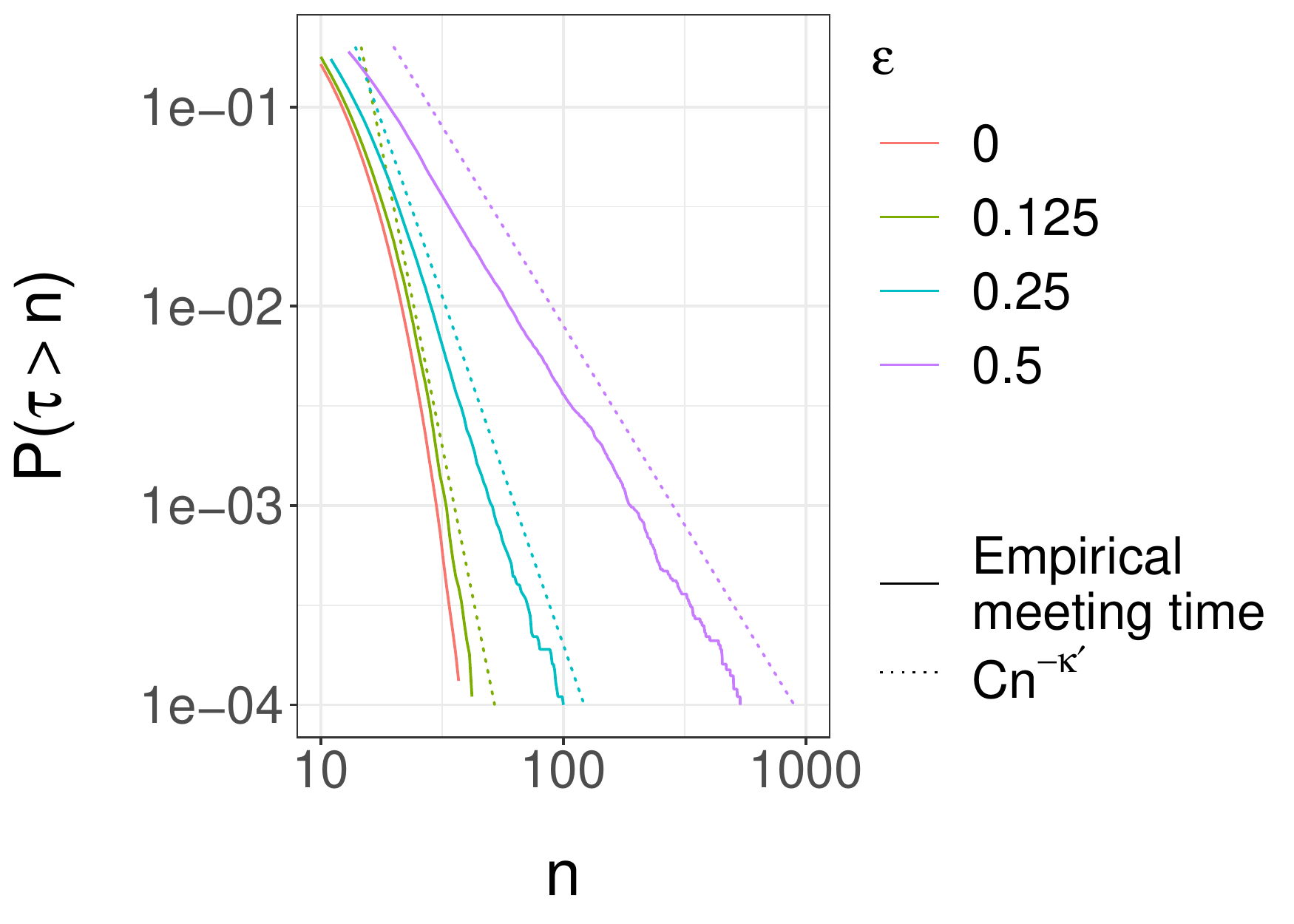}\label{fig:survprob}}

\caption{Beta-Bernoulli model.
Left: Plots of the prior and posterior distribution of paramter $\beta$.
Right: Plots of the tail probability $\mathds{P}(\tau>n)$ for a range of values of $\epsilon$.
Dotted lines show bounding polynomials of the form $Cn^{-\kappa'}$ for each of the values of $\epsilon$.}
\end{centering}
\end{figure}


\section{Experiments in state space models\label{sec:numerics}}

State space models are a popular class of time series models. These
latent variable models are defined by an unobserved Markov process
$(X_{t})_{t\ge0}$ and an observation process $(Y_{t})_{t\ge1}$ where
the observations are conditionally independent given $(X_{t})_{t\ge0}$
with
\begin{equation}
X_{0}\sim\mu_{\theta}(\cdot),\quad\quad X_{t}|\{X_{t-1}=x\}\sim f_{\theta}(\cdot|x),\quad\quad Y_{t}|\{X_{t}=x\}\sim g_{\theta}(\cdot|x),\label{eq:SSM}
\end{equation}
where $\theta$ parameterizes the distributions $\mu_{\theta}$, $f_{\theta}$
and $g_{\theta}$ (termed the `initial', `transition' and `observation'
distribution respectively). Given a realization of the observations
$Y_{1:T}=y_{1:T},$ we are interested in performing Bayesian inference
on the parameter $\theta$ to which we assign a prior density $p(\theta)$.
The posterior density of interest is thus $\pi\left(\theta\right)\propto p(\theta)p(y_{1:T}|\theta)$
where the likelihood $p(y_{1:T}|\theta)=\int\mu_{\theta}(dx_{0})\prod_{t=1}^{T}f_{\theta}(dx_{t}|x_{t-1})g_{\theta}(y_{t}|x_{t})$
is usually intractable. It is possible to obtain a non-negative unbiased
estimator $\widehat{p}(y|\theta,u)$ of $p(y|\theta)$ using particle
filtering where here $u$ represents all the random variables simulated
during the run of a particle filter. The resulting pseudo-marginal
algorithm is known as the particle marginal MH algorithm (PMMH) \citep{andrieu2010particle}.
This algorithm can also be easily modified to perform unbiased smoothing for state inference and is an alternative to existing methods in \citet{jacob2017smoothing}. Guidelines on the selection of the number of particle in this context are provided in \citet{middleton2018unbiasedpimh}.
For state-space models, it is unfortunately extremely difficult to check that Assumptions \ref{assu:The-posterior-density}
and \ref{assu:There-exist-constants} are verified.

\subsection{Linear Gaussian state space model}

The following experiments explore the proposed unbiased estimators
in a linear Gaussian state space model where the likelihood can be
evaluated exactly. This allows a comparison between the pseudo-marginal
kernels, that use bootstrap particle filters \citep{gordon1993novel}
with $N$ particles to estimate the likelihood, and the ideal kernels
that use exact likelihood evaluations obtained with Kalman filters.
We assume $X_{0}\sim\mathcal{N}(0,1),~X_{t}|\{X_{t-1}=x\}\sim\mathcal{N}(ax,\sigma_{X}^{2})$
and $Y_{t}|\{X_{t}=x\}\sim\mathcal{N}(x,1)$ where $a$ and $\sigma_{X}$
are assigned prior distributions, $a\sim\mathcal{U}[0,1]$ and $\sigma_{X}\sim\Gamma(2,2)$.

\subsubsection{Effect of the number of particles }

\label{subsec:effectN}

A dataset of $T=100$ observations was generated from the model with
parameters $a=0.5$ and $\sigma_{X}=1$. We study how the meeting
times and the efficiency vary as a function of $N$, the number of particles. We set the initial
distribution to $\mathcal{U}[0,1]$ over $a$ and $\mathcal{U}[0,5]$
over $\sigma_{X}$, and the proposal covariance of the Normal random
walk proposals to $0.2^{2}I$, corresponding to acceptance rate for the exact algorithm of approximately 36.6\%. In the following we consider a grid
of values for the number of particles, varying $N$ between 50 and 250.

We estimate large quantiles of the distribution of the meeting time over
20,000 repetitions of coupled PMMH, with the results shown in Figure \ref{fig:lgssmmt}.
As expected, increasing $N$ generally reduces the meeting time at the cost of more computation
per iteration.

We examine $\text{IF}[H_{k:m}]$, as defined in section \ref{sec:effpoly}, for the proposed unbiased estimators with
$h:x\mapsto x_{1}+x_{2}+x_{1}^2+x_{2}^2$, for each of these values of $N$ and consider three cases for $k$ and $m$, in particular
\[
(k,m)\in\{(250,500),(250,1000),(750,1000)\}
\]
corresponding to the following: (1) a smaller value of $m-k$, (2) a larger value of $m-k$ and (3) a smaller value of $m-k$ with a more conservative choice of $k$.
Estimates of $\text{IF}[H_{k:m}]$ were obtained using
20,000 repetitions of coupled PMMH where for each value of $(k,m)$ estimators were obtained using a single realisation of the largest value of $m=1,000$ using 30 cores of an Intel Xeon CPU E5-4657L 2.40GHz, taking approximately 60 hours in total.

The results are plotted in Figure \ref{fig:lgssmineff} where we plot also the inefficiency of estimators obtained using coupled
Metropolis-Hastings (horizontal line) for $(k,m)$ as in case (2).
We see first of all that the inefficiency is reduced by increasing $N$ in all cases, and that the inefficiency of estimators obtained using coupled PMMH asymptotes over this range of $N$ towards the inefficiency of estimators obtained using coupled Metropolis-Hastings for $N$ increasing.
We also see that for case (3) that the larger value of $k$ can ameliorate the efficiency of the estimators for small numbers of particles.

We also examine the inefficiency weighted by the cost of obtaining each estimator, i.e. $N\text{IF}[H_{k:m}]$, and compare this to the inefficiency of the serial algorithm using $NV_{\text{as}}$, with
the notation of Section \ref{sec:Introduction}.
Here, $V_{\text{as}}$
was estimated using the \texttt{spectrum0.ar } function in R's \texttt{CODA}
package \citep{plummer2006coda}, averaging over 10 estimators obtained through running the serial algorithm for 500,000
iterations and discarding the first 10\% as burn-in.
Figure \ref{fig:lgssm-ineffN-and-zoom} shows the results of this procedure, showing $\pm2$ sample standard errors for the inefficiency estimates.
Figure \ref{fig:lgssm-ineffN} demonstrates that despite the lower cost of obtaining unbiased estimators for lower values of $N$, the initial decline in inefficiency is still significant.
In Figure \ref{fig:lgssm-ineffNzoom} we show the same results though with a focus around the optimum inefficiency.
Here, we see that the optimum is attained at $N=100$ with value $NV_{\text{as}}=640$ for the serial algorithm and at $N=150$ with $N\text{IF}[H_{k:m}]=980$ for case (2).
Therefore, we see that the increase in inefficiency is estimated to be under 55\% relative to a well-tuned serial algorithm for the values considered.
Indeed, for this particular batch of $N=150$ and $m=1,000$, the parallel execution time to obtain the estimators $H_{k:m}$ on the stated machine was under 14 hours, which we compare to approximately 12 days of serial execution time if performed all on a single core (the mean time to obtain an estimator was 53 seconds) or 8 days after accounting for the increase in inefficiency of 55\%.

\begin{figure}[H]
\centering{}\subfloat[]{\begin{centering}
\includegraphics[width=0.4\textwidth]{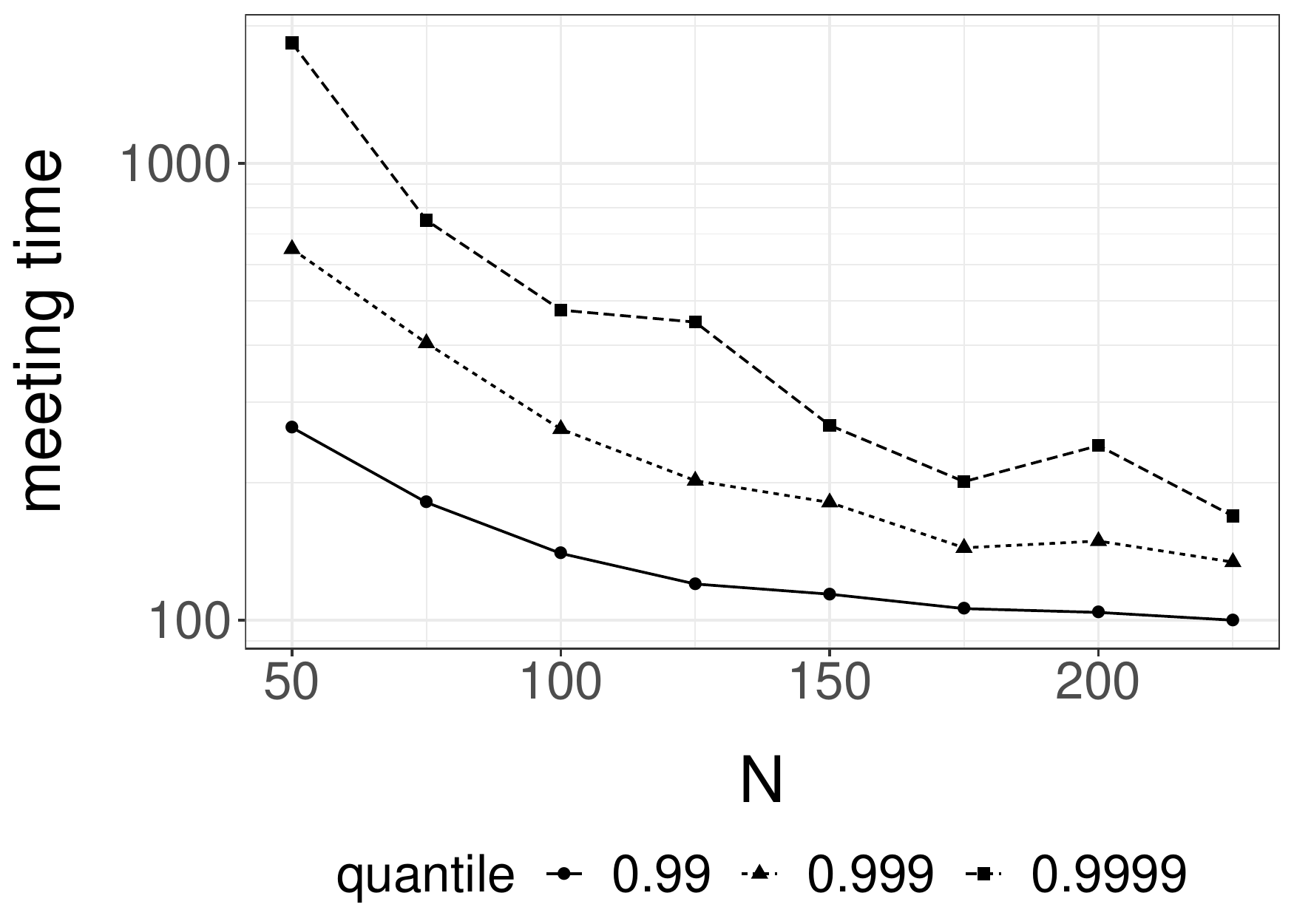}
\par\end{centering}
\label{fig:lgssmmt}}$\quad$\subfloat[]{\begin{centering}
\includegraphics[width=0.4\textwidth]{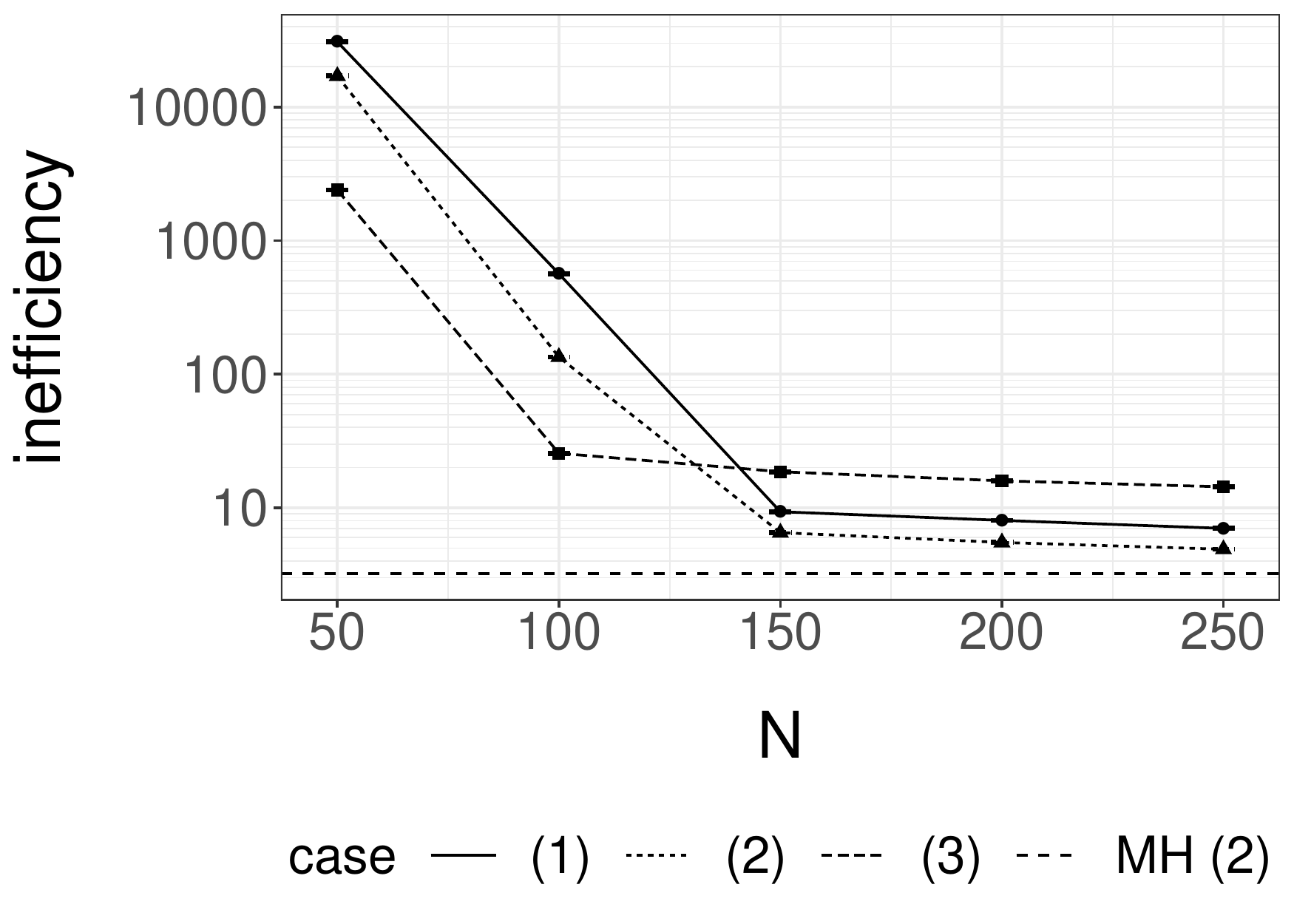}
\par\end{centering}
\label{fig:lgssmineff}}\caption{Coupled PMMH meeting times and inefficiency of estimators for a linear Gaussian state space model with $T=100$ observations and over a range of particles, $N$. Left: estimates of the quantiles of the meeting times. Right: inefficiencies for serial PMMH as a function of $N$, compared to the inefficiency of unbiased estimators obtained using coupled MH. }
\label{fig:lgssmcp}
\end{figure}

\begin{figure}[H]
\centering{}\subfloat[]{\begin{centering}
\includegraphics[width=0.4\textwidth]{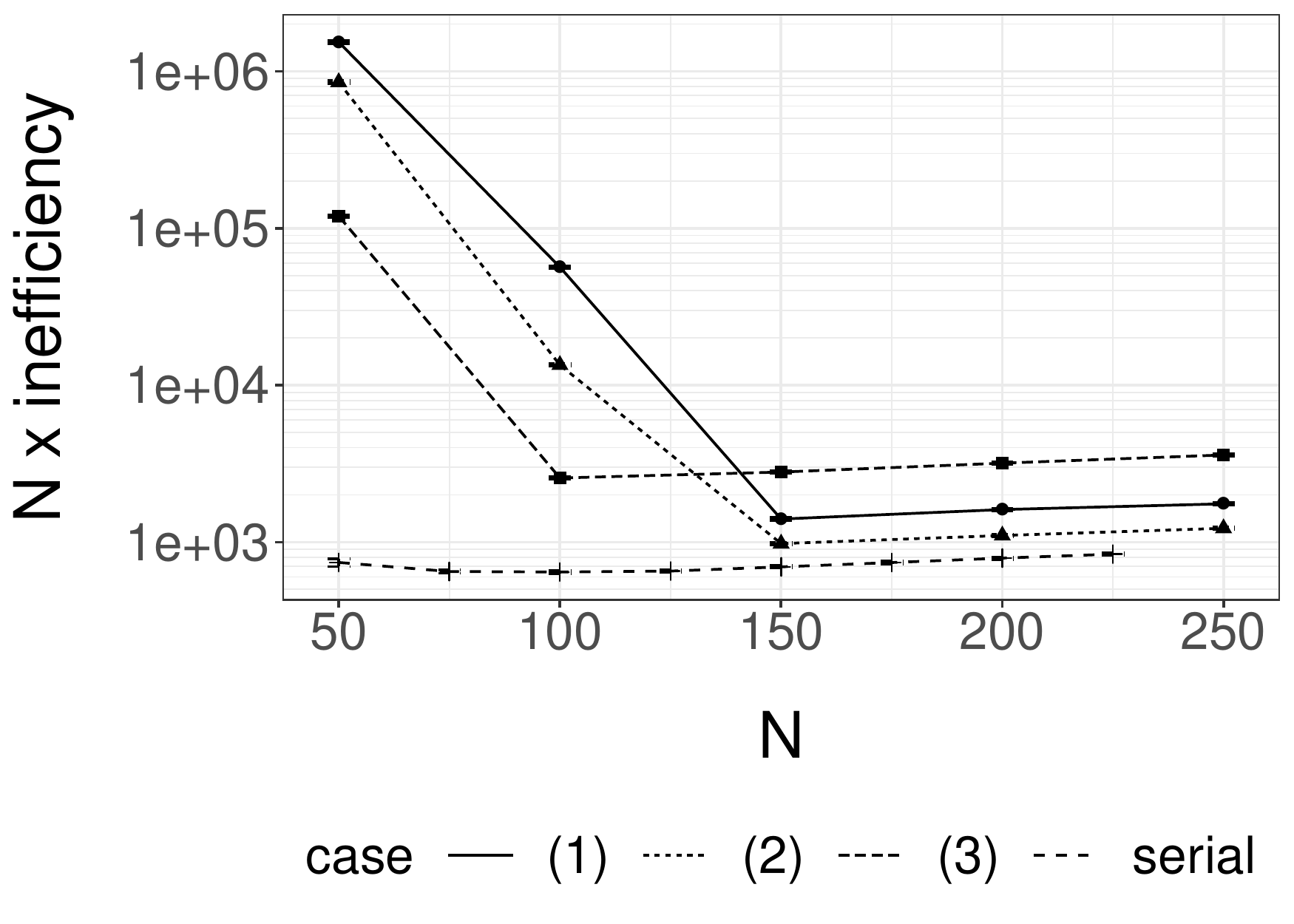}
\par\end{centering}
\label{fig:lgssm-ineffN}}$\quad$\subfloat[]{\begin{centering}
\includegraphics[width=0.4\textwidth]{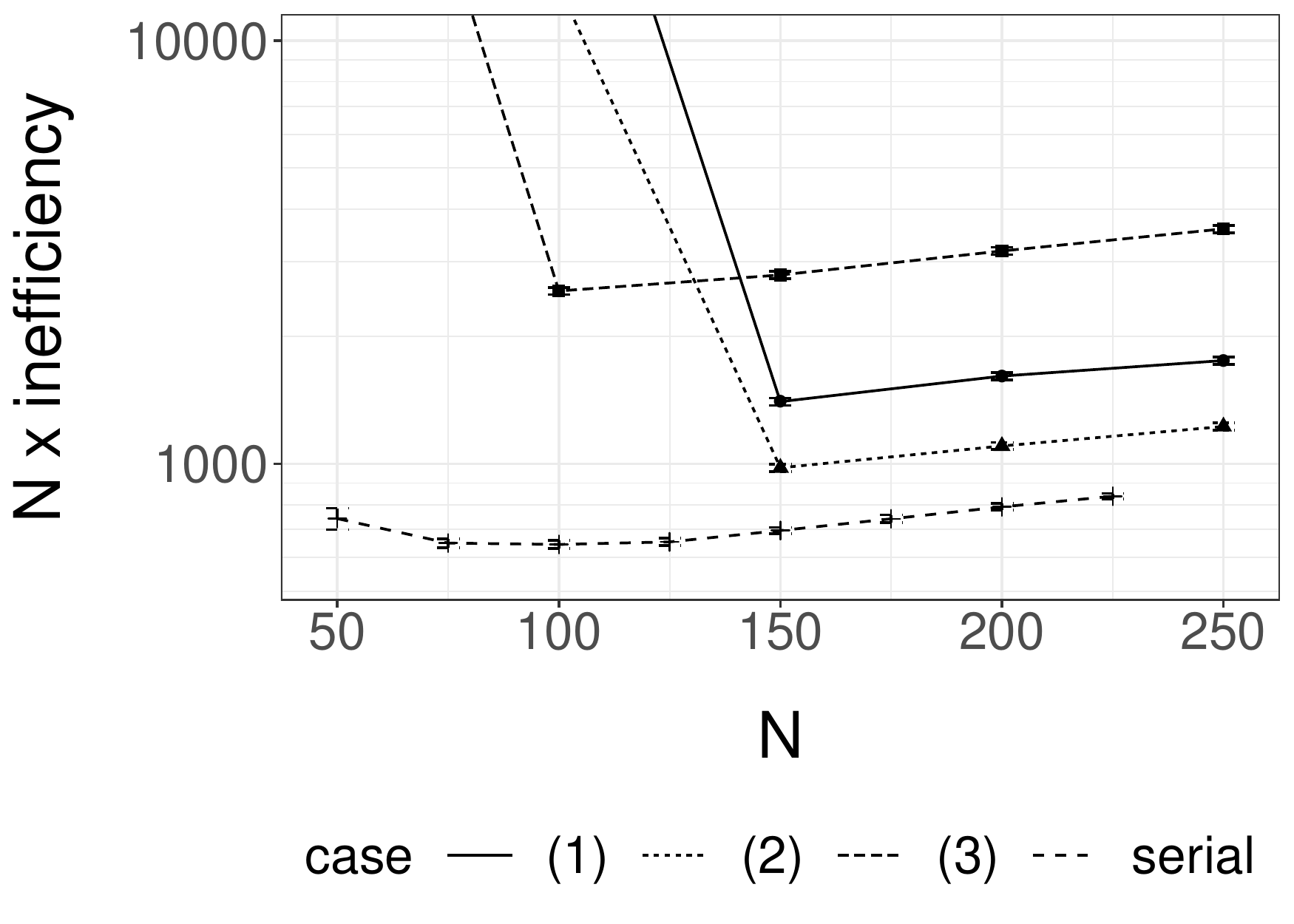}
\par\end{centering}
\label{fig:lgssm-ineffNzoom}}\caption{Inefficiencies weighted by $N$ for a linear Gaussian state space model, comparing directly the inefficiency of estimators obtained using the serial algorithm to those obtained using coupled PMMH. Left: inefficiencies weighted by $N$. Right: inefficiencies weighted by $N$ close to their optima.}
\label{fig:lgssm-ineffN-and-zoom}
\end{figure}

\subsubsection{Effect of the time horizon}

We investigate the distribution of meeting times as a function of
$T$, with $N$ scaling linearly with $T$. Such a scaling is motivated
through the guarantee that the variance of the log-likelihood estimates
obtained at each iteration are asymptotically constant \citep{berard2014lognormal,deligiannidis2015correlated,schmon2018large}.
For the model as before, we consider a grid of $T\in\{100,...,1000\}$,
using a single realisation of the data. Throughout the following,
we fix the proposal covariance to be $\frac{2^{2}}{T}I$, coinciding
with the proposal covariance in \ref{subsec:effectN} for $T=100$, providing an acceptable acceptance rate for the exact algorithm and where $1/T$ is motivated as a result of the variance of the posterior contracting at a rate proportional to $1/T$.

We consider two cases. Firstly, we examine how the distribution of
meeting time changes for a fixed initial distribution (the distribution
used previously of $\mathcal{U}[0,1]$ over $a$ and $\mathcal{U}[0,5]$
over $\sigma_{X}$); we refer to this as Scaling 1. Secondly, for
Scaling 2, we examine how the distribution of meeting times changes
if we also scale the initial distribution by setting $\pi_{0}=\mathcal{N}(\mu^{*},\frac{50}{T}I)$,
truncated to ensure it is dominated by the prior and where $\mu^{*}$
denotes the true parameter values.

In both cases we compare the distribution of meeting times for $N=T$
with the distribution of meeting times for the exact algorithm (i.e. $\bar{P}$ as in Algorithm \ref{alg:coupledMH}) with
likelihood evaluations performed using the Kalman filter. Figure \ref{fig:scalingT1}
and \ref{fig:scalingT2} show estimates of the $80^{th}$ and $99^{th}$
percentile over 1,000 repetitions for Scaling 1 and Scaling 2 respectively.
Firstly, it can be seen that in all cases the meeting times for coupled
PMMH are higher than the meeting times for coupled MH. Furthermore
the smaller difference between
the $80^{th}$ percentiles, compared to the difference between the $99^{th}$
percentiles, reflects a heavier tail of the distribution of the meeting time
in the case of PMMH. Finally, it can be seen that out of the two
scalings Scaling 2 appears to stabilise for larger values of $T$
whereas Scaling 1 exhibits an increase with $T$.

\begin{figure}[H]
\begin{centering}
\subfloat[]{\begin{centering}
\includegraphics[width=0.4\textwidth]{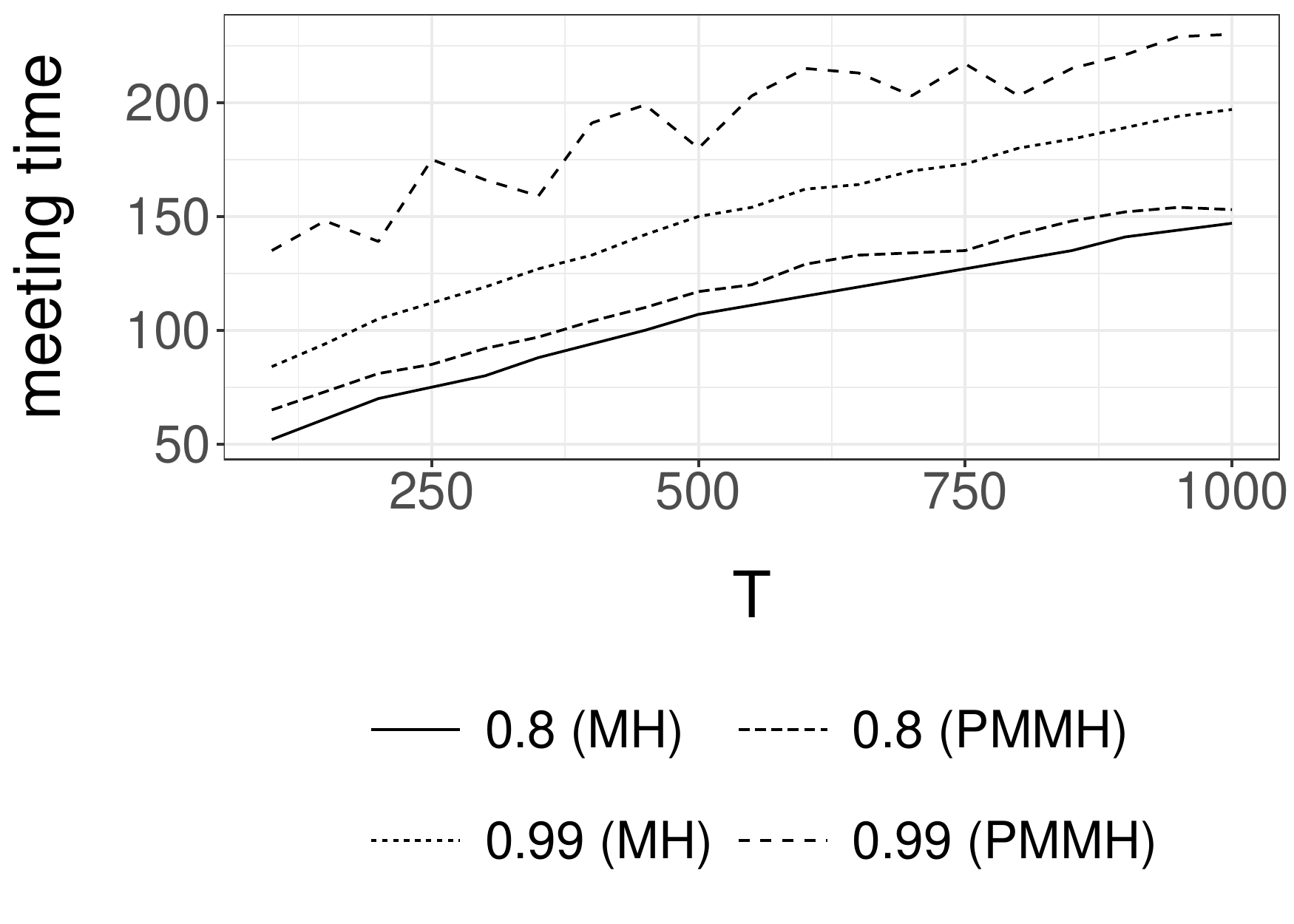}
\par\end{centering}
\label{fig:scalingT1}}$\qquad$\subfloat[]{\begin{centering}
\includegraphics[width=0.4\textwidth]{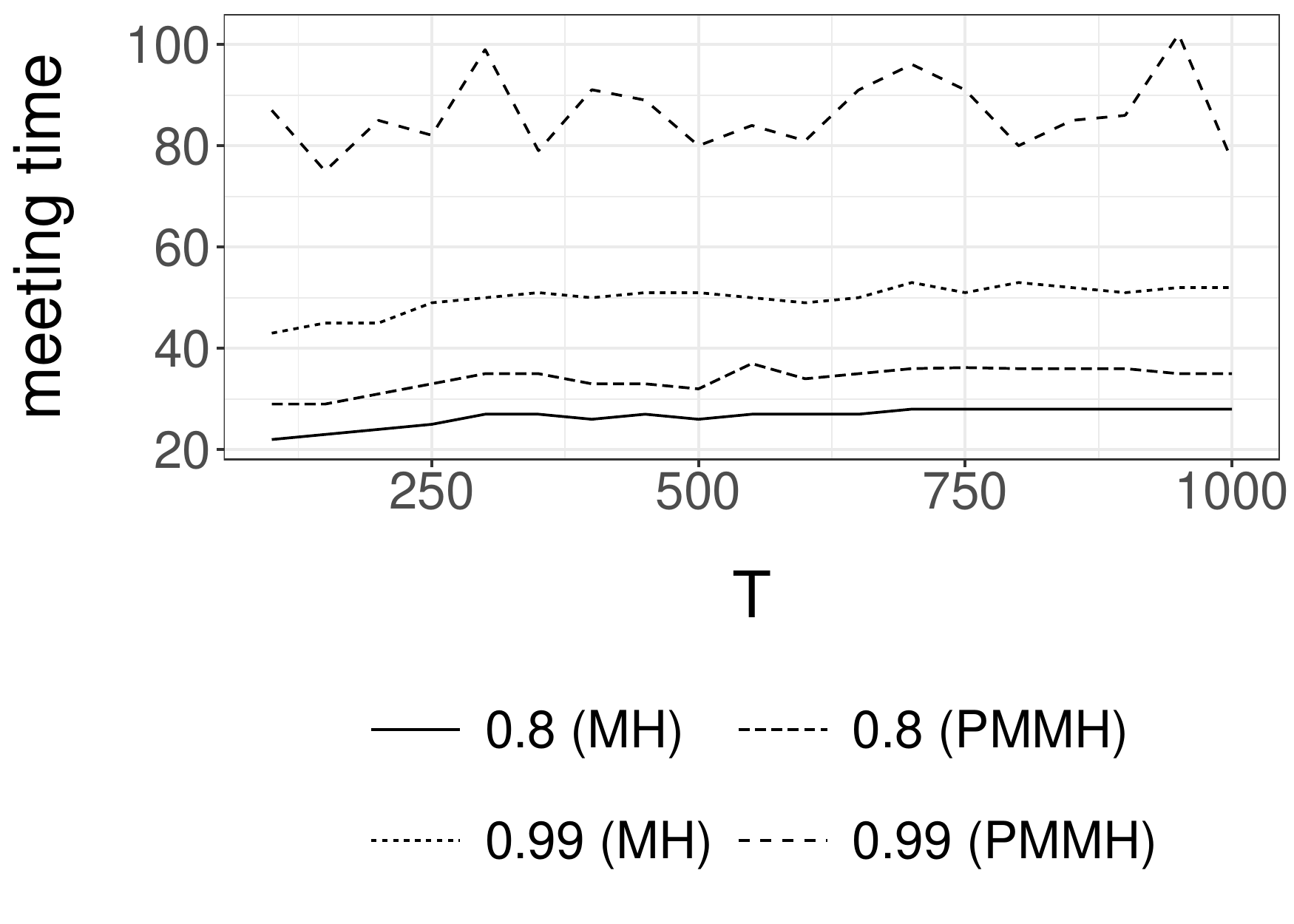}
\par\end{centering}
\label{fig:scalingT2}}
\par\end{centering}
\caption{Scaling quantiles of meeting times with $T$ over 1,000 repetitions. Left: fixing the initial distribution and scaling the proposals (Scaling 1). Right: scaling both the proposals and the initial distribution (Scaling 2).}
\label{fig:scaling1}
\end{figure}

\subsection{Neuroscience experiment \label{subsec:Neuroscience-experiment}}

We apply the proposed methodology to a neuroscience experiment described
in \citet{temereanca2008rapid}. The same data and model were used
to illustrate the controlled Sequential Monte Carlo (cSMC) algorithm
in \citet{heng2017controlled}.

\subsubsection{Model, data and target distribution}

The model aims at capturing the activation of neurons of rats as their
whiskers are being moved with a periodic stimulus. The experiment
involves $M=50$ repeated experiments, and $T=3000$ measurements
(one per millisecond) during each experiment. The activation of a
neuron is recorded as a binary variable for each time and each experiment.
These activation variables are then aggregated by summing over the
$M$ experiments at each time step, yielding a series of variables
$Y_{t}$ taking values between $0$ and $M$; see \citet{zhang2018estimating}
for an alternative analysis that avoids aggregating over experiments.
Letting $\text{Bin}(\cdot;n,p)$ denote the binomial distribution
for $n$ trials with success probability $p$, the model for neuron
activation is given by $X_{0}\sim\mathcal{N}(0,1)$ and, for $t\geq1$,
\begin{align*}
X_{t}|\{X_{t-1}=x\} & \sim\mathcal{N}(\cdot;ax,\sigma_{X}^{2}),\quad Y_{t}|\{X_{t}=x\}\sim\text{Bin}(\cdot;M,s(x))
\end{align*}
where $s(x):=(1+\exp(-s))^{-1}$. We focus on the task of estimating
$(a,\sigma_{X}^{2})$ from the data using the proposed method. Following
\citet{heng2017controlled} we specify a uniform prior on $[0,1]$
for $a$ and an inverse-Gamma prior on $\sigma_{X}^{2}$ with parameters
$(1,0.1)$, where the probability density function of an inverse-Gamma
with parameters $(a,b)$ is $x\,\mapsto\Gamma(a)^{-1}b^{a}x^{-a-1}\exp(-b/x)$.
The PMMH kernels employed below use a Gaussian random walk proposal.
The likelihood is estimated with cSMC with $N=128$ particles and
$3$ iterations, where the exact specification is taken from the appendix
of \citet{heng2017controlled}. Such cSMC runs take approximately
one second, on a 2015 desktop computer and a simple R implementation.
Figure \ref{fig:neuro:datalikelihood} presents the time series of
observations (\ref{fig:neuro:data}) and the estimated log-posterior
density (\ref{fig:neuro:likelihood}), obtained on a $500\times500$
grid of parameter values, and one cSMC likelihood estimate per parameter
value. In Figure \ref{fig:neuro:likelihood}, the upper right corner
presents small black circles, generated by the contour plot function,
which indicate high variance in the likelihood estimators for these
parameters. Thus we expect PMMH chains to have a lower acceptance
rate in that part of
the space. On the other hand, the maximum likelihood estimate (MLE)
is indicated by a black dot on the bottom right corner. The variance
of the log-likelihood estimators is of the order of $0.2$ around
the MLE, so that PMMH chains are expected to perform well there, as
was observed in \citet{heng2017controlled} where the chains were
initialized close to the MLE.

\begin{figure}
\begin{centering}
\subfloat[\label{fig:neuro:data}]{\includegraphics[width=0.4\textwidth]{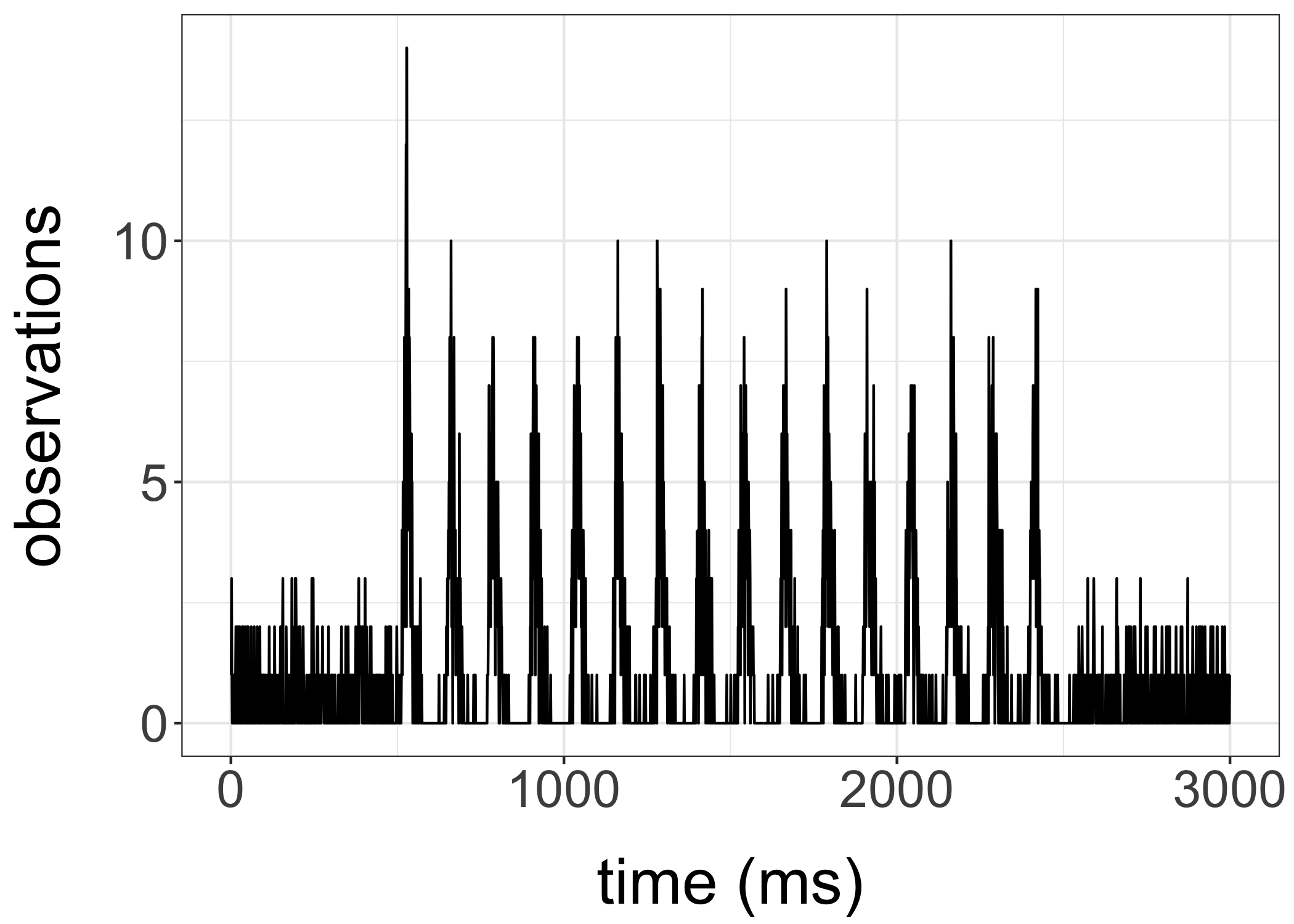}

}\hspace{1cm}\subfloat[\label{fig:neuro:likelihood}]{\includegraphics[width=0.4\textwidth]{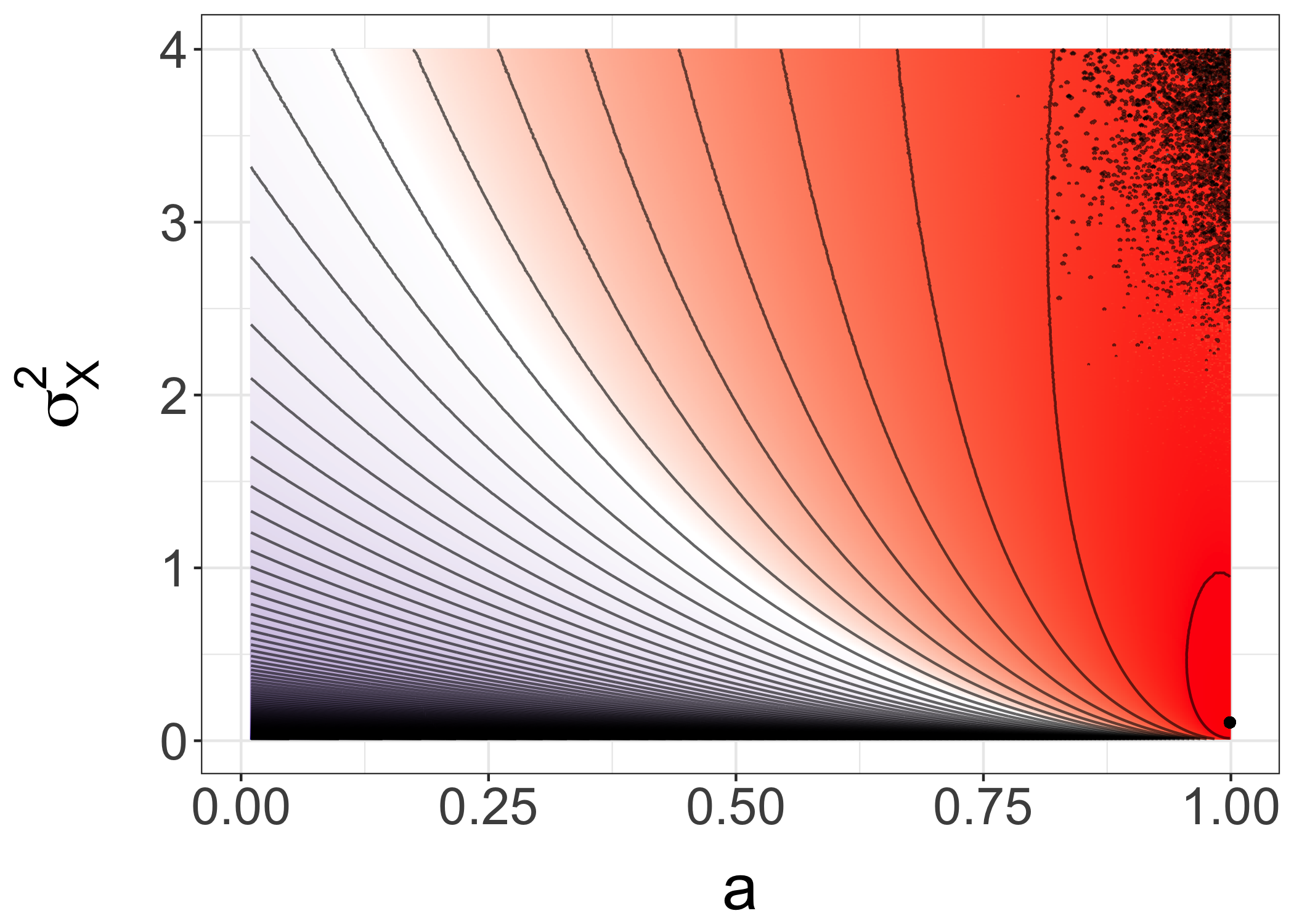}

}
\par\end{centering}
\caption{\label{fig:neuro:datalikelihood}Left: counts of neuron activation
in $50$ experiments, over a duration of three seconds. Right: estimated
log-posterior density in the neuroscience experiment of Section \ref{subsec:Neuroscience-experiment}. }
\end{figure}

\subsubsection{Standard deviation of the proposal}

Here, we initialize the chains from a uniform distribution on $[0,1]^{2}$,
and we investigate two choices of standard deviation for the random
walk proposals: the one used in \citet{heng2017controlled}, that
is $0.002$ for $a$ and $0.01$ for $\sigma_{X}^{2}$, and another
choice equal to $0.01$ for $a$ and $0.05$ for $\sigma_{X}^{2}$,
i.e. five times larger. For each choice, we can run pairs of chains
until they meet and record the meeting time; we can do so on $P$
processors in parallel (e.g. hundreds), and for a certain duration
(e.g. a few hours). Thus the number of meeting times produced by each
processor is a random variable. Following \citet{glynn1990bias},
if no meeting time was produced by a processor within the time budget,
the computation continues until one meeting time is produced, otherwise
on-going calculations are interrupted when the budget is reached.
This allows unbiased estimation of functions of the meeting time on
each processor via Corollary 7 of \citet{glynn1990bias}, and then
we can average across processors. In particular we use this strategy
to produce all histograms in the present section, as in Figure \ref{fig:neuro:meetingtimes}.

We observe that the meeting times are significatively larger when
using the smaller standard deviation (\ref{fig:meet:origsd}), with
a maximum value of $21,570$ over $1565$ realizations. With the larger
choice of standard deviation (\ref{fig:meet:largersd}), we observe
shorter meeting times, with a maximum of $928$ over $5572$ realizations.
This suggests that the values of $k$ and $m$ should be chosen very
differently in both cases.

\begin{figure}
\begin{centering}
\subfloat[\label{fig:meet:origsd}]{\includegraphics[width=0.4\textwidth]{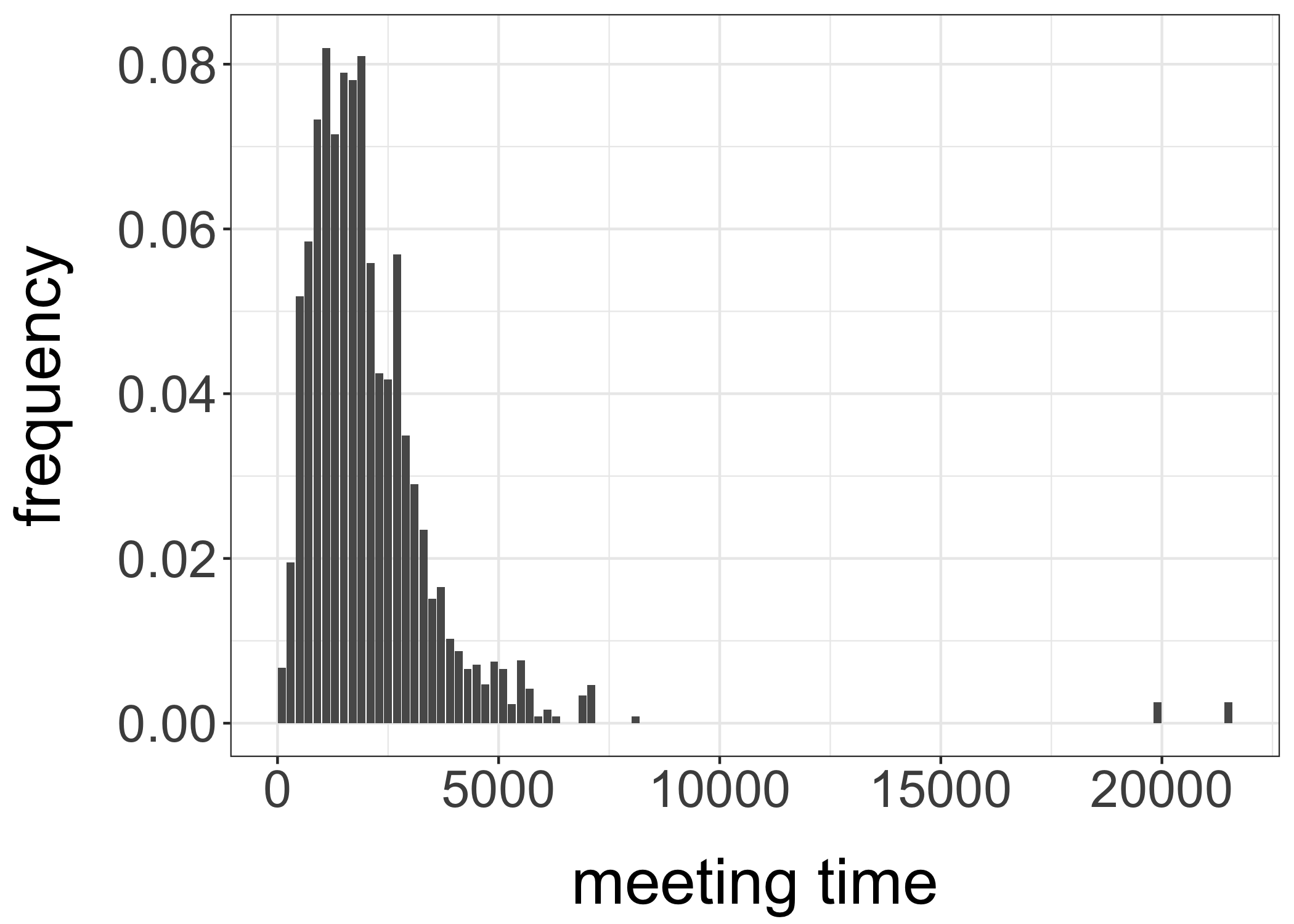}

}\hspace{1cm}\subfloat[\label{fig:meet:largersd}]{\includegraphics[width=0.4\textwidth]{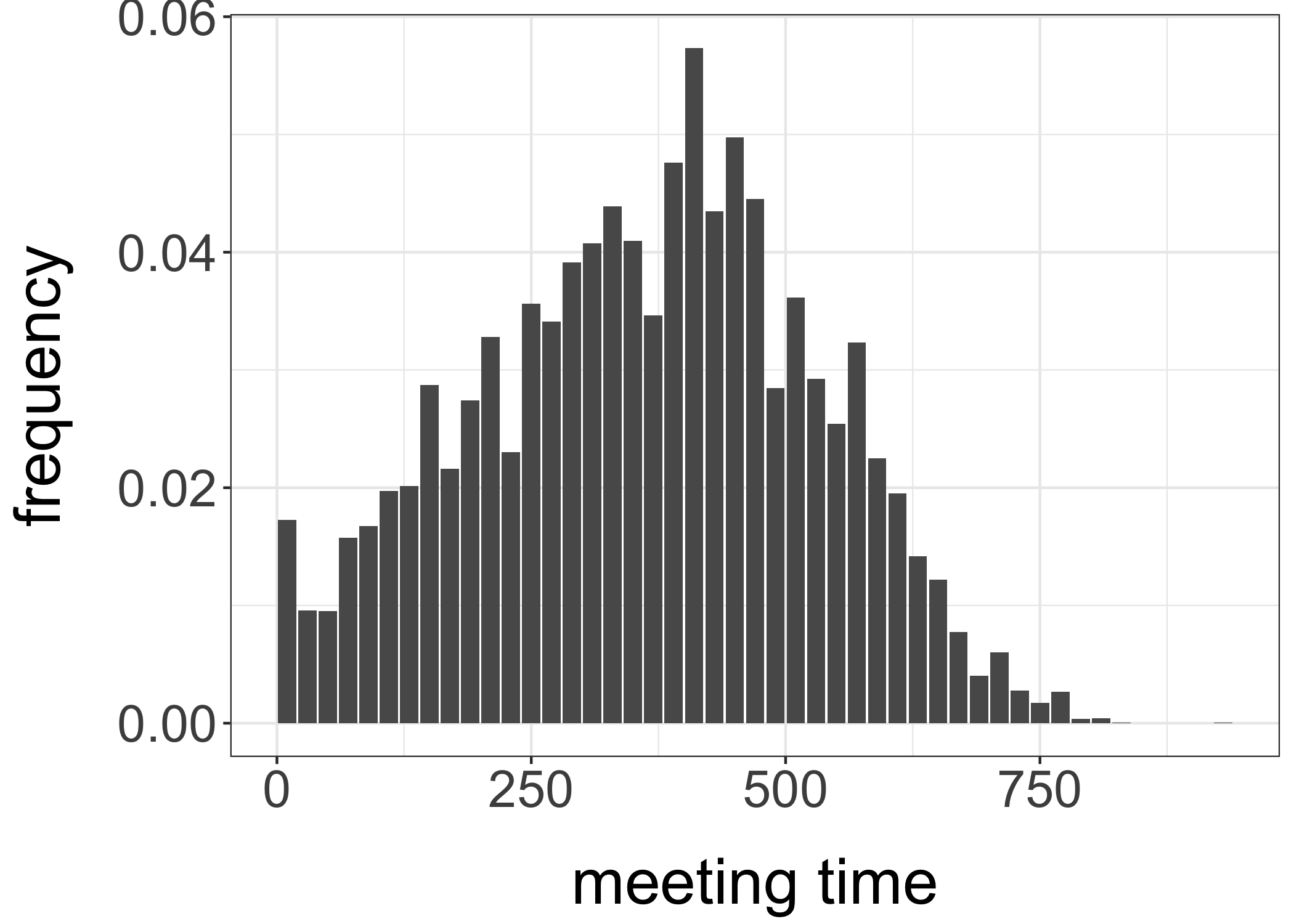}

}
\par\end{centering}
\caption{\label{fig:neuro:meetingtimes}Histograms of meeting times associated
with coupled PMMH chains, obtained with a standard deviation of the
random walk proposal of $0.002$ for $a$ and $0.01$ for $\sigma_{X}^{2}$
on the left, and with a larger standard deviation ($0.01$ on $a$
and $0.05$ on $\sigma_{X}^{2}$) on the right. In both cases, the
likelihood was estimated with controlled SMC, with $N=128$ particles,
$I=3$ iterations, in the neuroscience model of Section \ref{subsec:Neuroscience-experiment}. }
\end{figure}

To explain this difference we investigate the realization of the coupled
chains that led to the largest meeting time of $21,570$, in Figure
\ref{fig:neuro:longesttraces}. Figure \ref{fig:neuro:longesttraces:originalsd}
presents the trajectories of two chains overlaid with contours of
the target density function. The chains seem to follow approximately
the gradient of the density. Given the shape of this density, it means
that small starting values for component $a$ result in the chains
going to the region of high variance of the likelihood estimator,
in the top right corner of the plot. The marginal trace plots of one
of the two chains are shown in \ref{fig:neuro:longesttraces:traceplot}.
From the trace plots we see that most of the $21,570$ iterations
have been spent in that top right corner, where the chain got stuck,
approximately between iterations $2,000$ and $20,000$. The overall
acceptance rate is of $6\%$ for that chain, compared to $39\%$ for
the other chain shown in \ref{fig:neuro:longesttraces:originalsd}
Therefore the use of a larger proposal standard deviation seems to
have a very noticeable effect here on the ability of the Markov chain
to escape a region of high variance of the likelihood estimator.

\begin{figure}
\begin{centering}
\subfloat[\label{fig:neuro:longesttraces:originalsd}]{\includegraphics[width=0.4\textwidth]{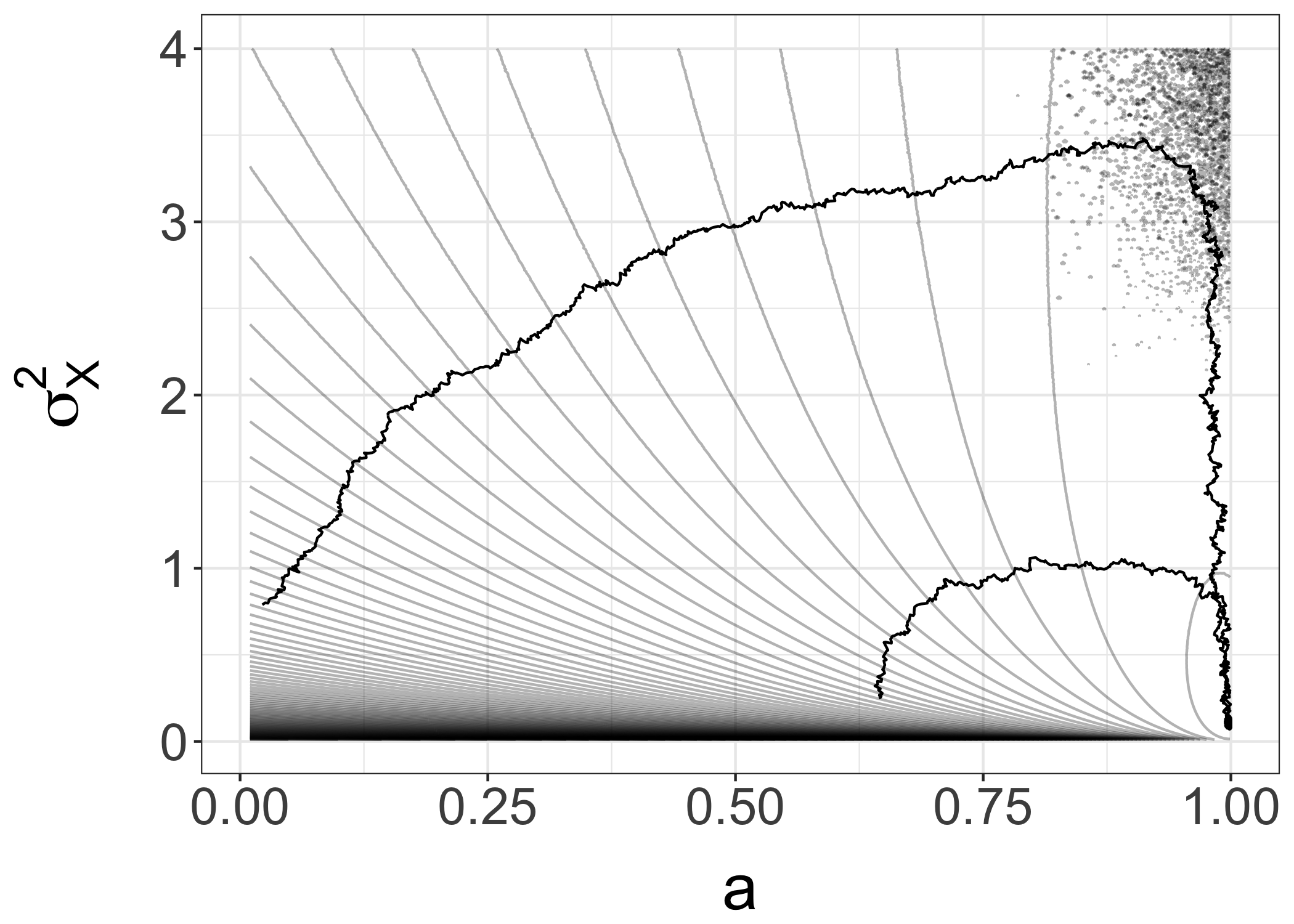}
}\hspace{1cm}\subfloat[\label{fig:neuro:longesttraces:traceplot}]{\includegraphics[width=0.4\textwidth]{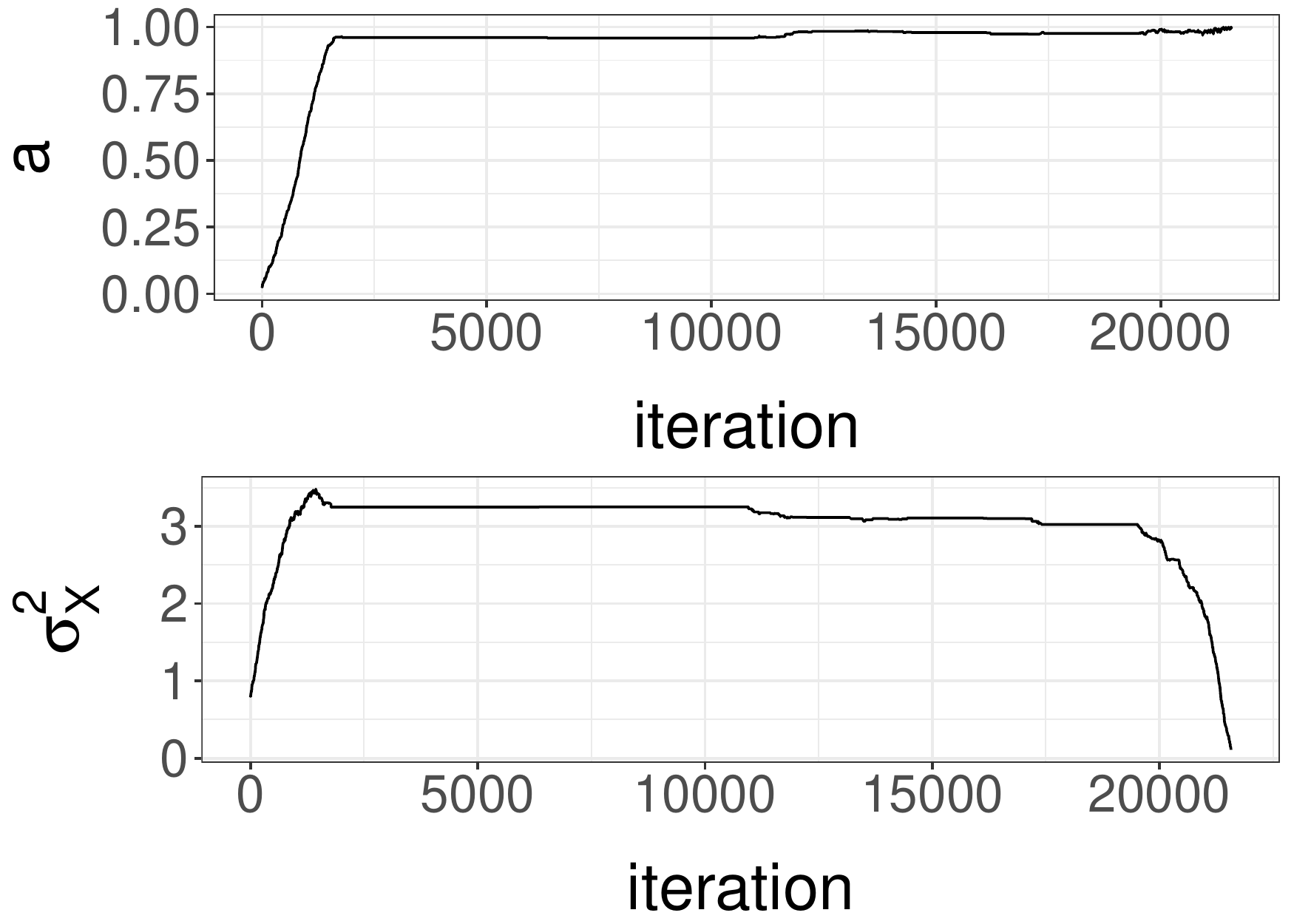}

}
\par\end{centering}
\caption{\label{fig:neuro:longesttraces}Traces of the chains corresponding
to the largest observed meeting time ($21,570$) obtained with a small
standard deviation of the random walk proposal ($0.002$ for $a$
and $0.01$ for $\sigma_{X}^{2}$), in form of a two-dimensional trajectory
on the left, and trace plots of one of the two chains on the right.
The likelihood is estimated with controlled SMC, with $N=128$ particles,
$I=3$ iterations, in the neuroscience experiment of Section \ref{subsec:Neuroscience-experiment}. }
\end{figure}

\subsubsection{Comparison with PMMH using bootstrap particle filters}

We use the larger choice of standard deviation ($0.01$ on $a$ and
$0.05$ on $\sigma_{X}^{2}$) hereafter, and compare meeting times
obtained with cSMC with those obtained with bootstrap particle filters,
with $N=4,096$ particles. This number is chosen so that the compute
times are comparable. Over 23 hours of compute time, the number of
meeting times obtained per processor varied between $4$ and $35$,
and a total of $7,776$ meeting times were obtained from $400$ processors.
The meeting times are plotted against the duration it took to produce
them in Figure \ref{fig:neuro:bpf:duration}. The compute time associated
with meeting times is not only proportional to the meeting times themselves,
but also varies across processors. This is partly due to hardware
heterogeneity across processors, and to concurrent tasks being executed
on the cluster during our experiments. The histogram in Figure \ref{fig:neuro:bpf:meeting}
shows that meeting times are larger, and heavier tailed, than when
using cSMC (see Figure \ref{fig:meet:largersd}). The maximum observed
value is $9,371$. From these plots, we see that to produce unbiased
estimators $H_{k:m}$ using BPF with a similar variance as when using
cSMC, we would have to choose larger values of $k$ and $m$,
and thus the cost per estimator would likely be higher\@.

\begin{figure}
\begin{centering}
\subfloat[\label{fig:neuro:bpf:duration}]{\includegraphics[width=0.4\textwidth]{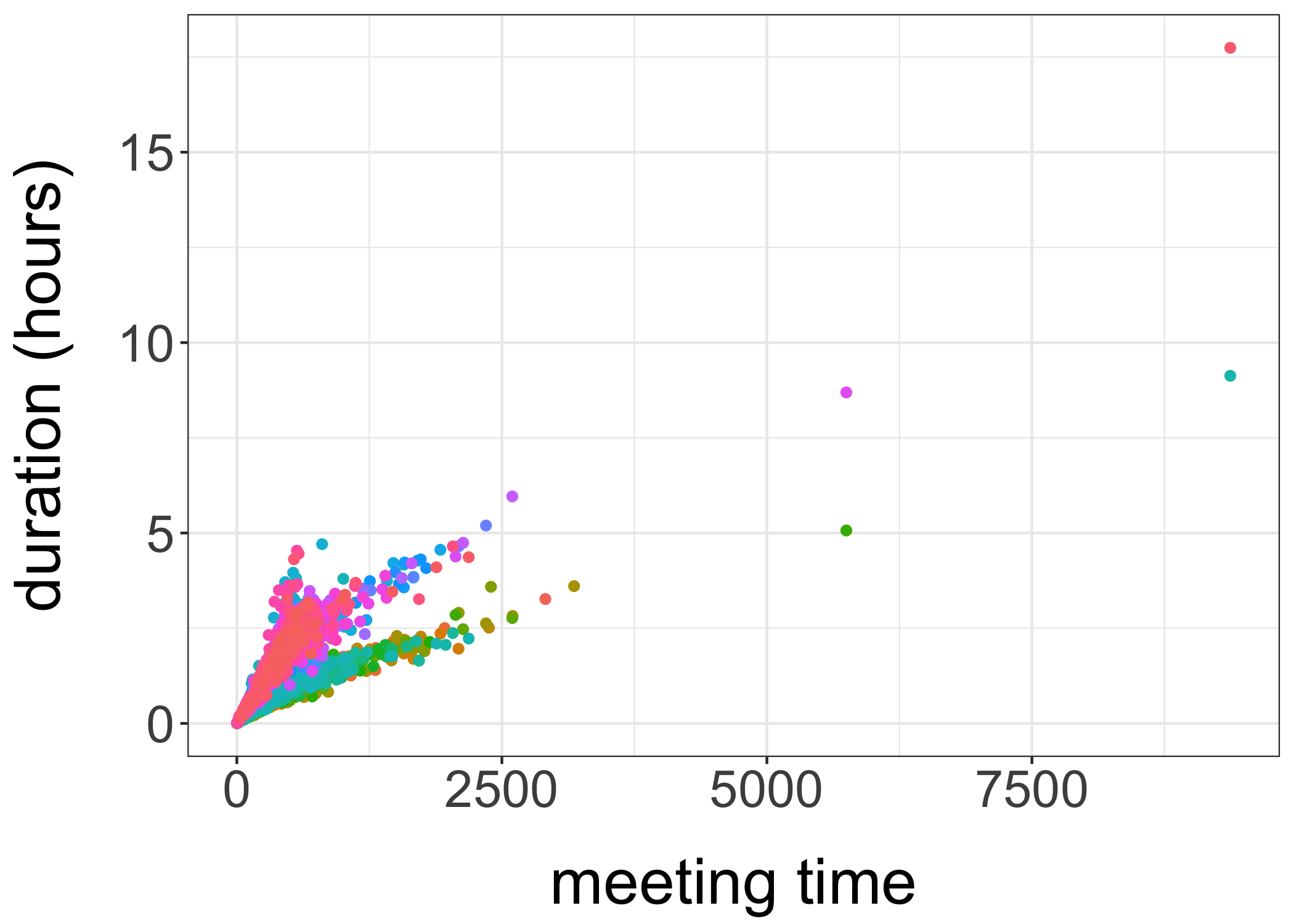}

}\hspace{1cm}\subfloat[\label{fig:neuro:bpf:meeting}]{\includegraphics[width=0.4\textwidth]{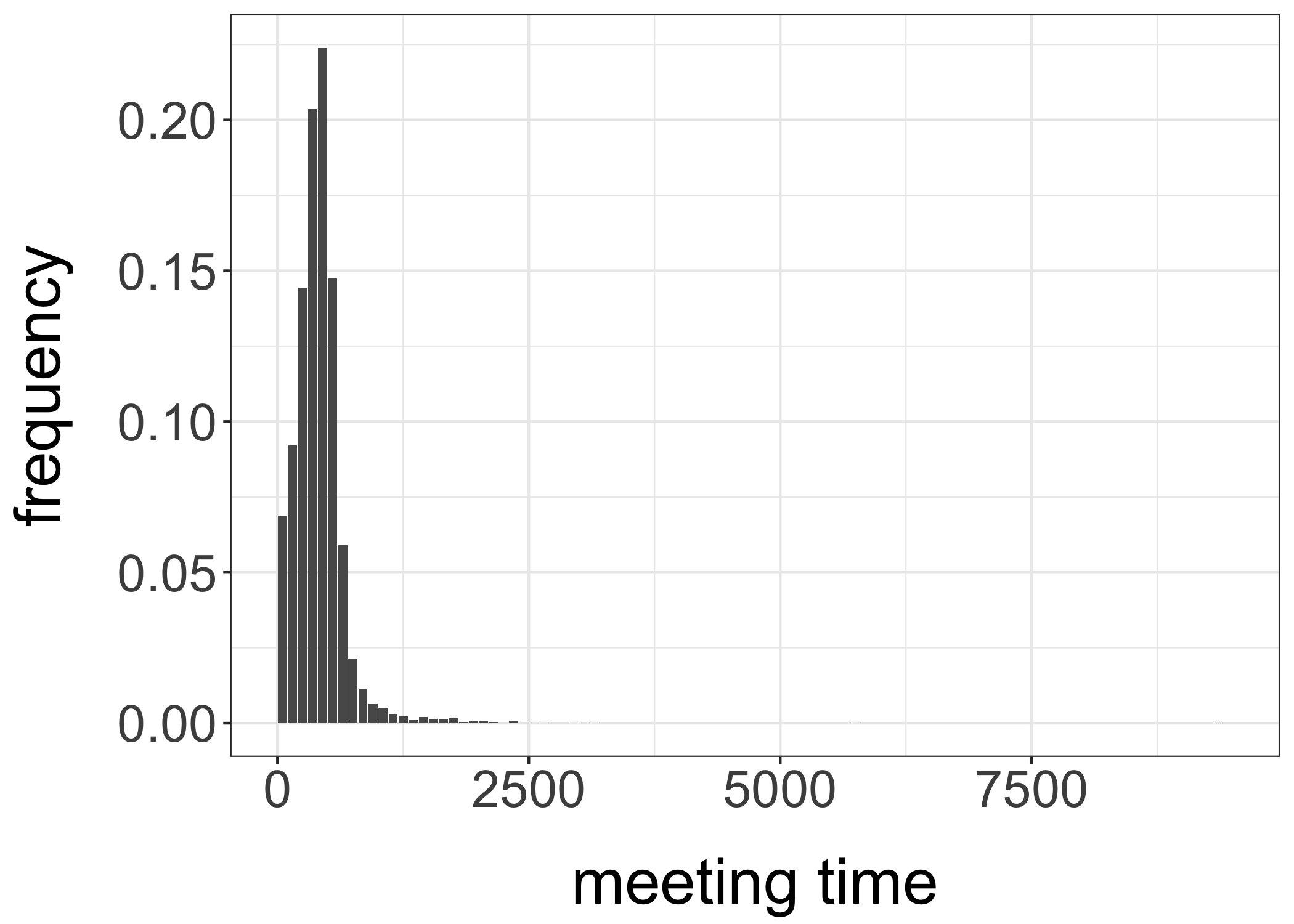}

}
\par\end{centering}
\caption{\label{fig:neuro:bpf:durationmeeting}Left: duration (in hours) versus
meeting times, using BPF with $N=4,096$ particles. Each color corresponds
to a different processor. Right: estimated histogram of the meeting
times, in the neuroscience experiment of Section \ref{subsec:Neuroscience-experiment}. }
\end{figure}

\subsubsection{Efficiency compared to the serial algorithm}

Using cSMC and the larger choice of standard deviation for the proposal,
we produce unbiased estimators $H_{k:m}$ with $k=1,000$ and $m=10,000$.
We run $100$ processors for a time budget of $23$ hours, and each
processor produced between $2$ and $9$ estimators, for a total of
$578$ estimators. The generation of samples for each processor is
represented chronologically in Figure \ref{fig:neuro:csmc:chronology}.
The variation among durations is due to the randomness of meeting
times and also to external factors such as concurrent tasks being
executed on the cluster. We produce histograms of the posterior marginals
in Figure \ref{fig:neuro:csmc:parameters}, with the result from a
long run of PMMH with cSMC ($250,000$ iterations) overlaid in red
lines.

\begin{figure}
\begin{centering}
\subfloat[\label{fig:neuro:csmc:chronology}]{\includegraphics[width=0.4\textwidth]{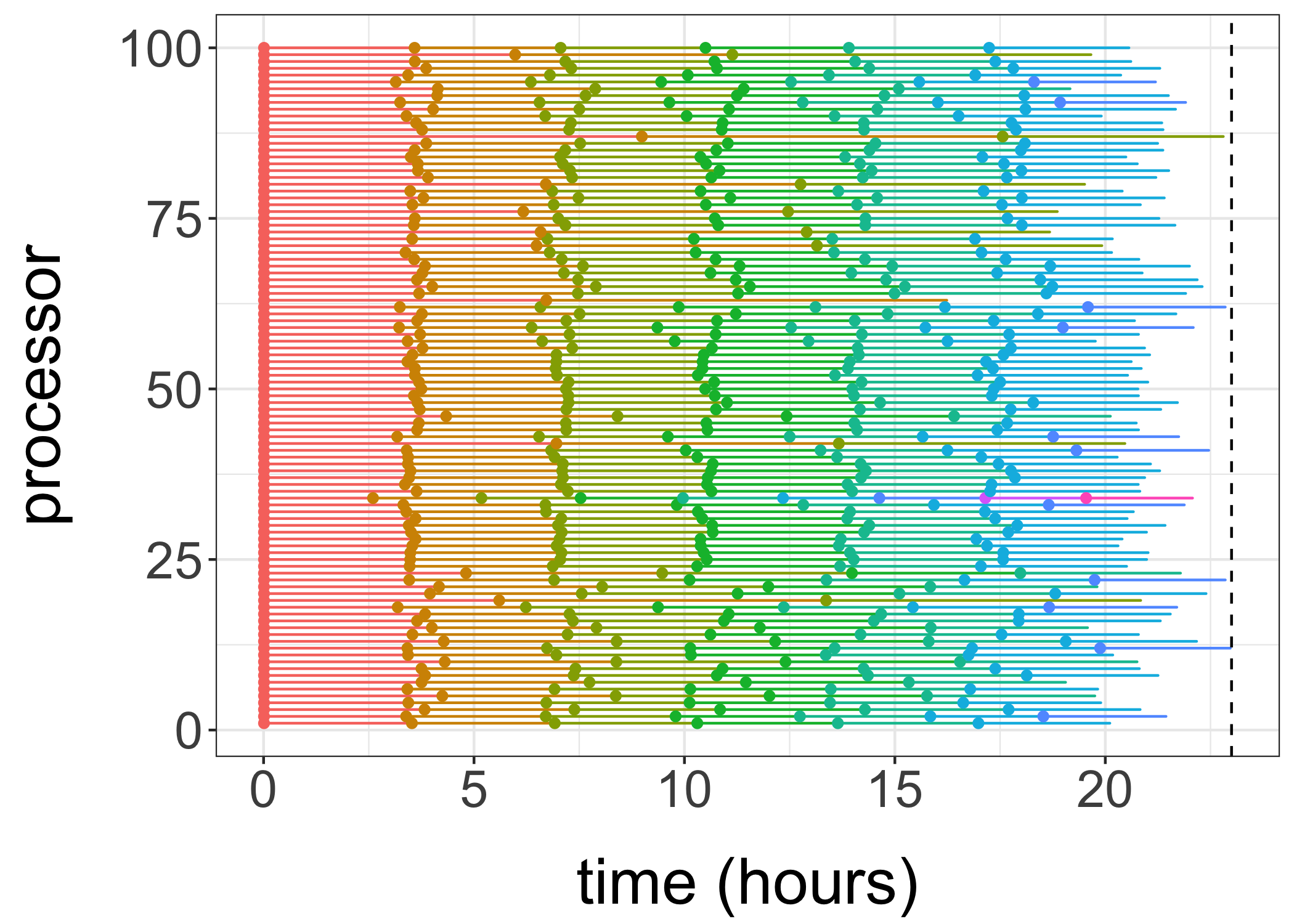}

}\hspace{1cm}\subfloat[\label{fig:neuro:csmc:parameters}]{\includegraphics[width=0.4\textwidth]{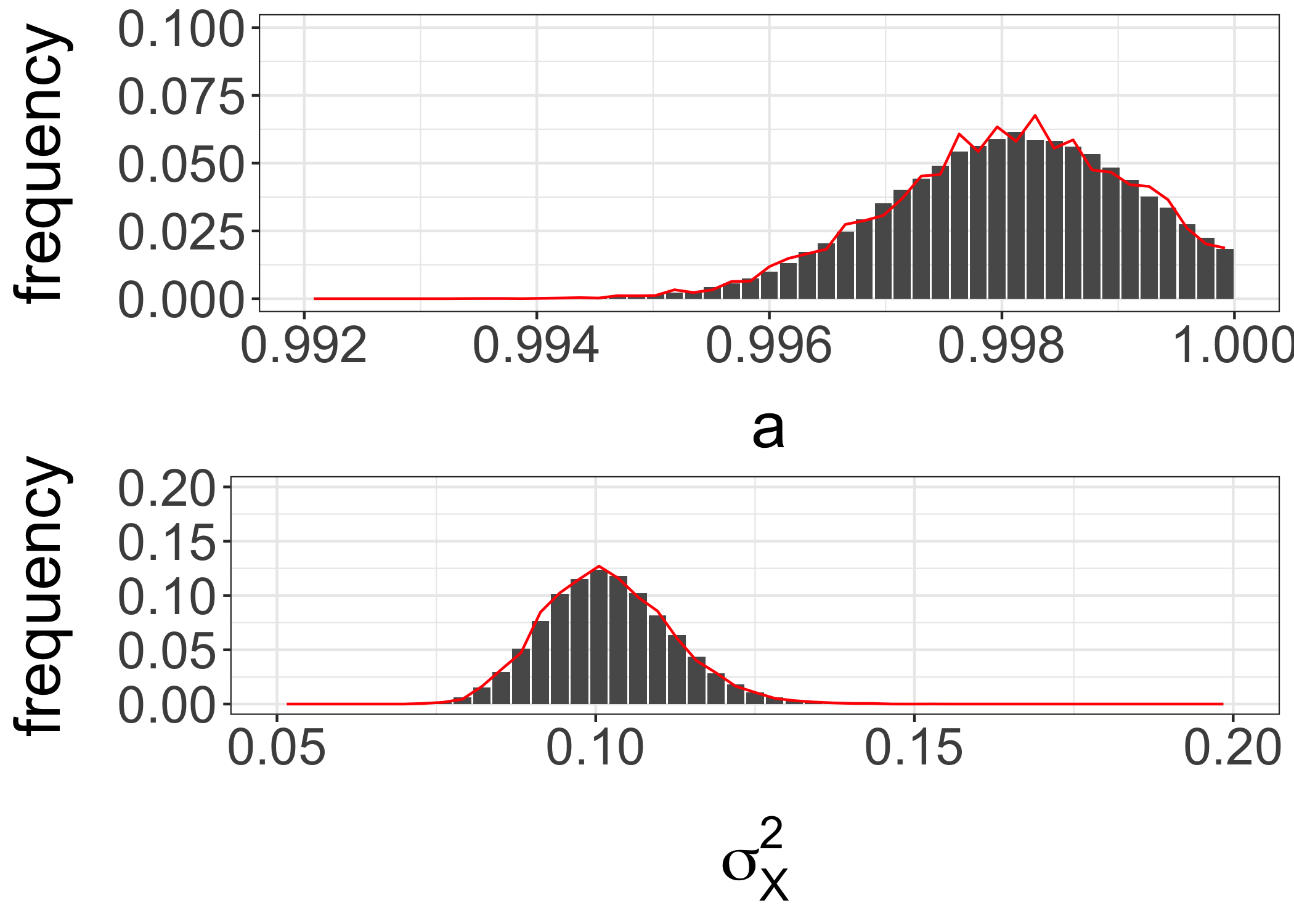}

}
\par\end{centering}
\caption{\label{fig:neuro:csmc:estimates}Left: start and end time for the
calculation of unbiased estimators on $100$ parallel processors,
for a budget of $23$ hours (dashed line), with cSMC, $N=128$ particles,
$I=3$ iterations and $k=1,000$, $m=10,000$. Right: estimated histograms
of both parameters $a$ and $\sigma_{X}^{2}$; the red lines correspond
to estimates obtained with $250,000$ iterations of PMMH and discarding
the first $10,000$ as burn-in; this is for the neuroscience experiment
of Section \ref{subsec:Neuroscience-experiment}. }
\end{figure}

We compute the loss of efficiency incurred by debiasing the PMMH chain
with the proposed estimators. We consider the test function $h:x\mapsto x_{1}+x_{2}+x_{1}^{2}+x_{2}^{2}$.
Along the PMMH chain of length $n_{\text{mcmc}}=250,000$, after a
burn-in of $n_{\text{burnin}}=10,000$ steps, and using the \texttt{spectrum0}
function of the \texttt{CODA} package, we find the asymptotic variance
associated with $h$ to be $V_{\text{as}}=7.53\cdot10^{-3}$. If we
measure computing cost in terms of MCMC iterations, we obtain an inefficiency
of $n_{\text{mcmc}}\times V_{\text{as}}/(n_{\text{mcmc}}-n_{\text{burnin}})\approx7.84\cdot10^{-3}$.
With the unbiased estimators $H_{k:m}$, if the cost of an estimator
is $2(\tau-1)+\max(1,m+1-\tau)$, then the average cost per processor
is $59,860$. The empirical variance of the unbiased estimators obtained
per processor is equal to $1.4\cdot10^{-7}$, thus we obtain an inefficiency
of $59860\times1.4\cdot10^{-7}\approx8.4\cdot10^{-3}$. This inefficiency
is slightly above $7.84\cdot10^{-3}$.

Next, we parameterize cost in terms of time (in seconds) instead of
number of MCMC steps. This accounts for the fact that running jobs
on a cluster involve heterogeneous hardware and concurrent tasks.
The serial PMMH algorithm was run on a desktop computer for $169,952$
seconds and thus the inefficiency might be measured as $169,952\times V_{\text{as}}/(n_{\text{mcmc}}-n_{\text{burnin}})\approx5.3\cdot10^{-3}$.
Note that each iteration took less than a second on average, because
parameter values proposed outside of the support of the prior were
rejected before running a particle filter; on the other hand the cost
of a cSMC run is above one second on average. For the proposed estimators,
the budget was set to $23$ hours and we obtained a variance across
processors of $1.4\cdot10^{-7}$; thus we can compute the inefficiency
as $1.16\cdot10^{-2}$, which is approximately twice the inefficiency
of the serial algorithm.

\section{Methodological extensions}

\label{sec:methodext}

The following provides two further examples of coupled MCMC algorithms
to perform inference when the likelihood function is intractable.
The associated estimators are not covered by our theoretical results.

\subsection{Block pseudo-marginal method}

Block pseudo-marginal methods have demonstrated significant computational
savings for Bayesian inference for random effects models over standard
pseudo-marginal methods \citep{tran2016block}. Such methods proceed
through introducing strong positive correlation between the current
likelihood estimate $\hat{p}(y|\theta)$ and the likelihood estimate
of the proposed parameter $\hat{p}(y|\theta')$ through only modifying
a subset of the auxiliary variables used to obtain the likelihood
estimate at each iteration. We here demonstrate the computational
benefits of such a scheme in obtaining unbiased estimators of posterior
expectations.

We focus here on random effects models, as defined in section \ref{sec:randomeffects}.
We recall that the likelihood estimate is given by $\hat{p}(y|\theta,U)=\prod_{t=1}^{T}\hat{p}(y_{t}|\theta,U_{t})$,
where $\{\hat{p}(y_{t}|\theta,U{}_{t})\}_{t=1,...,T}$ are $T$ independent
non-negative unbiased likelihood estimates of $\{p(y_{t}|\theta)\}_{t=1,...,T}$
when $U_{t}\sim m_{t}(\cdot)$. In the following, we provide a minor modification of the blocking
strategy proposed in \citet{tran2016block}, where instead of jointly
proposing a new parameter and a single block of auxiliary random variables,
a parameter update is performed, followed by sequentially iterating
through the auxiliary random variables used to construct the likelihood
estimate of observation $t$. For each data $t$, new values are proposed
according to $U_{t}'\sim m_{t}(\cdot)$ and accepted with probability
\begin{equation}
\alpha_{\mathrm{BPM},t}\left\{ \widehat{p}(y_{t}\mid\theta,U_{t}),\hat{p}(y_{t}|\theta,U'_{t})\right\} =\min\left\{ 1,\frac{\hat{p}(y_{t}|\theta,U'_{t})}{\hat{p}(y_{t}|\theta,U{}_{t})}\right\} .\label{eq:pmtaccept}
\end{equation}
As remarked in \citet{tran2016block}, such blocking strategies are
generally not applicable to particle filter inference in state space
models, whereby likelihood estimates for observation $t$ typically
depend on all auxiliary random variables generated up to and including
$t$. We provide pseudo-code for the proposed blocking strategy in
Algorithm \ref{alg:tblockedpm}. We denote by $U_{t,n}$ the set of
auxiliary variables $U_{t}$ at iteration $n$,

\begin{algorithm}[h]
\caption{\textbf{Sampling from the block pseudo-marginal kernel given $\left(\theta_{n-1},(U_{t,n-1})_{_{t\protect\geq1}}\right)$
}\label{alg:tblockedpm}}

\begin{enumerate}
\item \textsf{Sample $\theta'\sim q\left(\cdot|\theta_{n-1}\right)$ and
compute $\widehat{p}(y_{t}\mid\theta',U_{t,n-1})$ for $t=1,...,T$.}
\item \textsf{With probability $\alpha_{\mathrm{PM}}\left\{ (\theta_{n-1},\prod_{t=1}^{T}\widehat{p}(y_{t}\mid\theta_{n-1},U_{t,n-1})),(\theta',\prod_{t=1}^{T}\widehat{p}(y_{t}\mid\theta',U_{t,n-1}))\right\} $,
set $\theta_{n}=\theta'$}. \textsf{Otherwise, set }$\theta_{n}=\theta_{n-1}$.
\item \textsf{For $t=1,...,T$}
\begin{enumerate}
\item \textsf{Sample $U'_{t}\sim m{}_{t}\left(\cdot\right)$. }
\item \textsf{With probability $\alpha_{\mathrm{BPM,t}}\left\{ \widehat{p}(y_{t}\mid\theta_{n},U_{t,n-1}),\widehat{p}(y_{t}\mid\theta_{n},U'_{t})\right\} $,
set $U_{t,n}=U'_{t}$. Otherwise, set $U_{t,n}=U_{t,n-1}$.}
\end{enumerate}
\end{enumerate}
\end{algorithm}

\subsubsection{Coupled block pseudo-marginal method}
An algorithm to couple two block pseudo-marginal algorithms to construct unbiased estimators $H_{k:m}$ is provided in Algorithm \ref{alg:coupledblock}.
Denoting the two states of the chains at step $n\geq1$ by $(\theta_{n},(U_{t,n})_{_{t\geq1}})$
and $(\tilde{\theta}{}_{n-1},(\widetilde{U}_{t,n-1})_{_{t\geq1}})$,
Algorithm \ref{alg:coupledblock} describes how to obtain $(\theta_{n+1},(U_{t,n+1})_{_{t\geq1}})$
and $(\tilde{\theta}{}_{n},(\widetilde{U}_{t,n})_{_{t\geq1}})$; thus
it describes a kernel $\bar{P}$.

\begin{algorithm}[h]
\caption{\textbf{Sampling from the coupled block pseudo-marginal kernel given
$(\theta_{n},(U_{t,n})_{_{t\protect\geq1}},\tilde{\theta}{}_{n-1},(\widetilde{U}_{t,n-1})_{_{t\protect\geq1}})$}\label{alg:coupledblock}}

\begin{enumerate}
\item \textsf{Sample $(\theta',\tilde{\theta}')$ from the maximal coupling
of $q\left(\cdot|\theta_{n}\right)$ and $q(\cdot|\tilde{\theta}{}_{n-1})$.}
\item \textsf{Compute $\widehat{p}(y_{t}\mid\theta',U_{t,n})$ and $\widehat{p}(y_{t}\mid\tilde{\theta}',\widetilde{U}_{t,n-1})$
for $t=1,...,T$.}
\item \textsf{Sample $\mathit{\mathfrak{u}}\sim\mathcal{U}\left[0,1\right]$.}
\item \textsf{If $\mathit{\mathfrak{u}}<\alpha_{\mathrm{PM}}\left\{ \left(\theta_{n},\prod_{t=1}^{T}\widehat{p}(y_{t}\mid\theta_{n},U_{t,n})\right),(\theta',\prod_{t=1}^{T}\widehat{p}(y_{t}\mid\theta',U_{t,n}))\right\} $
then set $\theta_{n+1}=\theta'$. Otherwise, set }$\theta_{n+1}=\theta_{n}$.
\item \textsf{If $\mathit{\mathfrak{u}}<\alpha_{\mathrm{PM}}\left\{ (\tilde{\theta}_{n-1},\prod_{t=1}^{T}\widehat{p}(y_{t}\mid\tilde{\theta}{}_{n-1},\widetilde{U}_{t,n-1})),(\tilde{\theta}',\prod_{t=1}^{T}\hat{p}(y_{t}|\tilde{\theta}',\widetilde{U}_{t,n-1}))\right\} $
then set $\tilde{\theta}_{n}=\tilde{\theta}'$. Otherwise, set }$\tilde{\theta}_{n}=\tilde{\theta}_{n-1}$\textsf{. }
\item \textsf{For $t=1,...,T$}
\begin{enumerate}
\item \textsf{Sample $U'_{t}\sim m_{t}(\cdot)$. }
\item \textsf{Sample $\mathit{\mathfrak{u}}\sim\mathcal{U}\left[0,1\right]$.}
\item \textsf{If $\mathit{\mathfrak{u}}<\alpha_{\mathrm{BPM,t}}\left\{ \widehat{p}(y_{t}\mid\theta_{n+1},U_{t,n}),\widehat{p}(y_{t}\mid\theta_{n+1},U'_{t})\right\} $
then set $U_{t,n+1}=U'_{t}$. Otherwise, set $U_{t,n+1}=U{}_{t,n}$.}
\item \textsf{If $\mathit{\mathfrak{u}}<\alpha_{\mathrm{BPM,t}}\left\{ \widehat{p}(y_{t}\mid\tilde{\theta}_{n},\tilde{U}_{t,n-1}),\widehat{p}(y_{t}\mid\tilde{\theta}_{n},U'_{t})\right\} $
then set $\widetilde{U}_{t,n}=U'_{t}$. Otherwise, set $\widetilde{U}_{t,n}=\widetilde{U}_{t,n-1}$.}
\end{enumerate}
\end{enumerate}
\end{algorithm}

\subsubsection{Bayesian multivariate probit regression}

The following demonstrates the proposed algorithm for a latent variable
model applied to polling data and explores the possible gains when compared
to the unbiased estimators obtained using the coupled pseudo-marginal
algorithm. The data consists of polling data collected between February
2014 and June 2017 as part of the British Election Study \citep{https://doi.org/10.15127/1.293723}.
We use a multivariate probit model, which for $i\in\{1,..,T\}$ and
$j\in\{1,2,3\}$ can be expressed as $X_{ij}=\beta'\zeta_{ij}+\epsilon_{ij}$
and $Y_{ij}=\mathds{1}[X_{ij}>0]$ for observed binary response $Y_{ij}$,
latent state $X_{ij}$ and where $i$ indexes the $i^{th}$ participant,
$j$ indexes the $j^{th}$ wave of questions, $\beta$ is a vector
of regression coefficients (including an intercept) and $\zeta_{ij}$
is a vector of independent variables.

We use a random sample of $T=2,000$ participants over three waves
(one a year) in the run up to the United Kingdom's European Union
membership referendum on $23^{rd}$ June 2016, regressing the binary
outcome asking participants how they would vote in an EU referendum
against how they perceive the general economic situation in the UK
has changed over the previous 12 months (graded 1-5, with 1=`Got a
lot worse', 5=`Got a lot better'). A detailed description of the data
is provided in Appendix \ref{subsec:surveydescription}.

We allow for correlations between waves through modelling the perturbations
$(\epsilon_{i1},\epsilon_{i2},\epsilon_{i3})\sim\mathcal{N}(0,\Sigma_{\rho})$
with a generic correlation matrix $\Sigma_{\rho}$. In total, we have
five unknown parameters $\theta=(\beta_{1},\beta_{2},\rho_{2,1},\rho_{3,1},\rho_{3,2})$,
with $\beta_{1}$ denoting a regressor coefficient, $\beta_{2}$ a
constant offset and $\rho_{s,t}$ element $(s,t)$ of $\Sigma_{\rho}$.
We place independent priors on each parameter with $\beta_{1},\beta_{2}\stackrel{i.i.d.}{\sim}\mathcal{N}(0,10^{2})$ and $\rho_{s,t}\stackrel{i.i.d.}{\sim}\mathcal{U}[-1,1]$, where we additionally truncate the prior on $\Sigma_{\rho}$ to ensure support only on the manifold of positive definite matrices.

\paragraph{Inference}

For each observation $y_{i}:=(y_{i1},y_{i2},y_{i3})$, we obtain unbiased
estimates of the likelihood of $\theta$ using the sequential importance
sampling algorithm of Geweke, Hajivassiliou and Keane; see, e.g., \citet[5.6.3]{train2009discrete}
and references therein. We set the initial distribution $\pi_{0}=\mathcal{N}(\hat{\mu},0.01^{2}I)$,
supported only on areas of positive mass under the prior (we employ a simple rejection sampling algorithm to sample $\Sigma_{\rho}$ initially) and use a Normal
random walk proposal with covariance set to $\frac{2.38^{2}}{5}\hat{\Sigma}$, see \cite{roberts1997weak}, following
where $\hat{\mu}$ and $\hat{\Sigma}$ are an empirical estimate of
the posterior mean and covariance on a preliminary run of 10,000 iterations
of block pseudo-marginal with $N=40$ discarding the first 10\% as
burn-in.

We compare coupled block pseudo-marginal with coupled pseudo-marginal.
We examine values of $N$ for the latter that are close to the optimal
value of $N$ for the serial algorithm, estimated through ensuring
the variance of the log-likelihood estimates is between 1 and 2, as
per the guidance in \citet{doucet2015biometrika}. In this case we
consider $N\in\{600,700,800\}$, providing a corresponding variance
of the log-likelihood estimates given by $\{1.67,1.40,1.25\}$ (estimated
using 10,000 likelihood estimates at $\hat{\mu}$). For the block
pseudo-marginal algorithm we consider $N\in\{5,10,20\}$.

\paragraph{Meeting times}

Both algorithms were run continuously until coupling for half an hour
each on a 48 CPU Intel Xeon 2.4Ghz E5-4657L server, with the number
of estimators produced for block pseudo-marginal varying between 2,000
and 4,000 for the values of $N$ considered and between 160 and 180
for the pseudo-marginal. The meeting times are plotted in Figure \ref{fig:blockpmmt},
where it can be seen that despite the lower cost of the block pseudo-marginal
algorithm the absolute values of the meeting times are comparable
across algorithms.

Accordingly, we also plot the distribution of meeting times accounting
for the cost of running each algorithm, i.e. $N\tau$ for the pseudo-marginal
algorithm and $2N\tau$ for the block pseudo-marginal algorithm. The
additional factor of $2$ for the latter can be seen as an upper bound
on the additional computational cost of the block pseudo-marginal
algorithm, assuming twice the density evaluations per complete iteration
and less than twice the number of pseudo-random numbers generated.
Figure \ref{fig:blockpmcw} shows the results of this additional cost-weighting
where it can be seen that meeting times are between 1 and 2 orders
of magnitude larger for the pseudo-marginal over the block pseudo-marginal
algorithm.

\begin{figure}
\begin{centering}
\subfloat[]{\begin{centering}
\includegraphics[width=0.4\textwidth]{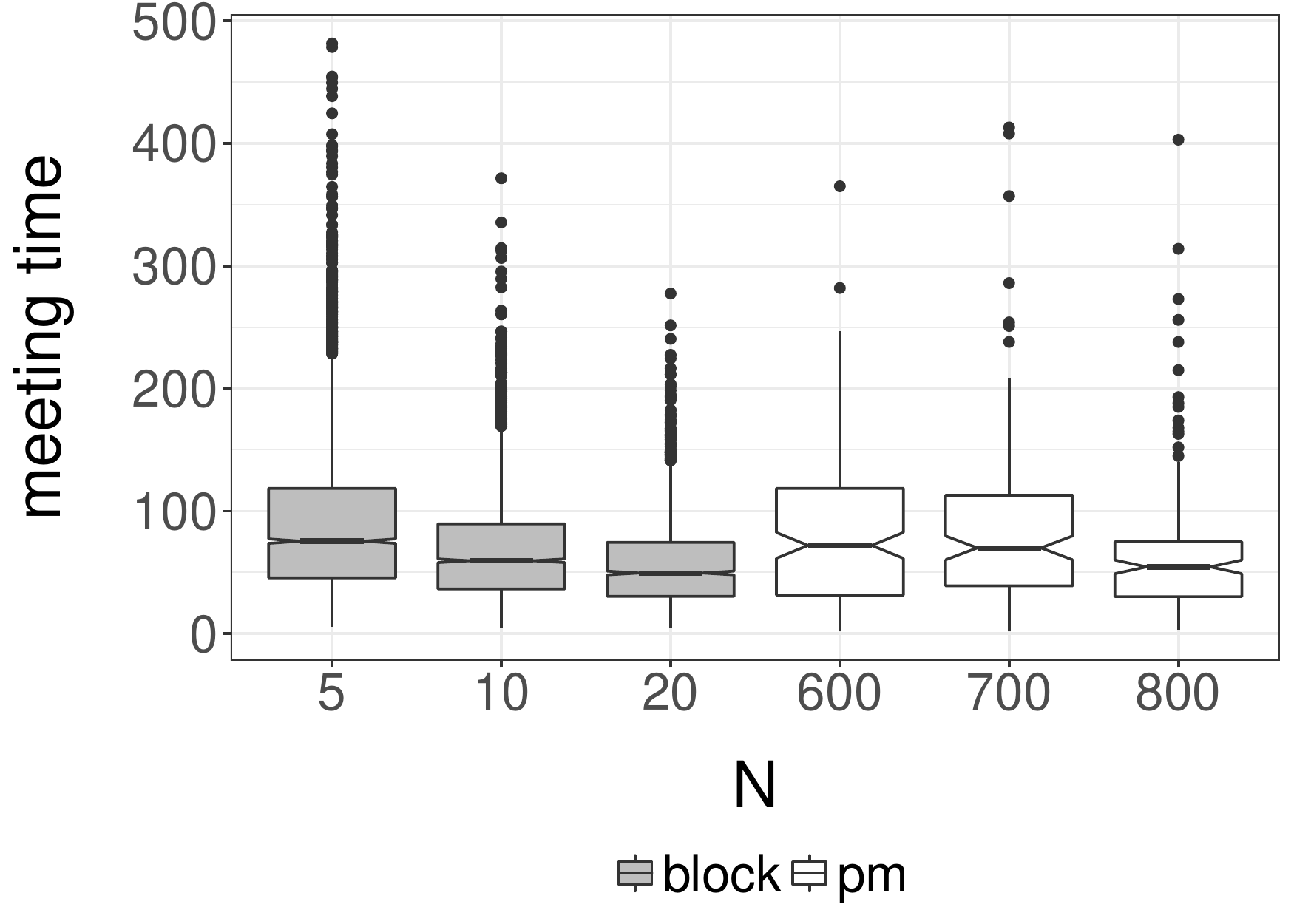}
\par\end{centering}
\label{fig:blockpmmt}}\subfloat[]{\begin{centering}
\includegraphics[width=0.4\textwidth]{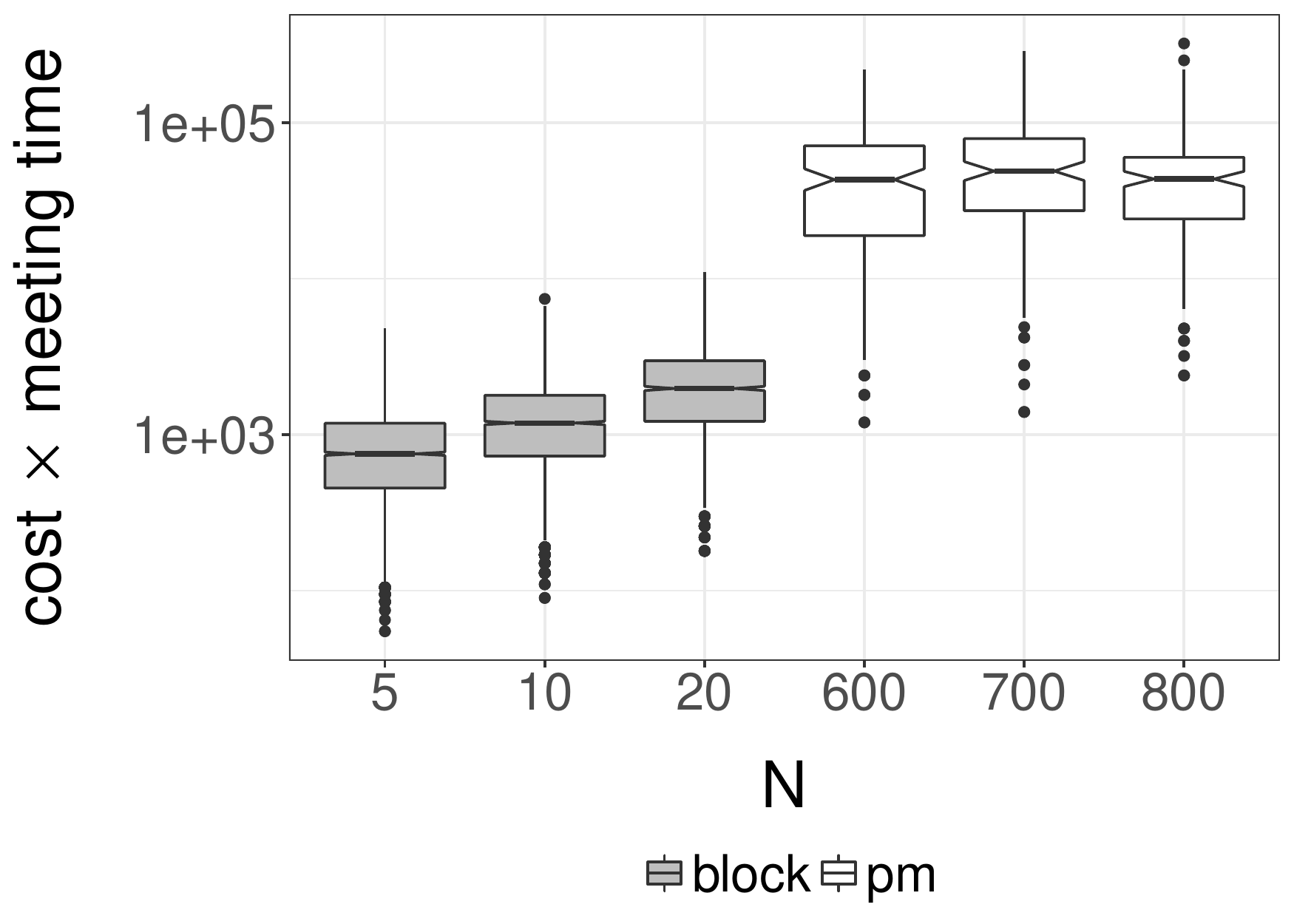}
\par\end{centering}
\label{fig:blockpmcw}}
\par\end{centering}
\caption{Meeting times for coupled block pseudo-marginal and coupled pseudo-marginal
algorithms. Left: raw meeting times for the two algorithms. Right: meeting times weighted by cost for the two algorithms, i.e. $\tau N$ for coupled pseudo-marginal and $2\tau N$ for coupled block pseudo-marginal.}
\end{figure}

\paragraph{Variance of estimators}

We estimate the increase in inefficiency of the coupled over the serial
algorithm for $N=10$, $k=500$ and $m=5,000$; the choice of $k$
is guided by the meeting times in Figure \ref{fig:blockpmmt}. Running
coupled block pseudo-marginal 200 times, we estimate the variance
using the test function $h:x\mapsto\sum_{i}(x_{i}+x_{i}^{2})$ to
be $1.05\cdot10^{-5}$. Estimating the cost of $2(\tau-1)+\max(1,m+1-\tau)$
to be 5121, implies an inefficiency of $5.36\cdot10^{-2}$. In comparison,
we estimate the inefficiency of the serial algorithm using \texttt{spectrum0.ar}
as before on runs of length 125,000 (discarding 10\% as burn-in and
averaging over 20 estimators) to be $n_{\text{mcmc}}\times V_{as}/(n_{\text{mcmc}}-n_{\text{burnin}})=4.82\cdot10^{-2}$
suggesting an increase in inefficiency of $11\%$ for the unbiased
estimators.

Finally, we compare the inefficiency of unbiased estimators generated with coupled block pseudo-marginal kernels
with those produced using standard coupled pseudo-marginal kernels with $N_{\text{PM}}=700$ particles.
For coupled pseudo-marginal, the variance of the unbiased estimator was estimated
to be $1.53\cdot10^{-5}$ and the expected cost was estimated to be
$5147$, implying an inefficiency of $7.86\cdot10^{-2}$. As a result
we estimate the improvement of inefficiency for the coupled block
pseudo-marginal by $\frac{N_{\text{PM}}}{2N}\times\frac{7.86\cdot10^{-2}}{5.36\cdot10^{-2}}$
to be approximately 51 times. Estimation of the asymptotic variance
of the serial pseudo-marginal algorithm was computationally infeasible
for this many particles, with a single iteration taking on average
six seconds on the aforementioned server, hence the choice of $N$
motivated by the guidance in \citet{doucet2015biometrika} instead.

\subsection{Exchange algorithm}

Problems where the likelihood function is only known only up to a
constant of proportionality occur frequently across Bayesian statistics;
see, e.g., \citet{park2018bayesian} for a recent account of current
methodology and applications. In this case, posterior distributions
$\pi(\theta)\propto p(y|\theta)p\left(\theta\right)$ are given by
\[
p(y|\theta)=\frac{f(y|\theta)}{\mathcal{Z}(\theta)},\quad\mathcal{Z}(\theta):=\int f(y|\theta)dy,
\]
where $f(y|\theta)$ can be evaluated pointwise but its parameter-dependent
normalizing constant $\mathcal{Z}(\theta)$ is intractable. %
This is a scenario common for undirected graphical models and spatial
point processes \citep{moller2006efficient,murray2006mcmc}. The exchange
method detailed in Algorithm \ref{alg:exchange} is an MCMC scheme
proposed by \citet{murray2006mcmc} to sample such distributions under
the assumption that, although $\mathcal{Z}(\theta)$ cannot be evaluated,
it is possible to simulate exactly artificial observations from $p(y|\theta)$.
This is indeed possible for a large class of spatial point processes
as well as the Ising and Potts models using perfect simulation procedures.

\begin{algorithm}[H]
\caption{\textbf{Sampling from the Exchange kernel given }$\theta_{n-1}$\label{alg:exchange}}

\begin{enumerate}
\item \textsf{Sample $\theta'\sim q\left(\cdot|\theta_{n-1}\right)$ and
}$Y'\sim p(\cdot|\theta')$.
\item \textsf{With probability }
\begin{equation}
\alpha_{\mathrm{EX}}\left(\theta_{n-1},\theta',Y'\right):=\min\left\{ 1,\frac{f(y|\theta')p(\theta')f(Y'|\theta_{n-1})q\left(\theta_{n-1}|\theta'\right)}{f(y|\theta_{n-1})p(\theta_{n-1})f(Y'|\theta')q(\theta'|\theta_{n-1})}\right\} ,\label{eq:Exchangeacceptproba}
\end{equation}
\textsf{set $\theta_{n}=\theta'.$ Otherwise, set }$\theta_{n}=\theta_{n-1}$.
\end{enumerate}
\end{algorithm}

\subsubsection{Coupled exchange algorithm}

An algorithm to couple two block pseudo-marginal algorithms to construct unbiased estimators $H_{k:m}$ is provided in Algorithm \ref{alg:exchange-coupled}.
Denoting the two states of the chains at step $n\geq1$ by $\theta_{n}$
and $\tilde{\theta}{}_{n-1}$, Algorithm \ref{alg:coupledPMMH} describes
how to obtain $\theta_{n+1}$ and $\tilde{\theta}{}_{n}$; thus it
describes a kernel $\bar{P}$.

\begin{algorithm}[H]
\caption{\textbf{Sampling from the coupled Exchange kernel given $(\theta_{n},\tilde{\theta}{}_{n-1})$}\label{alg:exchange-coupled}}

\begin{enumerate}
\item \textsf{Sample $\theta'$ and $\tilde{\theta}'$ from the maximal
coupling of $q\left(\cdot|\theta_{n}\right)$ and }$q(\cdot|\tilde{\theta}{}_{n-1})$.
\item \textsf{If the proposals couple, i.e. if $\theta'=\tilde{\theta}',$
then sample }$Y'\sim p(\cdot|\theta')$ \textsf{and set }$\tilde{Y}'=Y'$.
\item \textsf{If the proposals do not couple, sample $Y'\sim p(\cdot|\theta')$
and }$\tilde{Y}'\sim p(\cdot|\tilde{\theta}')$.
\item \textsf{Sample $\mathit{\mathfrak{u}}\sim\mathcal{U}\left[0,1\right]$.}
\item \textsf{If $\mathit{\mathfrak{u}}<\alpha_{\mathrm{EX}}\left(\theta_{n},\theta',Y'\right)$
then set $\theta_{n+1}=\theta'.$ Otherwise, set }$\theta_{n+1}=\theta_{n}.$
\item \textsf{If $\mathit{\mathfrak{u}}<\alpha_{\mathrm{EX}}(\tilde{\theta}_{n-1},\tilde{\theta}',\tilde{Y}')$
then set $\tilde{\theta}_{n}=\tilde{\theta}'.$ Otherwise, set }$\tilde{\theta}_{n}=\tilde{\theta}_{n-1}.$
\end{enumerate}
\end{algorithm}

\subsubsection{High temperature Ising model}

We examine the proposed algorithm for inference in a planar lattice
Ising model without an external field. The model comprises observations
$y_{i}\in\{-1,+1\}$ on a $L\times L$ square lattice such that $p(y|\theta)\propto\exp\left(\beta\sum_{i\sim j}y_{i}y_{j}\right)$
where $i\sim j$ denotes the neighbours $j$ of node $i$ and $\theta=\beta$
denotes the inverse temperature. We restrict interest to high temperature
models specifying a prior distribution $\beta\sim\mathcal{U}[0,\beta_{c}]$,
with $\beta_{c}=\frac{1}{2}\log(1+\sqrt{2})$ denoting the critical
temperature of the Ising model on the infinite lattice \citep{ullrich2013exact,onsager1944crystal}.
Here, perfect simulation can be performed using coupling from the
past techniques with simple heat bath dynamics developed by \citet{propp1996exact}.
We generate observations for $L=80$ and set the proposal covariance
to $10^{-4}I$, initialising the chains from the prior.

We obtain estimates of the distribution of meeting times using 1,000
repetitions of coupled exchange, with the results shown in Figure
\ref{fig:isingmt-1}. Based on this, we obtain unbiased estimates
of the expectation of $\beta$ under the posterior distribution\textbf{
}using $k=100$ and $m=10k$\textbf{ }over 1,000 repetitions. It is
noted that the clock time to obtain a single estimator (on the same
machine) varies significantly due to the variable computational cost
of performing coupling from the past, depending on $\theta$. We plot
a histogram of the clock times to obtain each $H_{k:m}$ in Figure
\ref{fig:isingclock}.

Based on the heterogeneity of times to produce a single unbiased estimator,
we compare the serial inefficiency with the inefficiency of coupled
exchange\textbf{ }based on the clock time to obtain a certain variance,
with the test function $h:x\mapsto x$. We estimate the asymptotic
variance with $n_{\text{mcmc}}=200,000$ iterations of the original
algorithm (discarding the first 10\% as burn-in, and using \texttt{spectrum0.ar}
as before) to be $V_{as}\approx4.22\cdot10^{-4}$, and the algorithm
taking in total $41,095$ seconds. As a result, we estimate the serial
inefficiency in terms of clock-time to be $41,095\times V_{\text{as}}/(n_{\text{mcmc}}-n_{\text{burnin}})\approx9.6\cdot10^{-5}$.
Comparatively, the mean time to return a single estimator $H_{k:m}$
was estimated to be $546$ seconds, with the variance of a single
$H_{k:m}$ estimated to be $4.78\cdot10^{-7}$ providing an estimated
inefficiency of $2.6\cdot10^{-4}$, implying a three-fold increase in inefficiency.

\begin{figure}
\centering{}\subfloat[]{\centering{}\includegraphics[width=0.4\textwidth]{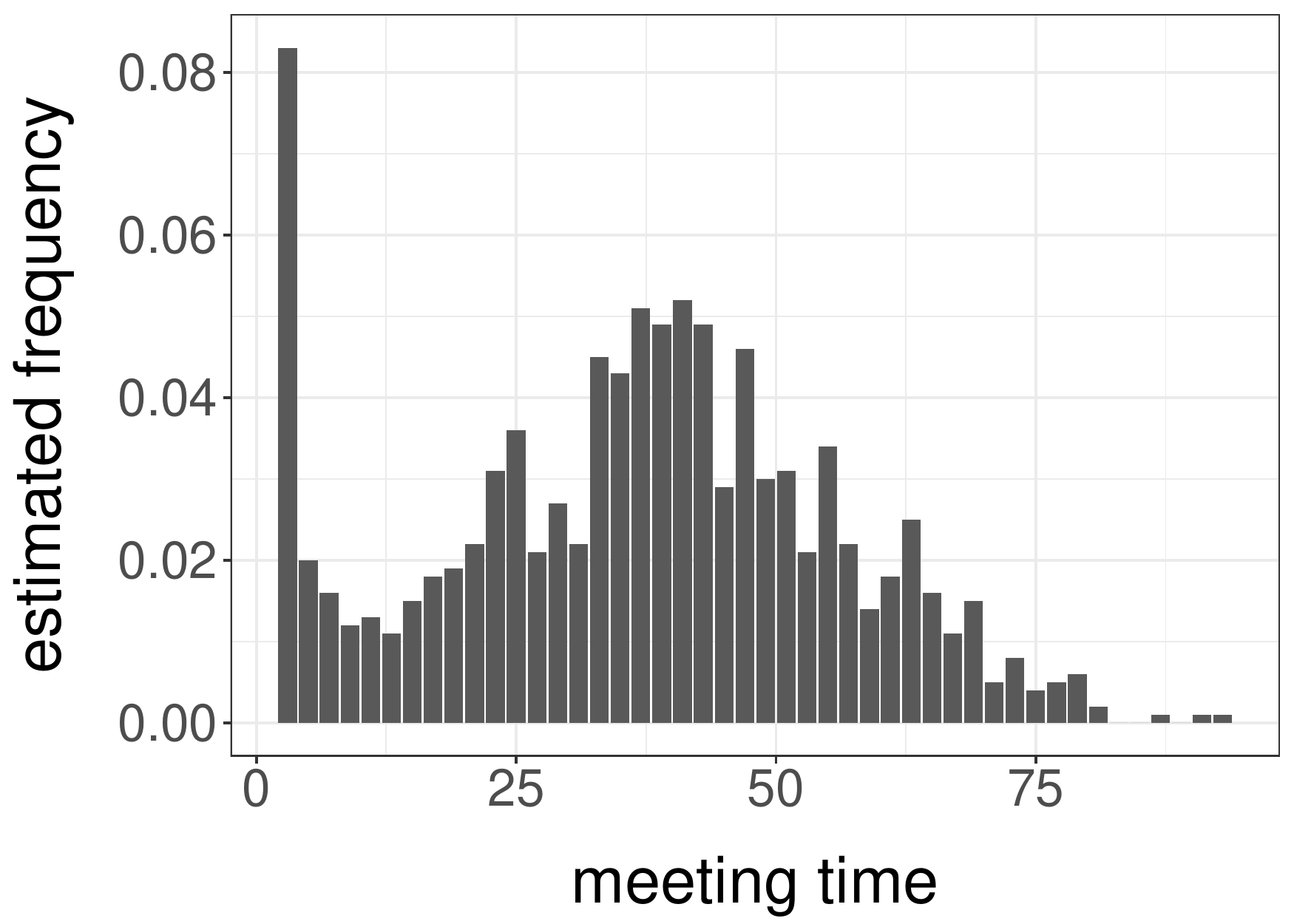}\label{fig:isingmt-1}}\subfloat[]{\centering{}\includegraphics[width=0.4\textwidth]{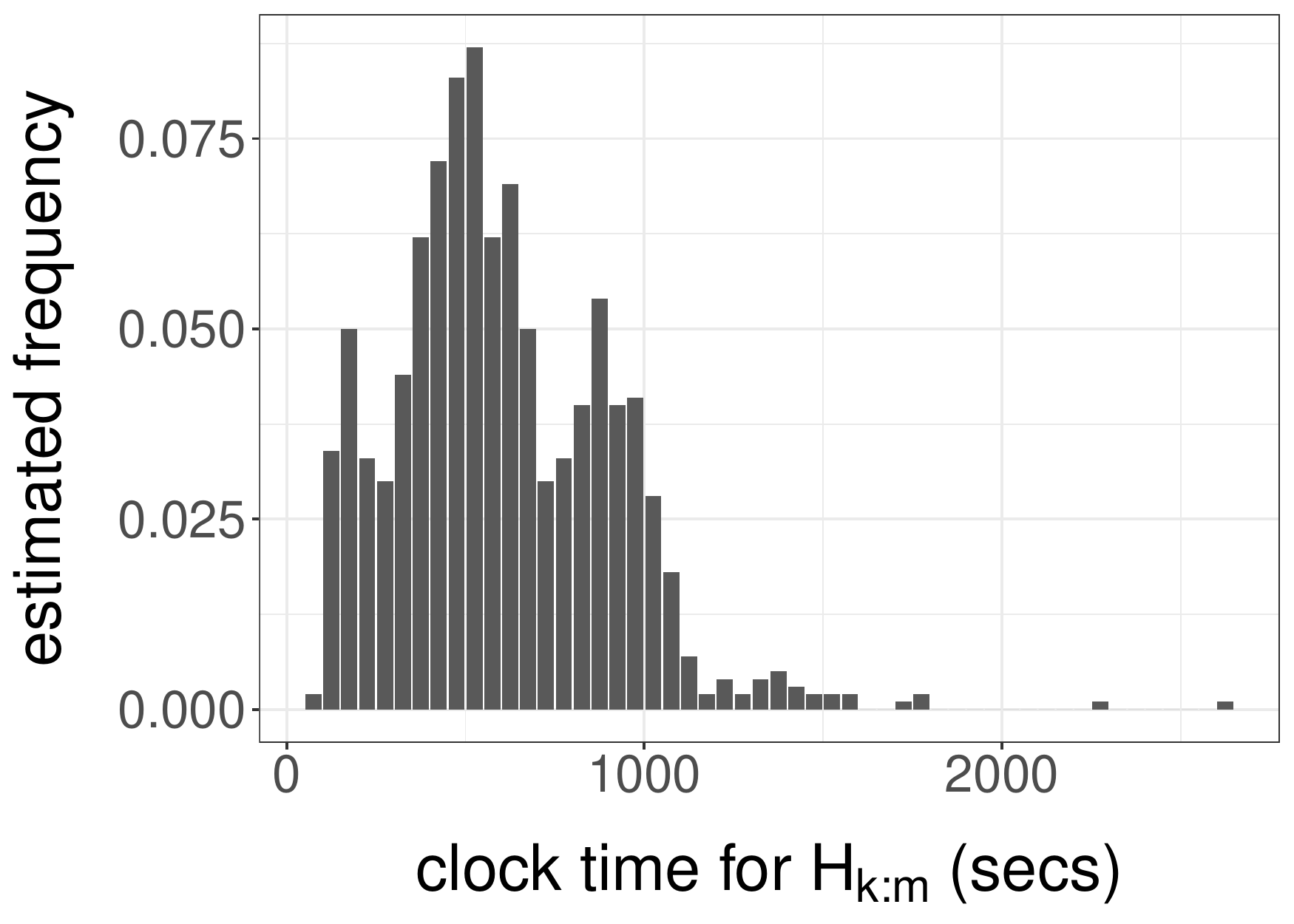}\label{fig:isingclock}}\caption{Coupled exchange algorithm for unbiased Bayesian inference with an $80\times80$ Ising model. Left: distribution of meeting times (1,000 runs). Right: clock time to obtain 1,000 unbiased estimators.}
\end{figure}

\section{Conclusion}

Markov chain Monte Carlo algorithms designed for scenarios where the target
density function is intractable can be coupled and utilized in the framework of
\citet{glynn2014exact,jacob2017unbiased}. The validity of the resulting unbiased estimators
can be related to polynomial drift conditions on the underlying Markov kernels. These estimators
open new ways of using parallel computing hardware to perform numerical integration in such scenarios.

In the context of state space models, in addition to parameter estimation, the proposed coupling strategy for PMMH
would additionally provide unbiased estimators with respect to the joint distribution over state and parameters.
This would enable unbiased smoothing under parameter uncertainty, instead of fixing the parameters as in \citet{jacob2017smoothing},
\citet{lee2018coupled} and \citet{middleton2018unbiasedpimh}.

%
%
%

\subsection*{Acknowledgement}

The authors are grateful to Jeremy Heng for very helpful discussions.
The data of Section \ref{subsec:Neuroscience-experiment} was kindly
shared by Demba Ba. The experiments of that section were performed
on the Odyssey cluster supported by the FAS Division of Science, Research
Computing Group at Harvard University. Pierre E. Jacob acknowledges
support from the National Science Foundation through grant DMS-1712872.

\bibliographystyle{abbrvnat}
\bibliography{unbiasedpmcmc}

\appendix

\section{Appendix}

In the rest of the paper we will often use the symbol $c$ to denote
a generic positive constant whose value may vary from line to line.

\subsection{Proof of Theorem \ref{thm:validity} \label{subsec:Proof-of-Theorem-Validity}}

The following provides a slight relaxation on Assumption 2.2 in \citet{jacob2017unbiased},
where geometric conditions were imposed on the tails of the distribution
of the meeting time. The following proof considers $H_{0}(Z,\tilde{Z})$
instead of $H_{k:m}$; one can first perform the same reasoning for
$H_{k}(Z,\tilde{Z})$ for all $k\geq0$, and then consider the finite
average $(m-k+1)^{-1}\sum_{\ell=k}^{m}H_{\ell}(Z,\tilde{Z})$ to obtain
the result for $H_{k:m}$.

By Assumption \ref{assumption:meetingtime}, it follows that $\mathds{E}\left[\tau\right]<\infty$.
This implies that the estimator $H_{0}$ can be computed in expected
finite time.
To show
that $H_{0}(Z,\tilde{Z})$ admits a finite variance, we proceed by
following \citet[~Proposition~3.1]{jacob2017unbiased}, adapting the
proof under the proposed weaker assumptions. We denote the complete
space of random variables with finite second moment by $L_{2}$. We
then construct a Cauchy sequence of random variables $H^{n}(Z,\tilde{Z})$
in $L_{2}$ converging to $H_{0}(Z,\tilde{Z})$, where $H^{n}(Z,\tilde{Z}):=\sum_{t=0}^{n}\Delta_{t}$
with $\Delta_{t}=h(Z_{t})-h(\tilde{Z}_{t-1})$ if $t>0$ and $\Delta_{t}=h(Z_{t})$
for $t=0$. As $\mathds{E}\left[\tau\right]<\infty$, we have $\mathds{P}(\tau<\infty)=1$
and $Z_{t}=\tilde{Z}_{t-1}$ for $t>\tau$. This implies that $H^{n}(Z,\tilde{Z})\rightarrow H_{0}(Z,\tilde{Z})$
almost surely. For positive integers $n,n'$ we have
\begin{align*}
\mathds{E}\left[\left(H^{n}(Z,\tilde{Z})-H^{n'}(Z,\tilde{Z})\right)^{2}\right] & =\sum_{s=n+1}^{n'}\sum_{t=n+1}^{n'}\mathds{E}[\Delta_{s}\Delta_{t}]\\
 & \le\sum_{s=n+1}^{n'}\sum_{t=n+1}^{n'}\mathds{E}[\Delta_{s}^{2}]^{1/2}\mathds{E}[\Delta_{t}^{2}]^{1/2}\\
 & =\left(\sum_{t=n+1}^{n'}\mathds{E}[\Delta_{t}^{2}]^{1/2}\right)^{2}.
\end{align*}
We note that $\mathds{E}[\Delta_{t}^{2}]=\mathds{E}[\Delta_{t}^{2}1_{\tau>t}]$.
Thus by Holder's inequality we obtain
\begin{align*}
\mathds{E}[\Delta_{t}^{2}] & \le\mathds{E}\left[\left|\Delta_{t}\right|^{2+\eta}\right]^{\frac{1}{1+\frac{\eta}{2}}}\mathds{E}[1_{\tau>t}]^{\frac{\eta}{2+\eta}}\\
 & \le c{}^{\frac{1}{1+\frac{\eta}{2}}}\mathds{P}(\tau>t){}^{\frac{\eta}{2+\eta}},
\end{align*}
where $\mathds{E}[\left|\Delta_{t}\right|^{2+\eta}]<c$ for all $t$
as $\mathds{E}[h(Z_{t})^{2+\eta}]<c$ by Assumption \ref{assumption:marginaldistributions}.
Consequently we have
\begin{align*}
\mathds{E}\left[\left(H^{n}(Z,\tilde{Z})-H^{n'}(Z,\tilde{Z})\right)^{2}\right] & \le\left(\sum_{t=n+1}^{n'}\left(c^{\frac{1}{1+\frac{\eta}{2}}}\mathds{P}(\tau>t){}^{\frac{\eta}{2+\eta}}\right)^{\frac{1}{2}}\right)^{2}\\
 & =c^{\frac{1}{1+\frac{\eta}{2}}}\left(\sum_{t=n+1}^{n'}\mathds{P}(\tau>t){}^{\frac{1}{2}\frac{\eta}{2+\eta}}\right)^{2}.
\end{align*}
With $\gamma=\frac{1}{2}\frac{\eta}{2+\eta}$, it follows from Assumption \ref{assumption:meetingtime} that $\mathds{P}(\tau>t)\le Kt^{-\kappa}$ for $\kappa>1/\gamma$
which yields
\begin{align*}
\sum_{t=n+1}^{\infty}\mathds{P}(\tau>t)^{\gamma} & \le K\sum_{t=n+1}^{\infty}\frac{1}{t^{\gamma\kappa}}\leq K\int_{n}^{\infty}\frac{dt}{t^{\gamma\kappa}}<\infty.
\end{align*}
We obtain $\lim_{n\rightarrow\infty}\sum_{t=n+1}^{\infty}\mathds{P}(\tau>t)^{\gamma}=0$.
This proves that $H_{n}(Z,\tilde{Z})$ is a Cauchy sequence in $L_{2}$.
We can thus conclude that the variance of $H_{0}(Z,\tilde{Z})$ is
finite and that its expectation is $\lim_{n\to\infty}\mathds{E}[H^{n}(Z,\tilde{Z})]=\lim_{n\to\infty}\mathds{E}[h(Z_{n})]=\pi(h)$.

\subsection{Proof of Theorem \ref{thm:pmmeetingtimes} \label{subsec:polyproof}}

The following establishes a bivariate drift condition that we will later use to bound moments of the hitting time to the diagonal set $\mathcal{D}$.
A similar statement
is provided in \citet[Lemma~1]{andrieu2015quantitative}.
\begin{lem}
Let $\bar{P}$ be a coupling of the Markov kernel $P$ with itself,
and $V$ be as in Assumption$\,$\ref{ass:minordrift}. Then the function
$\bar{V}(z,\tilde{z}):=V(z)+V(\tilde{z})-1$ satisfies
\begin{equation}
\bar{P}\bar{V}(z,\tilde{z})\leq\bar{V}(z,\tilde{z})-\epsilon_{b}\phi\circ\bar{V}(z,\tilde{z})+\bar{b}\mathds{1}_{\bar{C}}(z,\tilde{z}),\label{eq:driftbar}
\end{equation}
for all $(z,\tilde{z})\in\mathcal{Z\times Z}$, where $\bar{b}:=2b_{V}+\epsilon_{b}\phi(1)$
and $\bar{C}=C\times C$.
\end{lem}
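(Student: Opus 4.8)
The plan is to reduce the bivariate drift inequality \eqref{eq:driftbar} to the univariate one in Assumption~\ref{ass:minordrift} using just two facts: that $\bar P$ is a coupling, so that $\bar V$ evolves additively under $\bar P$; and that $x\mapsto x^{\alpha}$ with $\alpha\in(0,1)$ is nondecreasing and concave, so that $\phi$ is "subadditive up to the constant $\phi(1)$". This mirrors the argument behind \citet[Lemma~1]{andrieu2015quantitative}, specialised to $\phi(x)=dx^{\alpha}$.

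First I would use that, marginally under $\bar P$, the first coordinate moves according to $P$ and the second according to $P$, so that $\bar P\bar V(z,\tilde z)=PV(z)+PV(\tilde z)-1$ (with both sides $+\infty$ where $V(z)$ or $V(\tilde z)$ is infinite, so nothing is lost). Applying the univariate drift \eqref{eq:drift-2} to each coordinate then gives
\[
\bar P\bar V(z,\tilde z)\;\le\;\bar V(z,\tilde z)-\big(\phi\circ V(z)+\phi\circ V(\tilde z)\big)+b_V\big(\mathds{1}_C(z)+\mathds{1}_C(\tilde z)\big),
\]
so it remains to show
\[
\epsilon_b\,\phi\circ\bar V(z,\tilde z)-\bar b\,\mathds{1}_{\bar C}(z,\tilde z)\;\le\;\phi\circ V(z)+\phi\circ V(\tilde z)-b_V\big(\mathds{1}_C(z)+\mathds{1}_C(\tilde z)\big).
\]

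Next I would establish the key comparison $\phi\circ\bar V(z,\tilde z)+\phi(1)\le\phi\circ V(z)+\phi\circ V(\tilde z)$: since $\bar V(z,\tilde z)=V(z)+V(\tilde z)-1$ with $V(z),V(\tilde z)\ge1$, concavity and monotonicity of $\phi$ give $\phi(a+b-1)+\phi(1)\le\phi(a)+\phi(b)$ for $a,b\ge1$ (equivalently $(s+t)^{\alpha}\le s^{\alpha}+t^{\alpha}$ after a shift), and this boundary term $\phi(1)$ is exactly what makes the choice $\bar b=2b_V+\epsilon_b\phi(1)$ work. Writing $g(x):=(1-\epsilon_b)\phi\circ V(x)-b_V\mathds{1}_C(x)$ and noting
\[
\phi\circ V(z)+\phi\circ V(\tilde z)-b_V\big(\mathds{1}_C(z)+\mathds{1}_C(\tilde z)\big)=\epsilon_b\big(\phi\circ V(z)+\phi\circ V(\tilde z)\big)+g(z)+g(\tilde z)\;\ge\;\epsilon_b\,\phi\circ\bar V(z,\tilde z)+\epsilon_b\phi(1)+g(z)+g(\tilde z),
\]
the required inequality follows from $g(z)+g(\tilde z)\ge -\bar b\,\mathds{1}_{\bar C}(z,\tilde z)-\epsilon_b\phi(1)$. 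This last bound is immediate from a two-line case check: $\phi\circ V\ge\phi(1)\ge0$ gives $g(x)\ge -b_V$ everywhere, while \eqref{eq:drift2-2} gives $g(x)=(1-\epsilon_b)\phi\circ V(x)\ge b_V$ whenever $x\notin C$; hence $g(z)+g(\tilde z)\ge-2b_V\ge-\bar b$ when $(z,\tilde z)\in\bar C$, and $g(z)+g(\tilde z)\ge0$ otherwise.

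The only genuinely substantive step is the concave/subadditive comparison $\phi\circ\bar V+\phi(1)\le\phi\circ V(z)+\phi\circ V(\tilde z)$, which is precisely where the polynomial form $\phi(x)=dx^{\alpha}$, $\alpha\in(0,1)$, is used; everything else is bookkeeping with indicator functions and the constants from Assumption~\ref{ass:minordrift}. A minor point to state carefully is the convention that both sides of the drift inequalities are $+\infty$ on $\{V=\infty\}$, so that no integrability hypothesis on $V$ beyond \eqref{eq:drift-2} is needed.
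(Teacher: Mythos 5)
Your proof is correct and rests on exactly the same ingredients as the paper's: the additive decomposition $\bar{P}\bar{V}(z,\tilde z)=PV(z)+PV(\tilde z)-1$ from the coupling, the univariate drift \eqref{eq:drift-2}, the concavity comparison $\phi(a+b-1)+\phi(1)\leq\phi(a)+\phi(b)$ for $a,b\geq 1$ (which the paper derives via two applications of the mean value theorem), and \eqref{eq:drift2-2} to handle points outside $C$. The only organizational difference is that you fold the paper's explicit case split between $(z,\tilde{z})\in\bar{C}$ and $(z,\tilde{z})\notin\bar{C}$ into a single pass via the auxiliary function $g(x)=(1-\epsilon_b)\phi\circ V(x)-b_V\mathds{1}_C(x)$; this is tidy bookkeeping, not a different argument, and a small imprecision is the parenthetical suggesting the concavity bound is ``equivalently $(s+t)^{\alpha}\leq s^{\alpha}+t^{\alpha}$ after a shift'' --- it is not literally that, though the inequality $\phi(a+b-1)+\phi(1)\leq\phi(a)+\phi(b)$ you state and use is correct.
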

\begin{proof}
For $(z,\tilde{z})\notin\bar{C}$ we have
\begin{align*}
\bar{P}\bar{V}(z,\tilde{z}) & =PV(z)+PV(\tilde{z})-1\\
 & \leq V(z)+V(\tilde{z})-1-\phi\circ V(z)-\phi\circ V(\tilde{z})+b_{V}\left(\mathds{1}_{C}(z)+\mathds{1}_{C}(\tilde{z})\right)\\
 & \leq V(z)+V(\tilde{z})-1-\phi\circ V(z)-\phi\circ V(\tilde{z})+b_{V}\\
 & =V(z)+V(\tilde{z})-1-\epsilon_{b}\left[\phi\circ V(z)+\phi\circ V(\tilde{z})\right]\\
 &\qquad-(1-\epsilon_{b})\left[\phi\circ V(z)+\phi\circ V(\tilde{z})\right]+b_{V}.
\end{align*}
Since $(z,\tilde{z})\notin\bar{C}$ then at least one of $z,\tilde{z}$
is not in $C$, and $\phi\circ V\geq0$, so

\begin{align*}
 & \leq V(z)+V(\tilde{z})-1-\epsilon_{b}\left[\phi\circ V(z)+\phi\circ V(\tilde{z})\right]-(1-\epsilon_{b})\inf_{z\notin C}\phi\circ V(z)+b_{V}\\
 & \leq V(z)+V(\tilde{z})-1-\epsilon_{b}\left[\phi\circ V(z)+\phi\circ V(\tilde{z})\right]-b_{V}+b_{V}\\
 & =\bar{V}(z,\tilde{z})-\epsilon_{b}\left[\phi\circ V(z)+\phi\circ V(\tilde{z})\right],
\end{align*}
where we used (\ref{eq:drift2-2}) in Assumption$\,$\ref{ass:minordrift}.
By two applications of the mean value theorem, we have that for any
$t\geq s\geq1$ there exist $r\in[t,t+s-1]$ and $r^{\ast}\in[1,s]$
such that
\[
\phi(t+s-1)-\phi(t)=\phi'(r)\left(s-1\right),\qquad\phi(s)-\phi(1)=\phi'(r^{\ast})(s-1).
\]
By concavity, since $t\geq s$ implies that $r\geq r^{\ast}$, it
follows that $\phi'(r)\leq\phi'(r^{\ast})$ and thus
\[
\phi(t+s-1)-\phi(t)\leq\phi(s)-\phi(1),
\]
or equivalently
\[
\phi(t+s-1)+\phi(1)\leq\phi(t)+\phi(s).
\]
Therefore, with $t=\max\{V(z),V(\tilde{z})\}$ and $s=\min\{V(z),V(\tilde{z})\}$
we get
\begin{equation}
\phi\circ\bar{V}(z,\tilde{z})+\phi(1)\leq\phi\circ V(z)+\phi\circ V(\tilde{z}),\label{eq:phi_concave_lyapunov}
\end{equation}
whence
\begin{align}
\bar{P}\bar{V}(z,\tilde{z}) & \leq\bar{V}(z,\tilde{z})-\epsilon_{b}\left[\phi\circ V(z)+\phi\circ V(\tilde{z})\right]\nonumber \\
 & \leq\bar{V}(z,\tilde{z})-\epsilon_{b}\left[\phi\left(\bar{V}(z,\tilde{z})\right)+\phi(1)\right]\nonumber \\
 & \leq\bar{V}(z,\tilde{z})-\epsilon_{b}\phi\circ\bar{V}(z,\tilde{z}).\label{eq:inequality1}
\end{align}
For $(z,\tilde{z})\in\bar{C}$ we get by Assumption \ref{ass:minordrift},
\begin{align}
\bar{P}\bar{V}(z,\tilde{z}) & =PV(z)+PV(\tilde{z})-1\nonumber \\
 & \leq V(z)-\phi\circ V(z)+b_{V}+V(\tilde{z})-\phi\circ V(\tilde{z})+b_{V}-1\nonumber \\
 & =\bar{V}(z,\tilde{z})-\phi\circ V(z)-\phi\circ V(\tilde{z})+2b_{V}.\label{eq:inequality2}
\end{align}
Combining (\ref{eq:inequality1}) and (\ref{eq:inequality2}), (\ref{eq:phi_concave_lyapunov})
and the fact that $\phi\geq0$, we have\textbf{ }for any $(z,\tilde{z})$
\begin{align*}
\bar{P}\bar{V}(z,\tilde{z}) & \leq\bar{V}(z,\tilde{z})-\epsilon_{b}\phi\circ\bar{V}(z,\tilde{z})\mathds{1}_{\bar{C}^{\mathsf{C}}}(z,\tilde{z})\\
 & \qquad-\left[\phi\circ V(z)+\phi\circ V(\tilde{z})-2b_{V}\right]\mathds{1}_{\bar{C}}(z,\tilde{z})\\
 & \leq\bar{V}(z,\tilde{z})-\epsilon_{b}\phi\circ\bar{V}(z,\tilde{z})\mathds{1}_{\bar{C}^{\mathsf{C}}}(z,\tilde{z})\\
 & \qquad-\left[\phi\circ\bar{V}(z,\tilde{z})+\phi(1)-2b_{V}\right]\mathds{1}_{\bar{C}}(z,\tilde{z})\\
 & =\bar{V}(z,\tilde{z})-\epsilon_{b}\phi\circ\bar{V}(z,\tilde{z})\mathds{1}_{\bar{C}^{\mathsf{C}}}(z,\tilde{z})\\
 & \qquad-\left[\epsilon_{b}\phi\circ\bar{V}(z,\tilde{z})+(1-\epsilon_{b})\phi\circ\bar{V}(z,\tilde{z})+\phi(1)-2b_{V}\right]\mathds{1}_{\bar{C}}(z,\tilde{z})\\
 & \leq\bar{V}(z,\tilde{z})-\epsilon_{b}\phi\circ\bar{V}(z,\tilde{z})+\left[2b_{V}-\phi(1)\right]\mathds{1}_{\bar{C}}(z,\tilde{z})\\
 & \leq\bar{V}(z,\tilde{z})-\epsilon_{b}\phi\circ\bar{V}(z,\tilde{z})+\left[2b_{V}+\phi(1)\right]\mathds{1}_{\bar{C}}(z,\tilde{z}).
\end{align*}

\end{proof}

The proof of Theorem \ref{thm:pmmeetingtimes} then follows through making use of \citet[Proposition~2.1]{douc2004practical},
which we provide below for the reader's convenience, noting that the
exact statement is taken from \citet[Proposition~4]{andrieu2015quantitative}.
We borrow the following definitions from \citet{andrieu2015quantitative}.
For any non-decreasing concave function $\psi:\left[1,\infty\right)\rightarrow\left(0,\infty\right)$,
let
\begin{equation}
H_{\psi}(v):=\int_{1}^{v}\frac{dx}{\psi(x)},\quad\label{eq:hphi}
\end{equation}
Let $H_{\psi}^{-1}:[0,\infty)\to[1,\infty)$ be its inverse. For $k\in\mathbb{N}$,
$n\geq0$, $\upsilon\geq1$,
let
\begin{equation}
\begin{split}  r_{\psi}(n)&:=\frac{\psi\circ H_{\psi}^{-1}(n)}{\psi(1)}\\
H_{k}(\upsilon) & :=H_{\psi}^{-1}\left(H_{\psi}\left(\upsilon\right)+k\right)-H_{\psi}^{-1}\left(k\right).
\end{split}
\label{eq:definitions}
\end{equation}

\begin{prop}
(Proposition~2.1 from \citet{douc2004practical}). \label{propn:4}
Assume that $P$ is a Markov kernel such that for some function $V\geq1$
we have
\[
PV(z)\leq V(z)-\psi\circ V(z)+b\mathds{1}_{C}\left(z\right),
\]
where $\psi:[1,\infty)\mapsto(0,\infty)$ is a nondecreasing concave
function. Let $r_{\psi}$ and $H_{\psi}$ be defined as in (\ref{eq:definitions}).
Then we have for $V_{k}:=H_{k}\circ V$
\[
PV_{k+1}(z)\leq V_{k}(z)-\psi(1)r_{\psi}(k)+br_{\psi}(k+1)\mathds{1}_{C}(z),\qquad k\geq0.
\]
\end{prop}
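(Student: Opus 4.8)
The plan is to obtain the iterated drift inequality by applying Jensen's inequality to the concave map $H_{k+1}(\cdot) = H_\psi^{-1}(H_\psi(\cdot)+k+1) - H_\psi^{-1}(k+1)$, composed with the given one-step drift, so first I would record the relevant monotonicity/convexity facts. Writing $G := H_\psi^{-1}$, the identity $H_\psi' = 1/\psi$ and $H_\psi(1)=0$ give $G(0)=1$ and $G' = \psi\circ G$; since $\psi$ is nondecreasing this derivative is nondecreasing, so $G$ is convex and $G(k+1)-G(k) = \int_k^{k+1}\psi(G(s))\,ds \ge \psi(G(k)) = \psi(1)\,r_\psi(k)$ for every $k\ge 0$. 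For $H_{k+1}$ one computes $H_{k+1}'(\upsilon) = \psi(G(H_\psi(\upsilon)+k+1))/\psi(\upsilon) \ge 1$ (because $G(H_\psi(\upsilon)+k+1)\ge G(H_\psi(\upsilon))=\upsilon$), so $H_{k+1}$ is nondecreasing; parametrising $\upsilon = G(u)$, this derivative equals $\psi(G(u+k+1))/\psi(G(u))$, and differentiating its logarithm gives $\psi'(G(u+k+1)) - \psi'(G(u)) \le 0$ since $\psi$ is concave (so $\psi'$ nonincreasing) and $G$ increasing; hence $H_{k+1}$ is concave, with $H_{k+1}'(1) = \psi(G(k+1))/\psi(1) = r_\psi(k+1)$.

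Next, fixing $z$ (the case $V(z)=\infty$ being trivial, and $PV(z)\le V(z)+b<\infty$ otherwise), concavity of $H_{k+1}$ and Jensen give $PV_{k+1}(z) = \int H_{k+1}(V(y))\,P(z,dy) \le H_{k+1}(PV(z))$, and monotonicity of $H_{k+1}$ together with $PV(z)\le V(z)-\psi(V(z))+b\mathds{1}_C(z)$ reduces the problem to bounding $H_{k+1}(v-\psi(v)+b\mathds{1}_C(z))$, where $v := V(z)$. The key estimate, valid whenever $v-\psi(v)\ge 1$, is $H_{k+1}(v-\psi(v)) \le H_k(v) - \psi(1)r_\psi(k)$: since $\psi\le\psi(v)$ on the interval $[v-\psi(v),v]$, which has length $\psi(v)$, we get $H_\psi(v)-H_\psi(v-\psi(v)) = \int_{v-\psi(v)}^v dx/\psi(x) \ge 1$, hence $H_\psi(v-\psi(v))+k+1 \le H_\psi(v)+k$; applying the increasing $G$ and then $G(k+1)\ge G(k)+\psi(1)r_\psi(k)$ yields the claim. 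For $z\notin C$ the drift forces $v-\psi(v)\ge PV(z)\ge 1$, so together with the Jensen step this settles the off-$C$ case.

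For $z\in C$ I would write $v-\psi(v)+b = a+\beta$ with $a := \max(1, v-\psi(v))\ge 1$ and $\beta := v-\psi(v)+b-a\in[0,b]$ (nonnegativity of $\beta$ uses $PV(z)\ge 1$), and peel off $\beta$ by concavity: $H_{k+1}(a+\beta) \le H_{k+1}(a) + \beta\,H_{k+1}'(a) \le H_{k+1}(a) + b\,r_\psi(k+1)$, since $H_{k+1}'$ is nonincreasing and bounded by $H_{k+1}'(1)=r_\psi(k+1)$ on $[a,\infty)$. If $v-\psi(v)\ge 1$ then $a=v-\psi(v)$ and the key estimate finishes the case. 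The one genuinely delicate point is the subcase $v-\psi(v)<1$ (so $a=1$, $H_{k+1}(a)=0$, and only $\beta = v-\psi(v)+b-1 < b$ of the slack is used): it reduces to $\psi(1)r_\psi(k) \le H_k(v) + (\psi(v)-(v-1))\,r_\psi(k+1)$, which I would establish from $H_k(v) \ge H_\psi(v)\,\psi(G(k)) \ge \tfrac{v-1}{\psi(v)}\,\psi(G(k))$ (monotonicity of $\psi\circ G$, and $\psi\le\psi(v)$ on $[1,v]$), cancelling the positive factor $1-\tfrac{v-1}{\psi(v)}$, and using $\psi(v)\psi(G(k+1)) \ge \psi(1)\psi(G(k))$ (from $\psi(v)\ge\psi(1)$ and $\psi(G(k+1))\ge\psi(G(k))$). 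Combining the on- and off-$C$ cases with the Jensen/monotonicity step gives the proposition. I expect the main obstacles to be exactly the concavity bookkeeping for $H_{k+1}$ — the only place where concavity of $\psi$ is used essentially — and this last subcase on $C$; the rest is elementary integral estimation.
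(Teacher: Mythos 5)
Your proof is correct: all the steps check out, including the delicate subcase $z\in C$, $V(z)-\psi(V(z))<1$, where the reduction to $\psi(1)r_{\psi}(k)\leq H_{k}(v)+(\psi(v)-(v-1))r_{\psi}(k+1)$ and its verification via $H_{k}(v)\geq\frac{v-1}{\psi(v)}\psi(G(k))$ and $\psi(v)\psi(G(k+1))\geq\psi(1)\psi(G(k))$ are valid. Note, however, that the paper does not prove this statement at all: Proposition \ref{propn:4} is imported verbatim from \citet{douc2004practical} (as restated in \citet{andrieu2015quantitative}) and used as a black box in the proof of Theorem \ref{thm:pmmeetingtimes}. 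So the relevant comparison is with the original source, and your argument essentially reconstructs that proof: concavity and monotonicity of $H_{k+1}$, Jensen's inequality to pass from $PV_{k+1}(z)$ to $H_{k+1}(PV(z))$, the key estimate $H_{\psi}(v)-H_{\psi}(v-\psi(v))\geq1$ combined with $G(k+1)-G(k)\geq\psi(1)r_{\psi}(k)$, and the tangent-line bound $H_{k+1}(a+\beta)\leq H_{k+1}(a)+\beta H_{k+1}'(1)$ to absorb the $b\mathds{1}_{C}$ term. Two small technical points you should tidy up if writing this out in full: a concave $\psi$ need not be differentiable, so the monotonicity-of-$H_{k+1}'$ computation should be phrased with one-sided derivatives or chord slopes (concavity of $H_{k+1}$ can equivalently be obtained from the nonincreasing difference quotients of $\psi\circ G$); and the inverse $H_{\psi}^{-1}$ is defined on all of $[0,\infty)$ because concavity forces at most linear growth of $\psi$, hence $H_{\psi}(v)\to\infty$ as $v\to\infty$ — worth a sentence so that $G(k)$, and thus $r_{\psi}(k)$ and $H_{k}$, are well defined for every $k$.
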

Equipped with the above results we proceed to the proof of Theorem \ref{thm:pmmeetingtimes}.
Applying Proposition \ref{propn:4} with $\bar{P}$, $\bar{V}$, $\psi=\epsilon_{b}\phi$ and
and $b=\bar{b}$, then letting
\[
r(n) :=\frac{\phi\circ H_{\phi}^{-1}(\epsilon_{b}n)}{\phi(1)}
\]
we have the sequence of drift conditions
\[
\bar{P}\bar{V}_{k+1}(z,\tilde{z})\leq\bar{V}_{k}(z,\tilde{z})-\epsilon_{b}\phi(1)r(k)+\bar{b}r(k+1)\mathds{1}_{\bar{C}}(z,\tilde{z}),\quad k\geq0,
\]
where $\bar{V}_{k}:=H_{k}\circ\bar{V}$. Letting $\tilde{V}_{k}=\bar{V}_{k}+1\geq1$
we obtain
\begin{equation}
\bar{P}\tilde{V}_{k+1}(z,\tilde{z})\leq\tilde{V}_{k}(z,\tilde{z})-\epsilon_{b}\phi(1)r(k)+\bar{b}r(k+1)\mathds{1}_{\bar{C}}(z,\tilde{z}),\quad k\geq0.\label{eq:driftsequence}
\end{equation}
To proceed we follow the proof of \citet[Proposition~2.5]{douc2004practical},
specifically the steps leading up to \citet[Equation~(2.6)]{douc2004practical}.
Notice that by Assumption~\ref{ass:pbarproperties} the diagonal $\mathcal{D}$ is an accessible set, since clearly $\pi_{\mathcal{D}}(\mathcal{D})=1>0$. Therefore by Dynkin's formula we have
\begin{align*}
\epsilon_b \phi(1)\mathds{E}_{z,\tilde{z}}
\left[ \sum_{k=0}^{\tau_\mathcal{D}-1} r(k)\right]
&\leq \tilde{V}_0(z,\tilde{z})
+ \bar{b} \mathds{E}_{z,\tilde{z}}\left[
		\sum_{k=0}^{\tau_\mathcal{D}-1} r(k+1) \mathds{1}_{\bar{C}} (\Xi_k)	\right],
\end{align*}
where in the above, $\mathds{E}_{z,\tilde{z}}$ denotes expectation with
respect to the probability measure under which the joint chain $\Xi_n:=\left(Z_{n},\tilde{Z}_{n-1}\right)$
is initialized at $\left(z,z'\right)$ and evolves according to the
transition kernel $\bar{P}$, $\tau_{\mathcal{D}}:=\inf\left\{ n\ge1:\Xi_{n}\in \mathcal{D}\right\} $,
$c_{1},c_{2}$ are positive constants depending on the set $B$ and
the various constants in the drift condition, but not on $(z,\tilde{z})$.
Notice that by Assumption~\ref{ass:pbarproperties} we have that for all
$(z,\tilde{z})\in \bar{C}$, and $\rho\in(0,1)$
$$K_\rho\left((z,\tilde{z}), \mathcal{D} \right)
:= \sum_{i=0}^\infty \rho^i \bar{P}^i\left((z,\tilde{z}), \mathcal{D} \right)
\geq \rho^{n_0} \epsilon.$$
In particular it easily follows that
$$\mathds{1}_{\bar{C}}\left((z,\tilde{z})\right) \leq (\rho^{n_0} \epsilon)^{-1}K_\rho \left((z,\tilde{z}), \mathcal{D} \right),$$
and therefore continuing from above
\begin{align*}
&\epsilon_b \phi(1)\mathds{E}_{z,\tilde{z}}
\left[ \sum_{k=0}^{\tau_\mathcal{D}-1} r(k)\right] \\
&\qquad\leq \tilde{V}_0(z,\tilde{z})
+ \frac{\bar{b}}{\rho^{n_0}\epsilon} \mathds{E}_{z,\tilde{z}}\left[
		\sum_{k=0}^{\tau_\mathcal{D}-1} r(k+1) K_\rho(\Xi_k, \mathcal{D})	\right]\\
&\qquad= \tilde{V}_0(z,\tilde{z})
+ \frac{\bar{b}}{\rho^{n_0}\epsilon} \sum_{i=0}^\infty \rho^i \mathds{E}_{z,\tilde{z}}\left[
		\sum_{k=0}^{\tau_\mathcal{D}-1} r(k+1) \bar{P}^i(\Xi_k, \mathcal{D})	\right]\\	
&\qquad= \tilde{V}_0(z,\tilde{z})
+ \frac{\bar{b}}{\rho^{n_0}\epsilon}  \sum_{i=0}^\infty \rho^i \sum_{k=0}^\infty\mathds{E}_{z,\tilde{z}}\Big[
		\mathds{1}\{k\leq \tau_\mathcal{D}-1\} r(k+1) \mathds{1}_{\mathcal{D}}(\Xi_{k+i})	\Big].
\end{align*}
A careful look above reveals that the integrand will be non-zero only for $k$ such that
$\tau_\mathcal{D}\leq k+i$ and $k\leq \tau_\mathcal{D}-1$. There are at most $i$ such values of $k$, and since $r(\cdot)$ is non-decreasing for each one of these values we will have $r(k+1)\leq r(\tau_\mathcal{D})$. Therefore
\begin{align*}
\epsilon_b \phi(1)\mathds{E}_{z,\tilde{z}}
\left[ \sum_{k=0}^{\tau_\mathcal{D}-1} r(k)\right]
&\leq \tilde{V}_0(z,\tilde{z})
+ \frac{\bar{b}}{\rho^{n_0}\epsilon} \sum_{i=0}^\infty \rho^i i \times
	\mathds{E}_{z,\tilde{z}}\Big[ r(\tau_\mathcal{D})\Big].
\end{align*}
Similarly to the proof of \citet[Proposition~2.5]{douc2004practical}, using the fact that $r(\cdot)$ grows sub-geometrically we can find for any $\delta>0$ a constant $c(\delta)>0$ such that
$$r(k)\leq \delta \sum_{j=0}^{k-1}r(j) + c(\delta),$$ and therefore conclude that for some constants $c_1, c_2$, independent of $(z,\tilde{z})$,  we have
\[
\mathds{E}_{z,\tilde{z}}\left[\sum_{k=0}^{\tau_{\mathcal{D}}-1}r(k)\right]\leq\frac{\tilde{V}{}_{0}(z,\tilde{z})+c_{1}}{c_{2}}.
\]
From the definition
of $\phi(y)$ we have that
\[
r(n)=\left[d(1-\alpha)\epsilon_{b}n+1\right]^{\alpha/(1-\alpha)}\geq cn^{\alpha/(1-\alpha)},
\]
where recall that $c$ denotes a generic constant whose value may
change from line to line. Thus for any $N$
\begin{align*}
\sum_{k=0}^{N}r(k) & \geq c\sum_{k=0}^{N}k^{\alpha/(1-\alpha)}\geq c\int_{x=0}^{N}x^{\alpha/(1-\alpha)}dx=cN^{1/(1-\alpha)},
\end{align*}
hence we obtain
\begin{align*}
\mathds{E}_{z,\tilde{z}}\left[\tau_{\mathcal{D}}^{1/(1-\alpha)}\right] & \leq c\mathds{E}_{z,\tilde{z}}\left[\sum_{k=0}^{\tau_{\mathcal{D}}-1}r(k)\right]\leq c\frac{\tilde{V}{}_{0}(z,\tilde{z})+c_{1}}{c_{2}}.
\end{align*}
We have that the chain $\left(Z_{n},\tilde{Z}_{n-1}\right)$ is initialised at $n=1$ under $\pi_{0}P\otimes\pi_{0}$.
Recalling the definition of $\tilde{V}_{0}$ we have that $\tilde{V}_{0}(z,\tilde{z})\le V(z)+V(\tilde{z})$ and as $\pi_{0}$ is compactly supported, $\pi_{0}\left(V\right)<\infty$.
Similarly by Assumption \ref{ass:minordrift} we have that $\pi_{0}P(V)<\infty$, in which case it follows that $\mathds{E}_{\pi_{0}P\otimes\pi_{0}}\left[\tau_{\mathcal{D}}^{1/(1-\alpha)}\right]<\infty$.
An application of Markov's inequality completes the proof
\[
\mathds{P}_{\pi_{0}P\otimes\pi_{0}}\left[\tau_{\mathcal{D}}\geq t\right]\leq\frac{\mathds{E}_{\pi_{0}P\otimes\pi_{0}}[\tau_{\mathcal{D}}^{1/(1-\alpha)}]}{t^{1/(1-\alpha)}}\leq\frac{c}{t^{1/(1-\alpha)}}.
\]

\subsection{Proof of Proposition \ref{prop:george33}}

\label{subsec:proofprop33}

To fix notation, we have that for any measurable functions $W:\mathcal{Z}\to[1,\infty)$,
$g:\mathcal{Z}\to\mathds{R}$, and a finite signed measure $\mu$
on $\mathcal{X}$, we write
\[
|g|_{W}:=\sup_{z\in\mathcal{Z}}\frac{|g(z)|}{W(z)},\qquad\|\mu\|_{W}:=\sup_{f:\|f\|_{W}\leq1}|\mu(f)|.
\]

Our starting point is Assumption \ref{ass:minordrift} which we restate
here
\begin{equation}
PV(z)\leq V(z)-dV^{\alpha}(z)+b_{V}\mathds{1}_{C}\left(z\right),\label{eq:drift_new}
\end{equation}
for some function $V:\mathcal{Z}\to[1,\infty)$, some $\alpha\in(0,1)$,
constants $b_{V},d>0$ and a small set $C$. As before we assume that
$(Z_{n},\tilde{Z}_{n-1})$ evolves according to $\bar{P}$, and that
marginally the components $Z_{n}$ and $\tilde{Z}_{n}$ evolve according
to $P$. Notice that we write $\mathds{E}$ for the measure with the
chains started from $\pi_{0}$ and $\mathds{E}_{\pi}$ for the measure
with the chains initialized at $\pi$.

By \citet[Lemma~3.5]{jarner2002polynomial} for any $\eta\in(0,1)$
there exist $b',d'>0$ such that
\begin{equation}
PV^{\gamma}(z)\leq V^{\gamma}(z)-d'V^{\alpha+\gamma-1}(z)+b'\mathds{1}_{C}\left(z\right).\label{eq:modified_drift}
\end{equation}
With $\gamma \in(1-\alpha,1)$ as in the statement of Proposition$\,$\ref{prop:george33},\textbf{
}we have that $\alpha+\gamma-1\in(0,1)$. Under this assumption, from
\eqref{eq:modified_drift}, \citet[Theorem~14.0.1]{meyn2009} applied
with $f=V^{\alpha+\gamma-1}$ and the fact that $\pi$ is a maximal
irreducibility measure (see \citet[Proposition~10.1.2]{meyn2009}),
it follows that $\pi(S_{V})=1$, with $S_{V}$ as defined in the statement
of Proposition \ref{prop:george33}. From this we conclude that $V$
is $\pi$-a.e.$\:$finite. Also from \citet[Theorem~14.0.1]{meyn2009},
since $\pi(V^{\gamma})\leq\pi(V^{4\gamma})^{1/4}<\infty$ by assumption,
we have that for all $\pi$-a.e.\ $z\in\mathcal{Z}$ there exists
a finite constant $c$ such that
\begin{equation}
\sum_{n=0}^{\infty}\|P^{n}(z,\cdot)-\pi\|_{V^{\alpha+\gamma-1}}\leq c(1+V^{\gamma}(z)).\label{eq:f_convergence}
\end{equation}
Since by assumption $|h|_{V^{\alpha+\gamma-1}}<\infty$,\textbf{ }we
have
\begin{align*}
\sum_{n=0}^{\infty}\left|P^{n}[h-\pi(h)](z)\right| & \leq\|h\|_{V^{\alpha+\gamma-1}}\sum_{n=0}^{\infty}\|P^{n}(z,\cdot)-\pi\|_{V^{\alpha+\gamma-1}}\\
 & \leq c\|h\|_{V^{\alpha+\gamma-1}}(1+V^{\gamma}(z))<\infty,
\end{align*}
for $\pi$-almost all $z$. Therefore the function
\[
g(z):=\sum_{j=0}^{\infty}P^{j}\left[h-\pi(h)\right](z)
\]
is well-defined and satisfies $|g|_{V^{\gamma}}<\infty$, $\pi(g^{2})<\infty$,
where the second property follows from $\pi(V^{4\gamma})<\infty$. In
particular it follows that $g-Pg=h-\pi(h)$, and therefore $g$ is
the solution to the Poisson equation with respect to $P$ and $h$.
We continue with the calculation in the proof of \citet[Proposition~3.3]{jacob2017unbiased}.
Let
\begin{align*}
S_{j}^{(N)}:=\mathds{1}\{\tau_{\mathcal{D}}>j\}\sum_{t=j}^{N\wedge\tau_{\mathcal{D}}-1}b_{t}\left[h(Z_{t})-h(\tilde{Z}_{t-1})\right],
\end{align*}
where $(b_{t})_{t\geq0}$ is an arbitrary bounded sequence. Writing
$\mathbf{Z}_{t}:=(Z_{t},\tilde{Z}_{t-1})$, $\bar{g}(x,y)=g(x)-g(y)$
and $\bar{P}$ for the transition kernel of $\mathbf{Z}_{t}$ we then
have
\begin{align*}
h(Z_{t})-h(\tilde{Z}_{t-1}) & =\left[h(Z_{t})-\pi(h)\right]-\left[h(\tilde{Z}_{t-1})-\pi(h)\right]\\
 & =\left[g(Z_{t})-Pg(Z_{t})\right]-\left[g(\tilde{Z}_{t-1})-Pg(\tilde{Z}_{t-1})\right]\\
 & =\left[g(Z_{t})-g(\tilde{Z}_{t-1})\right]-\left[Pg(Z_{t})-Pg(\tilde{Z}_{t-1})\right]\\
 & =\bar{g}(\mathbf{Z}_{t})-\bar{P}\bar{g}(\mathbf{Z}_{t}),
\end{align*}
where we used the fact that, by construction of $\bar{P}$, we have
$\bar{P}\bar{g}(z,\tilde{z})=Pg(z)-Pg(\tilde{z})$.

Then from \citet[Equation~(A.3)]{jacob2017unbiased} we have
\begin{align*}
\mathds{E}\left\{ \left[S_{j}^{(N)}\right]^{2}\right\} &\leq4\sum_{t=j}^{N-1}b_{t}^{2}\mathds{E}\left\{ \left[\bar{g}(\mathbf{Z}_{t+1})-\bar{P}\bar{g}(\mathbf{Z}_{t})\right]^{2}\mathds{1}\{\tau_{\mathcal{D}}>t\}\right\}  \\
 &\qquad +4b_{j}^{2}\mathds{E}\left[\bar{g}^{2}(\mathbf{Z}_{j})\mathds{1}\{\tau_{\mathcal{D}}>j\}\right]+4b_{N}^{2}\mathds{E}\left[\bar{g}^{2}(\mathbf{Z}_{N})\mathds{1}\{\tau_{\mathcal{D}}>N\}\right] \\
 &\qquad+4\left\{ \sum_{t=j}^{N-1}|b_{t+1}-b_{t}|\mathds{E}^{1/2}\left[\bar{g}^{2}(\mathbf{Z}_{t+1})\mathds{1}\{\tau_{\mathcal{D}}>t+1\}\right]\right\} ^{2},
\end{align*}
and we proceed to bound these terms. Letting $\mathcal{F}_{t}:=\sigma\left(\boldsymbol{Z_{s}};0\leq s\leq t\right)$,
notice that
\begin{align*}
\lefteqn{} & \mathds{E}\left\{ \left[\bar{g}(\mathbf{Z}_{t+1})-\bar{P}\bar{g}(\mathbf{Z}_{t})\right]^{2}\mathds{1}\{\tau_{\mathcal{D}}>t\}\right\} \\
& \mathds{E}\left\{\mathds{E}\left[ \left. \left(\bar{g}(\mathbf{Z}_{t+1})-\bar{P}\bar{g}(\mathbf{Z}_{t})\right)^{2}\mathds{1}\{\tau_{\mathcal{D}}>t\} \right| \mathcal{F}_t \right] \right\} \\ 
 & =\mathds{E}\left\{ \bar{g}(\mathbf{Z}_{t+1})^{2}\mathds{1}\{\tau_{\mathcal{D}}>t\}\right\} -\mathds{E}\left\{ \bar{P}\bar{g}(\mathbf{Z}_{t})^{2}\mathds{1}\{\tau_{\mathcal{D}}>t\}\right\} \\
 & \leq\mathds{E}\left\{ \bar{g}(\mathbf{Z}_{t+1})^{2}\mathds{1}\{\tau_{\mathcal{D}}>t\}\right\} \leq|g|_{V^{\gamma}}^{2}\mathds{E}\left\{ \left[V^{\gamma}(Z_{t})+V^{\gamma}(\tilde{Z}_{t-1})\right]^{2}\mathds{1}\{\tau_{\mathcal{D}}>t\}\right\} .
\end{align*}
We next bound the last quantity using the fact that $(a+b)^{2}\leq2a^{2}+2b^{2}$
and the Cauchy-Schwarz inequality
\begin{align*}
\mathds{E}&\left\{ \left[V^{\gamma}(Z_{t})+V^{\gamma}(\tilde{Z}_{t-1})\right]^{2}\mathds{1}\{\tau_{\mathcal{D}}>t\}\right\}   \leq2\mathds{E}\left\{ \left[V^{2\gamma}(Z_{t})+V^{2\gamma}(\tilde{Z}_{t-1})\right]\mathds{1}\{\tau_{\mathcal{D}}>t\}\right\} \\
 & \leq c\left[\mathds{E}\left\{ V^{4\gamma}(Z_{t})\right\} +\mathds{E}\left\{ V^{4\gamma}(\tilde{Z}_{t-1})\right\} \right]^{1/2}\mathds{P}\left(\tau_{\mathcal{\mathcal{D}}}>t\right)^{1/2}.
\end{align*}
Finally notice that since $V$ is non-negative
\begin{align*}
\mathds{E}\left\{ V^{4\gamma}(Z_{t})\right\}  & \leq\left\Vert \frac{d\pi_{0}}{d\pi}\right\Vert _{\infty}\mathds{E}_{\pi}\left\{ V^{4\gamma}(Z_{t})\right\} \\
& \leq\left\Vert \frac{d\pi_{0}}{d\pi}\right\Vert _{\infty}\mathds{E}_{\pi}\left\{ V^{4\gamma}(Z_{0})\right\} =c\pi(V^{4\gamma})<\infty,
\end{align*}
where we used the fact that when started from $\pi$ and evolved through
$\bar{P}$, the Markov chain $\{Z_{t}\}_{t\geq0}$ is stationary.
From the above and Theorem$\,$\ref{thm:pmmeetingtimes} we conclude
that there exists a positive constant $c<\infty$ such that
\begin{align*}
\mathds{E}\left\{ \left[\bar{g}(\mathbf{Z}_{t+1})-\bar{P}\bar{g}(\mathbf{Z}_{t})\right]^{2}\mathds{1}\{\tau_{\mathcal{D}}>t\}\right\} \leq\frac{c}{t^{\kappa/2}}.
\end{align*}
On the other hand for terms of the form $\mathds{E}[\bar{g}^{2}(\mathbf{Z}_{t})\mathds{1}\{\tau_{\mathcal{D}}>t\}]$,
using the same techniques we have
\begin{align*}
\mathds{E}\left[\bar{g}^{2}(\mathbf{Z}_{t})\mathds{1}\{\tau_{\mathcal{D}}>t\}\right]\leq|g|_{V^{\gamma}}^{2}\mathds{E}\left\{ \left[V^{\gamma}(Z_{t})+V^{\gamma}(\tilde{Z}_{t-1})\right]^{2}\mathds{1}\{\tau_{\mathcal{D}}>t\}\right\} \leq\frac{c}{t^{\kappa/2}}.
\end{align*}
Overall we thus have that
\begin{align*}
\mathds{E}\left\{ \left[S_{j}^{(N)}\right]^{2}\right\}  & \leq c\left[\frac{b_{j}^{2}}{j^{\kappa/2}}+\frac{b_{N}^{2}}{N^{\kappa/2}}+\sum_{t=j}^{N-1}\frac{b_{t}^{2}}{t^{\kappa/2}}+\left(\sum_{t=j}^{N-1}\frac{|b_{t+1}-b_{t}|}{t^{\kappa/2}}\right)^{2}\right],\\
\mathds{E}\left\{ S_{j}^{2}\right\}  & \leq c\left[\frac{b_{j}^{2}}{j^{\kappa/2}}+\sum_{t\geq j}\frac{b_{t}^{2}}{t^{\kappa/2}}+\left(\sum_{t=j}^{\infty}\frac{|b_{t+1}-b_{t}|}{t^{\kappa/2}}\right)^{2}\right],
\end{align*}
where $S_{j}:=\lim_{N\to\infty}S_{j}^{\left(N\right)}$ is the limit
in the $L^{2}$ sense as in \citet[Proposition~3.1]{jacob2017unbiased}.
Setting $b_{j}=0$, $b_{t}:=(t-j)/(m-j+1)$ for $j<t<m+1$ and $b_{t}:=1$
for $t>m+1$ we then obtain
\begin{align*}
\mathds{E}\left[S_{j}^{2}\right] & \leq c\left[\sum_{t=j+1}^{m}\frac{(t-j)^{2}}{(m-j+1)^{2}t^{\kappa/2}}+\sum_{t=m+1}^{\infty}\frac{1}{t^{\kappa/2}}+\left(\sum_{t=j}^{m+1}\frac{1}{(m-j+1)t^{\kappa/2}}\right)^{2}\right].
\end{align*}
For the first term notice that, after changing variables $r=t-j$
and writing $M=m-j+1$ we have
\begin{align*}
\sum_{t=j+1}^{m+1}\frac{(t-j)^{2}}{(m-j+1)^{2}t^{\kappa/2}} & =\frac{1}{M^{2}}\sum_{r=1}^{M}\frac{r^{2}}{(r+j)^{\kappa/2}}\\
 & \leq\frac{c}{M^{2}}\int_{x=1}^{M}\frac{x^{2}}{(x+j)^{\kappa/2}}dx\\
 & =\frac{c}{M^{2}j^{\kappa/2}}\int_{x=1}^{M}\frac{x^{2}}{\left(x/j+1\right)^{\kappa/2}}dx\qquad\text{(changing \ensuremath{z=x/j})}\\
 & \leq\frac{c}{M^{2}j^{\kappa/2}}\int_{z=1/j}^{M/j}\frac{j^{3}z^{2}}{(z+1)^{\kappa/2}}dz\\
 & =\frac{c}{M^{2}j^{\kappa/2-3}}\int_{z=1/j}^{M/j}\frac{z^{2}}{(z+1)^{\kappa/2}}dz\leq\frac{c}{M^{2}j^{\kappa/2-3}},
\end{align*}
since by assumption $\kappa=1/(1-\alpha)>6$. Finally we get
\begin{align*}
\mathds{E}\left[S_{j}^{2}\right] & \leq c\left[\frac{1}{\left(m-j+1\right)^{2}j^{\kappa/2-3}}+\frac{1}{m^{\kappa/2-1}}+\frac{1}{(m-j+1)^{2}j^{\kappa-2}}\right]\\
 & =c\left[\frac{1}{m^{\kappa/2-1}}+\frac{1}{(m-j+1)^{2}}\left(\frac{1}{j^{\kappa/2-3}}+\frac{1}{j^{\kappa-2}}\right)\right].\\
 & \leq c\left[\frac{1}{m^{\kappa/2-1}}+\frac{1}{(m-j+1)^{2}}\frac{1}{j^{\kappa/2-3}}\right]
\end{align*}
as $\kappa/2-3\leq\kappa-2$. With our choice of sequence $\left(b_{t}\right)_{t\geq0}$,
$S_{j}$ coincides with $\mathrm{BC}_{j:m}$ in the notation of the
statement of the proposition which thus follows from the above.%

\subsection{Proof of Proposition \ref{prop:resultsforpseudo}}

First we want to prove the minorization condition \eqref{ass:marginal_minor}
for the set $C=B\left(0,M\right)\times[\underline{w},\overline{w}]$,
where $M,\underline{w},\overline{w}>0$ are given and fixed. That
is, we want to establish that there exist $\epsilon_{0}>0$
and a probability measure $\nu$ such that
\begin{align*}
P\left(\left(\theta,w\right),d\theta',dw'\right)\geq\epsilon_{0}\nu\left(d\theta',dw'\right)
\end{align*}
for all $(\theta,w)\in C$. We have
\begin{align*}
P\left(\left(\theta,w\right),d\theta',dw'\right)\geq & \overline{g}_{\theta'}\left(w'\right)\min\left\{ q\left(\theta,\theta'\right),\frac{\pi\left(\theta'\right)}{\pi\left(\theta\right)}q(\theta',\theta)\right\} \min\left\{ 1,\frac{w'}{w}\right\} d\theta'dw'\\
\geq\varepsilon_{\pi} & \mathbb{I}\left(\theta'\in B(0,M)\right)\min\left\{ q\left(\theta,\theta'\right),q(\theta',\theta)\right\}\\
&\qquad\min\left\{ \overline{g}_{\theta'}\left(w'\right),\overline{g}_{\theta'}\left(w'\right)\frac{w'}{\overline{w}}\right\} d\theta'dw',
\end{align*}
where
\[
\varepsilon_{\pi}:=\frac{\inf_{\theta:\left|\theta\right|\leq M}\pi\left(\theta'\right)}{\sup_{\theta:\left|\theta\right|\leq M}\pi\left(\theta\right)}>0,
\]
by the assumption that $\pi$ is bounded from above, and bounded away from zero
on all compact sets. Since the proposal $q$ is bounded away from
zero on compact sets we also have that $\min\left\{ q\left(\theta,\theta'\right),q(\theta',\theta)\right\} \geq\varepsilon_{q}$
for $|\theta'-\theta|<2M$ which ensures that
\[
P\left(\left(\theta,w\right),d\theta',dw'\right)\geq\varepsilon_{q}\varepsilon_{\pi}\min\left\{ \overline{g}_{\theta'}\left(w'\right),\overline{g}_{\theta'}\left(w'\right)\frac{w'}{\overline{w}}\right\} d\theta'dw'.
\]
This can be rewritten as
\[
P\left(\left(\theta,w\right),d\theta',dw'\right)\geq\varepsilon_{q}\varepsilon_{\pi}\mathbb{I}\left(\theta'\in B(0,M)\right)Z\left(\theta'\right)\widetilde{g}_{\theta'}\left(w\right)d\theta'dw'
\]
with
\[
Z\left(\theta\right):=\intop\overline{g}_{\theta}\left(w\right)\min\left\{ 1,\frac{w}{\overline{w}}\right\} dw\leq1,
\]
and
\[
\widetilde{g}_{\theta}\left(w\right)=Z{}^{-1}\left(\theta\right)\overline{g}_{\theta}\left(w\right)\min\left\{ 1,\frac{w}{\overline{w}}\right\} .
\]

Suppose now that for fixed $M,\overline{w}$ we have
\[
\inf_{\theta:\left|\theta\right|\leq M}Z\left(\theta\right)=0,
\]
which implies that there is a sequence $\theta_{n}\in B\left(0,M\right)$
such that $\lim_{n\to\infty}Z(\theta_{n})=0$. Since $B(0,M)$ is
compact we can extract a convergent subsequence $\theta_{n_{k}}\to\bar{\theta}\in B(0,M)$
such that $\lim_{k\to\infty}Z(\theta_{n_{k}})=0$. By weak convergence,
since $w\mapsto\min\left\{ 1,w/\overline{w}\right\} $ is bounded
and continuous, we also have that
\[
0=\lim_{k\to\infty}Z(\theta_{n_{k}})=\lim_{k\to\infty}\intop\overline{g}_{\theta_{n_{k}}}\left(w\right)\min\left\{ 1,\frac{w}{\overline{w}}\right\} dw=\intop\overline{g}_{\overline{\theta}}\left(w\right)\min\left\{ 1,\frac{w}{\overline{w}}\right\} dw.
\]
Since $w\mapsto\min\left\{ 1,w/\overline{w}\right\} $ is strictly
positive for $w>0$, this implies that the support of $\overline{g}_{\overline{\theta}}$
is $\left\{ 0\right\} $ which is a contradiction, since in that case
necessarily $\intop\overline{g}_{\overline{\theta}}\left(w\right)w\,dw=0\neq1$.
Therefore we conclude that for all finite $M,\overline{w}>0$,
there exists \emph{$\varepsilon_{Z}(M,\overline{w})>0$} such that
$Z\left(\theta\right)>\varepsilon_{Z}\left(M,\overline{w}\right)$,
for all $\theta\in B\left(0,M\right)$.

Therefore we obtain
\[
P\left(\left(\theta,w\right),d\theta',dw'\right)\geq\varepsilon_{Z}\varepsilon_{q}\varepsilon_{\pi}\mathbb{I}\left(\theta'\in B(0,M)\right)\widetilde{g}_{\theta'}\left(w\right)d\theta'dw',
\]
which proves the result for $\epsilon_{0}=\varepsilon_{Z}\varepsilon_{q}\varepsilon_{\pi}\mathrm{vol\{}B(0,M)\}$
and with minorising measure $\nu(d\theta',dw')=\mathcal{U}\left(\theta'\in B(0,M)\right)\widetilde{g}_{\theta'}\left(w\right)$.

Next we establish that the minorization condition \eqref{ass:joint_minor}
holds for $\bar{P}$, the coupled transition kernel defined by Algorithm \ref{alg:coupledPMMH}, and $C$ as defined above. Let the current
states be $z:=\left(\theta,w\right),\tilde{z}:=(\tilde{\theta},\tilde{w})\in C$
respectively. According to Algorithm \ref{alg:coupledPMMH} the next parameter states $\theta',\tilde{\theta}'$
will be sampled from $\mathfrak{Q}((\theta,\tilde{\theta}),\mathrm{d}\theta',\mathrm{d}\tilde{\theta}')$,
the $\gamma$-coupling of $q\left(\cdot\left|\theta\right.\right)$
and $q(\cdot|\tilde{\theta})$. This is the
maximal coupling generated by the rejection sampler described in \citet{jacob2017unbiased}. If the coupling is successful, that is $\theta'=\tilde{\theta}'$,
then the algorithm samples $w'\sim\bar{g}_{\theta'}\left(\cdot\right)$,
sets $\tilde{w}'=w'$ in which case we know by definition that $((\theta',w'),(\widetilde{\theta}',\widetilde{w}'))\in\mathcal{D}$
if the proposal $\left(\theta',w'\right)$ is accepted since the same
uniform is used in both acceptance steps. Therefore under the
coupled transition kernel $\bar{P}$ , writing $z:=\left(\theta,w\right),\tilde{z}:=\left(\tilde{\theta},\tilde{w}\right)$ and letting $\mathcal{D}_{\theta}:=\left\{ \left(\theta,\tilde{\theta}\right):\theta=\tilde{\theta}\right\} $
be the diagonal of $\Theta\times\Theta$,
we have for $\left(z,z'\right)\in C\times C$ that
\begin{align*}
\bar{P}\left(\left(z,z'\right),\mathcal{D}\right) & \geq\bar{P}\left(\left(z,z'\right),\mathcal{D}\cap\left(B(0,M)\times\mathds{R^{+}}\right)^{2}\right)\\
 & =\iint_{\mathcal{D_{\theta}}\cap B(0,M)^{2}}\mathfrak{Q}\left(\left(\theta,\tilde{\theta}\right),\mathrm{d}\theta',\mathrm{d}\tilde{\theta}'\right)\int_{\mathds{R^{+}}}\bar{g}_{\theta'}\left(w'\right)\\
 & \qquad\int_{u=0}^{1}\mathbb{I}\left[u\leq\min\left\{ 1,\frac{\pi\left(\theta'\right)}{\pi\left(\theta\right)}\frac{w'}{w}\right\} \right]\mathbb{I}\left[u\leq\min\left\{ 1,\frac{\pi\left(\theta'\right)}{\pi\left(\tilde{\theta}\right)}\frac{w'}{\tilde{w}}\right\} \right]\mathrm{d}u\mathrm{d}w'\\
 & =\iint_{\mathcal{D_{\theta}}\cap B(0,M)^{2}}\mathfrak{Q}\left(\left(\theta,\tilde{\theta}\right),\mathrm{d}\theta',\mathrm{d}\tilde{\theta}'\right)
\\
&\qquad\int_{\mathds{R^{+}}}\bar{g}_{\theta'}\left(w'\right)\int_{u=0}^{1}\mathbb{I}\left[u\leq\min\left\{ 1,\frac{\pi\left(\theta'\right)}{\pi\left(\theta\right)}\frac{w'}{w},\frac{\pi\left(\theta'\right)}{\pi\left(\tilde{\theta}\right)}\frac{w'}{\tilde{w}}\right\} \right]\mathrm{d}u\mathrm{d}w'\\
 & =\iint_{\mathcal{D_{\theta}}\cap B(0,M)^{2}}\mathfrak{Q}\left(\left(\theta,\tilde{\theta}\right),\mathrm{d}\theta',\mathrm{d}\tilde{\theta}'\right)\\
 &\qquad\int_{\mathds{R^{+}}}\bar{g}_{\theta'}\left(w'\right)\min\left\{ 1,\frac{\pi\left(\theta'\right)}{\pi\left(\theta\right)}\frac{w'}{w},\frac{\pi\left(\theta'\right)}{\pi\left(\tilde{\theta}\right)}\frac{w'}{\tilde{w}}\right\} \mathrm{d}w',
\end{align*}
where we also used the fact that the proposal is symmetric by assumption.
Continuing from the above inequality, letting $\varepsilon_{\pi},\varepsilon_{q}$
and $\varepsilon_{Z}$ be as above, we have that
\begin{align*}
\bar{P}\left(\left(z,z'\right),\mathcal{D}\right) & \geq\iint_{\mathcal{D_{\theta}}\cap B(0,M)^{2}}\mathfrak{Q}\left(\left(\theta,\tilde{\theta}\right),\mathrm{d}\theta',\mathrm{d}\tilde{\theta}'\right)\\
&\qquad\int_{\mathds{R^{+}}}\bar{g}_{\theta'}\left(w'\right)\min\left\{ 1,\varepsilon_{\pi}\frac{w'}{\overline{w}},\varepsilon_{\pi}\frac{w'}{\overline{w}}\right\} \mathrm{d}w'\\
 &\geq \iint_{\mathcal{D_{\theta}}\cap B(0,M)^{2}}\mathfrak{Q}\left(\left(\theta,\tilde{\theta}\right),\mathrm{d}\theta',\mathrm{d}\tilde{\theta}'\right)\\
 &\qquad\int_{\mathds{R^{+}}}\bar{g}_{\theta'}\left(w'\right)\min\left\{ 1,\varepsilon_{\pi}\right\} \min\left\{ 1,\frac{w'}{\overline{w}}\right\} \mathrm{d}w'\\
 & =\min\left\{ 1,\varepsilon_{\pi}\right\} \iint_{\mathcal{D_{\theta}}\cap B(0,M)^{2}}\mathfrak{Q}\left(\left(\theta,\tilde{\theta}\right),\mathrm{d}\theta',\mathrm{d}\tilde{\theta}'\right)Z\left(\theta'\right)\int_{\mathds{R^{+}}}\widetilde{g}_{\theta'}\left(w'\right)\mathrm{d}w'\\
 & \geq\varepsilon_{Z}\varepsilon_{\pi}\iint_{\mathcal{D_{\theta}}\cap B(0,M)^{2}}\mathfrak{Q}\left(\left(\theta,\tilde{\theta}\right),\mathrm{d}\theta',\mathrm{d}\tilde{\theta}'\right)\\
 &\geq \varepsilon_{Z}\varepsilon_{\pi}\int_{B(0,M)}\min\left\{ q\left(\theta'\left|\theta\right.\right),q\left(\theta'\left|\tilde{\theta}\right.\right)\right\} \mathrm{d}\theta'\\
 & \geq\varepsilon_{Z}\varepsilon_{\pi}\int_{B(0,M)}\varepsilon_{q}\mathrm{d}\theta'\\
 & =\varepsilon_{Z}\varepsilon_{\pi}\varepsilon_{q}\mathrm{vol}\left(B\left(0,M\right)\right)>0,
\end{align*}
where we used the fact that in the $\gamma-$coupling, conditionally
on the coupling succeeding, the variables are sampled from a density
proportional to the minimum of their respective densities. This establishes
that  condition \eqref{ass:joint_minor} holds with $C=B\left(0,M\right)\times[\underline{w},\overline{w}]$
for any $M,\underline{w},\overline{w}$.

Next we establish that $\bar{P}$ is $\pi_{\mathcal{D}}-$irreducible.
Let $A\subset\mathcal{D}$ such that $\pi_{\mathcal{D}}\left(A\right)>0$.
For sets $A\subset\mathcal{D}$ we will write $A^{(1)}$ for the projection
onto its first coordinate, that is if $A\subset\mathcal{D}$ then $A=A^{(1)}\times A^{(1)}$.
We need to show that for any $z,\tilde{z}\in\mathcal{Z}$ there exists
$n\geq1$ such that $\bar{P}^{n}\left(\left(z,\tilde{z}\right),A\right)>0$.
Notice that by construction if $\left(z,\tilde{z}\right)\in\mathcal{D}$
then $\bar{P}\left(\left(z,\tilde{z}\right),\mathrm{d}z',\mathrm{d}\tilde{z}'\right)=P\left(z,\mathrm{d}z'\right)\delta_{z'}\left(\mathrm{d}\tilde{z}'\right)$,
that is the chain couples automatically from the diagonal and proceeds
as the pseudo-marginal kernel $P$.
Letting $z,\tilde{z}\in\mathcal{Z}$ and $n\geq1$ we have
\begin{align*}
\bar{P}^{n+1}&\left(\left(z,\tilde{z}\right),A\right)  \geq\iint_{\mathcal{D}}\bar{P}\left(\left(z,\tilde{z}\right),\mathrm{d}z',\mathrm{d}\tilde{z}'\right)\bar{P}^{n}\left(\left(z',\tilde{z}'\right),A\right)\\
 & =\iint_{\mathcal{D_{\theta}}}\mathfrak{Q}\left(\left(\theta,\tilde{\theta}\right),\mathrm{d}\theta',\mathrm{d}\tilde{\theta}'\right)\\
 &\qquad\int_{\mathds{R^{+}}}\bar{g}_{\theta'}\left(w'\right)\min\left\{ 1,\frac{\pi\left(\theta'\right)}{\pi\left(\theta\right)}\frac{w'}{w},\frac{\pi\left(\theta'\right)}{\pi\left(\theta\right)}\frac{w'}{\tilde{w}}\right\} \mathrm{d}w'\int P^{n}\left(\left(\theta',w'\right),A\right),
\end{align*}
where we have provided a lower bound by considering the event where
the joint chain couples in the first step and then moves to the set
$A$ in $n$ steps. Continuing we have
\begin{align*}
\bar{P}^{n+1}&\left(\left(z,\tilde{z}\right),A\right)  \geq\int_{\Theta}\min\left\{ q\left(\theta,\theta'\right),q\left(\tilde{\theta},\theta'\right)\right\} \mathrm{d}\theta'\int_{\Theta}\frac{\min\left\{ q\left(\theta,\theta'\right),q\left(\tilde{\theta},\theta'\right)\right\} }{\int_{\Theta}\min\left\{ q\left(\theta,\theta'\right),q\left(\tilde{\theta},\theta'\right)\right\} \mathrm{d}\theta'}\\
 & \qquad\int_{\mathds{R^{+}}}\bar{g}_{\theta'}\left(w'\right)\min\left\{ 1,\frac{\pi\left(\theta'\right)}{\pi\left(\theta\right)}\frac{w'}{w},\frac{\pi\left(\theta'\right)}{\pi\left(\theta\right)}\frac{w'}{\tilde{w}}\right\} \mathrm{d}w'\int P^{n}\left(\left(\theta',w'\right),A^{(1)}\right)\mathrm{d}\theta'\\
 & =\int_{\Theta}\min\left\{ q\left(\theta,\theta'\right),q\left(\tilde{\theta},\theta'\right)\right\}\\
 &\qquad \int_{\mathds{R^{+}}}\bar{g}_{\theta'}\left(w'\right)\min\left\{ 1,\frac{\pi\left(\theta'\right)}{\pi\left(\theta\right)}\frac{w'}{w},\frac{\pi\left(\theta'\right)}{\pi\left(\theta\right)}\frac{w'}{\tilde{w}}\right\} \mathrm{d}w'\int P^{n}\left(\left(\theta',w'\right),A^{(1)}\right)\mathrm{d}\theta'.
\end{align*}
Therefore we have that
\begin{align*}
\sum_{n=0}^{\infty}& 2^{-(n+1)}\bar{P}^{n+1}\left(\left(z,\tilde{z}\right),A\right)  \geq\int_{\Theta}\min\left\{ q\left(\theta,\theta'\right),q\left(\tilde{\theta},\theta'\right)\right\} \\
&\qquad \int_{\mathds{R^{+}}}\bar{g}_{\theta'}\left(w'\right)\min\left\{ 1,\frac{\pi\left(\theta'\right)}{\pi\left(\theta\right)}\frac{w'}{w},\frac{\pi\left(\theta'\right)}{\pi\left(\theta\right)}\frac{w'}{\tilde{w}}\right\} \mathrm{d}w'
 \\
 &\qquad \sum_{n=0}^{\infty}2^{-(n+1)}\int P^{n}\left(\left(\theta',w'\right),A^{(1)}\right)\mathrm{d}\theta'.
\end{align*}
By Assumption \ref{assu:The-posterior-density} and \citet[Theorem 2.2][]{roberts1996geometric}
it easily follows that the exact algorithm is $\pi-$irreducible and
aperiodic. Since by assumption we have $\varrho_{\mathrm{PM}}\left(\theta,w\right)<1$,
we deduce from \citet[Theorem 1][]{andrieu2009pseudo} that the kernel
$P$ is irreducible, hence $\pi-$irreducible. This further implies
that
\[
\sum_{n=0}^{\infty}2^{-(n+1)}\int P^{n}\left(\left(\theta',w'\right),B\right)>0,
\]
for all $\left(\theta',w'\right)$ and sets $B$ such that $\pi(B)>0$
by \citet[Proposition 4.2.1][]{meyn2009}. Since by assumption $\pi\left(A^{(1)}\right)=\pi_{\mathcal{D}}(A)>0$
the integrand above will be strictly positive on a set of non-vanishing
Lebesgue measure whence $\bar{P}$ is $\pi_{\mathcal{D}}-$irreducible.
Finally to establish aperiodicity first notice that by assumption
and continuity of the measures defined by the densities $\bar{g}_{\theta}(\cdot)$,
we have that $\pi(C)>0$. Letting $\mathcal{D}_{C}:=\left\{ \left(z,\tilde{z}\right)\in\mathcal{\mathcal{D}}:z\in C\right\} $
and following the steps proving Equation (\ref{ass:pbarproperties}) we can establish
that for some $\epsilon'>0$
\[
\inf_{z\in\mathcal{D_{C}}}\bar{P}\left(z,\mathcal{D}\right)\geq\epsilon',
\]
and since $\pi(\mathcal{D_{C}})>0$ this proves the aperiodicity of $\bar{P}$.

\subsection{Proof of Proposition \ref{prop:propmoments} \label{sec:proofofmoments}}

We proceed to bound the moments of the likelihood
estimate. For $y=1$, letting
\[
\bar{\mathcal{Z}}:=\frac{\text{B}(\alpha,\beta(1+\epsilon))}{\text{B}(\alpha,\beta)}\frac{\alpha+\beta}{\alpha+\beta(1+\epsilon)}
\]
 then we have for $c'\in\mathds{R}$,
\begin{align*}
\mathds{E}_{q_{\theta}}\left[\bar{\omega}(X,1)^{c'}\right] & =\int_{[0,1]}\bar{\omega}(x,1)^{c'}\text{Beta}(x;1+\alpha,\beta(1+\epsilon))dx\\
 & =\bar{\mathcal{Z}}^{c'}\int_{[0,1]}(1-x)^{-\epsilon\beta c'}\text{Beta}(x;1+\alpha,\beta(1+\epsilon))dx\\
 & =\bar{\mathcal{Z}}^{c'}\int_{[0,1]}x^{-\epsilon\beta c'}\text{Beta}(x;\beta(1+\epsilon),1+\alpha)dx\\
 & \le\frac{\bar{\mathcal{Z}}^{c'}}{\text{B}(\beta(1+\epsilon),1+\alpha)}\int_{[0,1]}x^{\beta(1+\epsilon(1-c'))-1}dx,
\end{align*}
where the third equality exploits symmetry properties of the Beta
distribution. We wish to show that there exists $c'$ such that
\[
\sup_{\beta}\mathds{E}_{q_{\theta}}\left[\bar{\omega}(X,1)^{c'}\right]<\infty.
\]
Firstly, we note that $\sup_{\beta\in\Theta}\bar{\mathcal{Z}}<\infty$
and $\sup_{\beta\in\Theta}\text{B}(\beta(1+\epsilon),1+\alpha)<\infty$
as $\Theta$ is compact. Secondly, we see that if $c'<0$ then the
integral is finite, thereby proving the second part of the Proposition.

For the first part of the proposition consider $c'$ such that
\[
0<c'-1\le\frac{1}{\epsilon}\left(1-\frac{\delta}{\underline{\beta}}\right),
\]
for some $0<\delta<\underline{\beta}$. This implies
that $\underline{\beta}(1-\epsilon(c'-1))\ge\delta$ and as a result
we have

\begin{align*}
\frac{\bar{\mathcal{Z}}^{c'}}{\text{B}(\beta(1+\epsilon),1+\alpha)}\int_{[0,1]}x^{\beta(1+\epsilon(1-c'))-1}dx & \le\frac{\bar{\mathcal{Z}}^{c'}}{\text{B}(\beta(1+\epsilon),1+\alpha)}\int_{[0,1]}x^{\delta-1}dx<\infty.
\end{align*}
Furthermore, for fixed $c'>1$ we see that $\underline{\beta}(1-\epsilon(c'-1))\ge\delta$
is also equivalent to requiring that $\epsilon\le\frac{1-\frac{\delta}{\underline{\beta}}}{c'-1}$
which can be satisfied for $\epsilon$ sufficiently small enough thereby
proving the final part of the proposition. 

Repeating the above argument
for $y=0$ we have that for $\bar{\mathcal{Z}}':=\frac{\text{B}(\alpha(1+\epsilon),\beta)}{\text{B}(\alpha,\beta)}\frac{\alpha+\beta}{\alpha(1+\epsilon)+\beta}$,
\begin{align*}
\mathds{E}_{q_{\theta}}\left[\bar{\omega}(X,0)^{c'}\right] & =\bar{\mathcal{Z}}'^{c'}\int_{[0,1]}x^{-\epsilon\alpha c'}\text{Beta}(x;\alpha(1+\epsilon),1+\beta)dx\\
 & \le\frac{\bar{\mathcal{Z}}'^{c'}}{\text{B}(\alpha(1+\epsilon),1+\beta)}\int_{[0,1]}x^{\delta'-1}dx<\infty,
\end{align*}
for $0<\delta'<\alpha$ .

\subsection{Description of referendum survey data}

\label{subsec:surveydescription}

We use data from the 13 wave internet survey study (as of June 2018)
\citep{https://doi.org/10.15127/1.293723} comprising 68,625 respondents
in total, with the number of respondents varying between waves. We
first subset the data into those in the four annual waves 1, 4, 7
and 11 occurring between February and May of 2014-2017. Of these 7,729
answered either `Stay/remain in the EU' or `Leave the EU' to the question
`If you do vote in the referendum on Britain\textquoteright s membership
of the European Union, how do you think you will vote?' in each wave.
We filter out those that answered `Don't know' to the question `How
do you think the general economic situation in this country has changed
over the last 12 months?' reducing the sample by 5 respondents. Finally,
we perform inference only on waves 1, 4, and 7, the waves prior to
the EU referendum on $23^{rd}$ June 2016. For simplicity, we do not
take into account respondent weighting.

\end{document}